\newcommand{\ket}[1]{\ensuremath{|#1\rangle}}
\newcommand{\bra}[1]{\ensuremath{\langle #1|}}
\newcommand{\proj}[1]{\ket{#1}\bra{#1}}
\newcommand{\be}{\begin{equation}}
\newcommand{\ee}{\end{equation}}
\newcommand{\ba}{\begin{eqnarray}}
\newcommand{\ea}{\end{eqnarray}}
\newcommand{\norm}[1]{\left\|#1\right\|}
\newcommand{\id}{\mathbb{I}}
\newcommand{\F}[1]{\mathcal{F}_{\rm max}(#1)}
\newtheorem{theorem}{Theorem}
\newtheorem{corollary}[theorem]{Corollary}
\newtheorem{alemma}{Lemma}[section]
\newtheorem{aproposition}[alemma]{Proposition}
\newtheorem{afact}[alemma]{Fact}
\newtheorem{atheorem}[alemma]{Theorem}
\newtheorem{adefinition}[alemma]{Definition}
\newtheorem{question}{Question}
\newcommand{\neweta}{\wt{\gamma}}
\newcommand{\wt}[1]{\widetilde{#1}}
\newcommand{\mE}{\mathcal{E}}
\definecolor{nred}{rgb}{0.9,0.1,0.1}
\definecolor{nblack}{rgb}{0,0,0}
\definecolor{nblue}{rgb}{0.2,0.2,0.8}
\definecolor{ngreen}{rgb}{0.2,0.6,0.2}
\definecolor{ublue}{rgb}{0,0,0.5}
\definecolor{OliveGreen}{cmyk}{0.64,0,0.95,0.40}
\definecolor{pur}{rgb}{0.75,0,0.75}
\definecolor{nngrn}{rgb}{0,0.5,0.5}
\newcommand{\blu}{\color{nblue}}
\begin{document}
\title{Communication, Dynamical Resource Theory, and Thermodynamics}

\author{Chung-Yun Hsieh}
\email{chung-yun.hsieh@icfo.eu}
\affiliation{ICFO - Institut de Ci\`encies Fot\`oniques, The Barcelona Institute of Science and Technology, 08860 Castelldefels, Spain}

\date{\today}

\begin{abstract} 
Recently, new insights have been obtained by jointly studying communication and resource theory.
This interplay consequently serves as a potential platform for interdisciplinary studies.
To continue this line, we analyze the role of dynamical resources in a communication setup, and further apply our analysis to thermodynamics.
To start with, we study classical communication scenarios constrained by a given resource, in the sense that the information processing channel is unable to supply additional amounts of the resource.
We show that the one-shot classical capacity is upper bounded by resource preservability, which is a measure of the ability to preserve the resource. 
A lower bound can be further obtained when the resource is asymmetry.
As an application, unexpectedly, under a recently-studied thermalization model, we found that the smallest bath size needed to thermalize all outputs of a Gibbs-preserving coherence-annihilating channel upper bounds its one-shot classical capacity.
When the channel is coherence non-generating, the upper bound is given by a sum of coherence preservability and the bath size of the channel's incoherent version.
In this sense, bath sizes can be interpreted as the thermodynamic cost of transmitting classical information.
This finding provides a dynamical analogue of Landauer's principle, and therefore bridges classical communication and thermodynamics.
As another implication, we show that, in bipartite settings, classically correlated local baths can admit classical communication even when both local systems are completely thermalized.
Hence, thermalizations can transmit information by accessing only classical correlation as a resource.
Our results demonstrate interdisciplinary applications enabled by dynamical resource theory.
\end{abstract}

\maketitle

\section{Introduction}\label{Sec:Intro}
{\em Resource} is a concept widely used in the study of physics: It can be an effect or a phenomenon, helping us achieve advantages that can never occur in its absence.
A quantitative understanding of different resources is thus vital for further applications.
For this reason, an approach called {\em resource theory} comes, aiming to provide a general strategy to depict different resources~\cite{RT-RMP}.

A resource theory can be interpreted as a triplet $(R, \mathcal{F}_R, \mathcal{O}_R)$, consisting of the resource itself $R$ (e.g., entanglement~\cite{Ent-RMP}), the set of quantities without the resource $\mathcal{F}_R$ (called {\em free quantities}; e.g., separable states), and the set of physical processes that will not generate the resource $\mathcal{O}_R$ (called {\em free operations} of $R$; e.g., local operation and classical communication channels~\cite{QCI-book}).
Every $\mathcal{E}\in\mathcal{O}_R$ must satisfy $\mathcal{E}(\eta)\in\mathcal{F}_R\;\forall\eta\in\mathcal{F}_R$, which is sometimes called the golden rule of resource theories~\cite{RT-RMP}.
Operations satisfying this condition are called {\em resource non-generating} for $R$, which form the largest possible set of free operations of $R$.
A resource theory allows ones to quantify the resource via a {\em resource monotone}, $Q_R$, which is a non-negative-valued function satisfying two conditions:
(i) $Q_R(q) = 0$ if $q\in\mathcal{F}_R$; and
(ii) $Q_R[\mathcal{E}(q)]\le Q_R(q)\;\forall q\;\&\;\forall\mathcal{E}\in\mathcal{O}_R$.
This is a ``ruler'' attributing numbers to different resource contents.

Adopting this general approach, one can study specific resources such as (but not limited to) entanglement~\cite{Ent-RMP,Vedral1997,Vidal2002}, coherence ~\cite{Coherence-RMP,Baumgratz2014}, nonlocality~\cite{Bell,Bell-RMP,Wolfe2019}, steering~\cite{Wiseman2007,Jones2007,steering-review,Skrzypczyk2014,Piani2015,Gallego2015,RMP-steering}, asymmetry~\cite{Gour2008,Marvian2016,Takagi2019-4}, and athermality~\cite{Brandao2013,Brandao2015,Horodecki2013,Lostaglio2018,Serafini2019,Narasimhachar2019}.  
Together with various features of general resource theories~\cite{Horodecki2013-2,Brandao2015-2,del_Rio2015,Coecke2016,Gour2017,Anshu2018,Regula2018,Bu2018,Liu2017,Lami2018,RT-RMP,Takagi2019,Takagi2019-2,Liu2019,Fang2019,Korzekwa2019,Regula2020}, one is able to concretely picture the originally vague notion of resources for {\em states} -- while the unique roles of dynamical resources have not been noticed until recently. 
Resource theories of {\em channels}~\cite{footnote0} have therefore drawn much attention lately and been studied intensively~\cite{Hsieh2017,Kuo2018,Pirandola2017,Dana2017,Bu2018,Wilde2018,Diaz2018,Zhuang2018,Gour2019-3,Theurer2019,Seddon2019,Rosset2018,LiuWinter2019,LiuYuan2019, Gour2019,Gour2019-2,Bauml2019,Wang2019,Berk2019,Takagi2019-3,Hsieh2020-1,Saxena2019,Zhang2020}. 
Unlike the state resources, which are static, channel resources are dynamical properties, thereby providing links to dynamical problems such as communication~\cite{Takagi2019-3} and resource preservation~\cite{Hsieh2020-1,Saxena2019}.

Very recently, the interplay between resource theories and classical communication has been investigated~\cite{Korzekwa2019,Takagi2019-3} (see also Ref.~\cite{Kristjansson2020}), successfully providing new insights and widening our understanding.
For instance, a neat proof of the strong converse property of non-signaling assisted classical capacity has been established~\cite{Takagi2019-3}.
Also, amounts of classical messages encodable into the resource content of states has been estimated, and different physical meanings can be concluded by considering specific resources~\cite{Korzekwa2019}.
Hence, the interplay between resource theory and classical communication is a potential platform for interdisciplinary studies.
To continue this research line, it is thus necessary to understand communication setups constrained by different static resources.
A general treatment on this can clarify the role of static resources in communication and provide potential applications in different physical settings.
This motivates us to ask:
\begin{center}
{\em
How do resource constraints affect classical communications?
}
\end{center}
In this work, we consider classical communication scenarios where the information processing channel is forbidden to supply additional resources, thereby being a free operation (there are some subtleties about this setting, and we refer the reader to Appendix~\ref{App:NotionCostless} for a detailed discussion).
The basic setup will be given in Sec.~\ref{Sec:Formulation}.
In Sec.~\ref{Sec:Capacity}, we show that the corresponding one-shot classical capacity is upper bounded by the ability to preserve the resource, which is called resource preservability \cite{Hsieh2020-1}, plus a resourceless contribution term.
Furthermore, when the underlying resource is asymmetry, a lower bound can be obtained.
As an application, we use our approach to bridge classical communication and thermodynamics in Sec.~\ref{Sec:Thermo}:
Under the thermalization model introduced in Ref.~\cite{Sparaciari2019}, the one-shot classical capacity of a Gibbs-preserving coherence non-generating channel is upper bounded by its coherence preservability plus the smallest bath size needed to thermalize all outputs of its incoherent version~\cite{Sparaciari2019,Hsieh2020-1}.
A direct physical message is that, under this setting, transmitting classical information through a coherence-annihilating channel necessarily needs a large enough bath size as the thermodynamic cost. 
This provides a dynamical analogue of the famous Landauer's principle~\cite{Landauer1961} and 
illustrates how dynamical resource theory can connect seemingly different fields, inspiring us to further investigate the interplay of classical communication and thermalization.
To this end, in Sec.~\ref{Sec:MaintainingMaxEnt} we first study a tool related to the ability to simultaneously maintain orthogonality and maximal entanglement.
Using this concept, in Sec.~\ref{Sec:CCThermalization} we found that it is possible for locally performed thermalization channels to globally transmit classical information when the local baths are correlated classically through pre-shared randomness.
This result reveals the huge difference between local and global dynamics, providing the recent discovery in Ref.~\cite{Hsieh2020-2} a generalization and application to communication.
Finally, we conclude in Sec.~\ref{Sec:Conclusion}.

\section{Formulation}\label{Sec:Formulation}
To process {\em classical information} depicted by a finite sequence of integers $\{m\}_{m=0}^{M-1}$ through a quantum channel $\mathcal{N}$, one needs to encode them into a set of quantum states $\{\rho_m\}_{m=0}^{M-1}$; likewise, a decoding is needed to extract the information from outputs of $\mathcal{N}$, which can be done by a {\em positive operator-valued measurement} (POVM) $\{E_m\}_{m=0}^{M-1}$~\cite{QCI-book}.
They can be written jointly as $\Theta_M = (\{\rho\}_{m=0}^{M-1},\{E_m\}_{m=0}^{M-1})$, called an {\em $M$-code},
which depicts the transformation
$
\rho_m\mapsto {\rm tr}\left[E_m\mathcal{N}(\rho_m)\right]
$
for each $m$.
To see how faithfully one can extract the input messages $\{m\}_{m=0}^{M-1}$, the {\em one-shot classical capacity with error $\epsilon$}~\cite{Wang2012,Datta2013} of $\mathcal{N}$ can be defined as a measure:
\begin{align}\label{Eq:Capacity}
C_{\rm (1)}^\epsilon(\mathcal{N})\coloneqq\max\left\{\log_2{M}\;|\;\exists\Theta_M,\;p_s(\Theta_M,\mathcal{N})\ge1-\epsilon\right\},
\end{align}
where the {\em average success probability} is given by
\begin{align}\label{Eq:SuccessProbability}
p_s(\Theta_M,\mathcal{N})\coloneqq\frac{1}{M}\sum_{m=0}^{M-1}{\rm tr}\left[E_m\mathcal{N}(\rho_m)\right].
\end{align}

Before introducing the main results, we briefly review relevant ingredients of resource preservability~\cite{Hsieh2020-1} (or simply {\em $R$-preservability} when the state resource $R$ is given), which is a dynamical resource depicting the ability to preserve $R$.
To start with, we impose basic assumptions on a given state resource theory $(R,\mathcal{F}_R,\mathcal{O}_R)$:
\begin{enumerate}
\item\label{Postulate:Identity} Identity and partial trace are both free operations.
\item\label{Postulate:Composition} $\mathcal{O}_R$ is closed under tensor products, convex sums, and compositions ~\cite{footnote:Markovianity}.
\end{enumerate}
These assumptions are strict enough for an analytically feasible study and still general enough to be shared by many known resource theories (see Appendix~\ref{App:Basic-Assumptions} for a further discussion).
For a given state resource theory $(R,\mathcal{F}_R,\mathcal{O}_R)$, the induced $R$-preservability theory is a channel resource theory written as ($R$-preservability, $\mathcal{O}_R^N$, $\mathbb{F}_R$).
It is defined on all channels in $\mathcal{O}_R$.
In this channel resource theory, the free quantities are members of the set $\mathcal{O}_R^N$ called {\em resource annihilating channels}~\cite{Hsieh2020-1,footnote:R-DestroyingMap}:
\begin{align}
\mathcal{O}_R^N\coloneqq\{\Lambda\in\mathcal{O}_R\;|\;\Lambda(\rho)\in\mathcal{F}_R\;\forall\rho\}.
\end{align}
They are channels in $\mathcal{O}_R$ which can only output free states.
A special class of resource annihilating channels are those who cannot output any resourceful state even assisted by ancillary resource annihilating channels; specifically, no $R$-preservability can be activated~\cite{Palazuelos2012,Cavalcanti2013,Hsieh2016,Quintino2016,Hsieh2020-1,Zhang2020}. 
Such channels are called {\em absolutely resource annihilating channels}~\cite{Hsieh2020-1}, which are elements of the set
$
\wt{\mathcal{O}}_R^N\coloneqq\{\wt{\Lambda}\in\mathcal{O}_R^N\;|\;\wt{\Lambda}\otimes\Lambda\in\mathcal{O}_R^N\;\forall\Lambda\in\mathcal{O}_R^N\}.
$
Free operations of $R$-preservability, which are collectively denoted by the set $\mathbb{F}_R$, are {\em super-channels}~\cite{Chiribella2008,Chiribella2008-2} given by~\cite{Hsieh2020-1}
$
\mE\mapsto\Lambda_+\circ(\mE\otimes\wt{\Lambda})\circ\Lambda_-
$
with $\Lambda_+,\Lambda_-\in\mathcal{O}_R$ and $\wt{\Lambda}\in\wt{\mathcal{O}}_R^N$.
Finally, to quantify $R$-preservability, consider a contractive generalized distance measure $D$ on quantum states; that is, it is a real-valued function satisfying (i) $D(\rho,\sigma)\ge0$ and equality holds if and only if $\rho=\sigma$, and (ii) (data-processing inequality) $D[\mE(\rho),\mE(\sigma)]\le D(\rho,\sigma)$ for all $\rho,\sigma$ and channels $\mE$.
The following $R$-preservability monotone induced by $D$~\cite{Hsieh2020-1} will be used in this work: 
\begin{align}\label{Eq:P_D}
P_{D}(\mathcal{E})\coloneqq\inf_{\Lambda\in\mathcal{O}_R^N}D^R(\mathcal{E},\Lambda),
\end{align}
where 
$D^R(\mathcal{E}_{\rm S},\Lambda_{\rm S})\coloneqq\sup_{\wt{\Lambda}_{\rm A},\rho_{\rm SA}} D[(\mathcal{E}_{\rm S}\otimes\widetilde{\Lambda}_{\rm A})(\rho_{\rm SA}),(\Lambda_{\rm S}\otimes\widetilde{\Lambda}_{\rm A})(\rho_{\rm SA})]$
maximizes over every possible ancillary system ${\rm A}$, joint input $\rho_{\rm SA}$, and absolutely resource annihilating channel $\wt{\Lambda}_{\rm SA}$.
Geometrically, Eq.~\eqref{Eq:P_D} can be understood as a distance between $\mathcal{E}$ and the set $\mathcal{O}_R^N$ that is adjusted by absolutely resource annihilating channels.
With Assumptions~\ref{Postulate:Identity} and~\ref{Postulate:Composition}, in Appendix~\ref{App:Basic-Assumptions} we show that $P_D$ is indeed a monotone~\cite{Hsieh2020-1}, in the sense that it is a non-negative-valued function such that $P_D(\mE) = 0$ if $\mE\in\mathcal{O}_R^N$~\cite{footnote:FaithfulnessRemark}, and $P_D[\mathfrak{F}(\mE)]\le P_D(\mE)$ for every channel $\mE$ and $\mathfrak{F}\in\mathbb{F}_R$.
Note that, in this work, we only ask $R$-preservability monotones to satisfy these two conditions, which are the core features of a monotone.
Further properties, such as Eq.~(7) in Ref.~\cite{Hsieh2020-1}, will need additional assumptions, and we leave the details in Appendix~\ref{App:Basic-Assumptions}.
Finally, the distance measure that will be mainly used in this work is the {\em max-relative entropy}~\cite{Datta2009} for states defined as (and, conventionally, we adopt $\inf\emptyset = \infty$)
\begin{align}\label{Eq:Def-D_max}
D_{\rm max}(\rho\|\sigma)\coloneqq \log_2\inf\{\lambda\ge0\,|\,\rho\le\lambda\sigma\}.
\end{align}
Hence, $P_{D_{\rm max}}(\mathcal{N})$ is the minimal amount of noise one needs to add to turn $\mathcal{N}$ into resource annihilating [see also Eq.~\eqref{Eq:AlternativeFormDmaxR}].
For a given error $\kappa>0$, we define $\mathcal{O}_R^N(\kappa;\mathcal{N})\coloneqq\{\Lambda\in\mathcal{O}_R^N\,|\,|D_{\rm max}^R(\mathcal{N}\|\Lambda) - P_{D_{\rm max}}(\mathcal{N})|\le\kappa\}$.
It contains resource annihilating channels achievable by adding the smallest amount of noise to $\mathcal{N}$, up to the error $\kappa$.
We call them {\em resourceless versions} of $\mathcal{N}$ up to the error $\kappa$, and use the notation $\Lambda^\mathcal{N}\in\mathcal{O}_R^N(\kappa;\mathcal{N})$ to emphasize their dependence on $\mathcal{N}$.

\section{Bounds On Classical Capacity}\label{Sec:Capacity}
To introduce the first result, define
\begin{align}\label{Eq:GammaQuantity}
\Gamma_\kappa(\mathcal{N})\coloneqq\log_2\inf_{\Lambda^\mathcal{N}\in\mathcal{O}_R^N(\kappa;\mathcal{N})}\sup_{\Theta_M}\sum_{m=0}^{M-1}{\rm tr}\left[E_m\Lambda^\mathcal{N}(\rho_m)\right],
\end{align}
where $\sup_{\Theta_M}$ maximizes over every $M\in\mathbb{N}$ and $M$-code $\Theta_M$.
$2^{\Gamma_\kappa(\mathcal{N})}$ tells us the highest number of discriminable states through every resourceless version of $\mathcal{N}$, up to the error $\kappa$.
We also consider $P_{D}^\delta(\mathcal{E})\coloneqq\inf_{\norm{\mE - \mE'}_\diamond\le2\delta}P_{D}(\mathcal{E}')$ and $\Gamma_\kappa^{\delta}(\mathcal{E})\coloneqq\sup_{\norm{\mE - \mE'}_\diamond\le2\delta}\Gamma_\kappa^{\delta}(\mathcal{E}')$, which smooth the optimizations over all channels $\mE'$ closed to $\mE$.
Also, $\norm{\mE}_\diamond\coloneqq\sup_{{\rm A},\rho_{\rm SA}}\norm{(\mE\otimes\mathcal{I}_{\rm A})(\rho_{\rm SA})}_1$ is the {\em diamond norm}.
Combining Refs.~\cite{Takagi2019-3,Hsieh2020-1}, in Appendix~\ref{App:Proof-Result:Upper-Bound} we prove the following upper bound:
\begin{theorem}\label{Result:Upper-Bound}
Given $\epsilon,\delta\ge0\;\&\;0<\kappa<1$ satisfying $\epsilon+\delta<1$. 
Then for every $\mathcal{N}\in\mathcal{O}_R$ we have
\begin{align}\label{Eq:Result:CCUpper-Bound}
C_{\rm (1)}^\epsilon(\mathcal{N})\le P_{D_{\rm max}}^\delta(\mathcal{N})  + {\Gamma_\kappa^{\delta}(\mathcal{N})} + \log_2\frac{2^\kappa}{1-\epsilon-\delta}.
\end{align}
\end{theorem}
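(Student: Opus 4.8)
The plan is to combine the distinguishability-based capacity bound of Ref.~\cite{Takagi2019-3} with the max-relative-entropy domination underlying $R$-preservability~\cite{Hsieh2020-1}, and then install the two smoothings by a perturbation argument. First I would fix an optimal $M$-code $\Theta_M=(\{\rho_m\},\{E_m\})$ realizing $C_{\rm (1)}^\epsilon(\mathcal{N})=\log_2 M$, so that $p_s(\Theta_M,\mathcal{N})\ge 1-\epsilon$. The first step is to show that proximity in diamond norm barely degrades the success probability: for any channel $\mathcal{N}'$ with $\norm{\mathcal{N}-\mathcal{N}'}_\diamond\le 2\delta$, writing $X_m=\mathcal{N}(\rho_m)-\mathcal{N}'(\rho_m)$, trace preservation gives ${\rm tr}(X_m)=0$, hence ${\rm tr}[E_mX_m]\le{\rm tr}(X_m)_+=\tfrac12\norm{X_m}_1\le\tfrac12\norm{\mathcal{N}-\mathcal{N}'}_\diamond\le\delta$ because $0\le E_m\le\id$. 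Averaging over $m$ yields $p_s(\Theta_M,\mathcal{N}')\ge p_s(\Theta_M,\mathcal{N})-\delta\ge 1-\epsilon-\delta$; using ${\rm tr}(X_m)=0$ is exactly what turns the naive $2\delta$ into the sharp $\delta$ that appears in the denominator of the bound.

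The second step extracts a single-shot operator domination from the regularized quantity $D_{\rm max}^R$. For any $\Lambda\in\mathcal{O}_R^N(\kappa;\mathcal{N}')$, specializing the supremum defining $D_{\rm max}^R(\mathcal{N}'\|\Lambda)$ to the trivial ancilla (licensed by Assumption~\ref{Postulate:Identity}) gives $D_{\rm max}(\mathcal{N}'(\rho)\|\Lambda(\rho))\le D_{\rm max}^R(\mathcal{N}'\|\Lambda)$ for every $\rho$, which by Eq.~\eqref{Eq:Def-D_max} is the operator inequality $\mathcal{N}'(\rho)\le 2^{D_{\rm max}^R(\mathcal{N}'\|\Lambda)}\Lambda(\rho)$. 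Since each $E_m\ge0$, this transfers to ${\rm tr}[E_m\mathcal{N}'(\rho_m)]\le 2^{D_{\rm max}^R(\mathcal{N}'\|\Lambda)}\,{\rm tr}[E_m\Lambda(\rho_m)]$. Summing over $m$ and invoking the first step,
\begin{align}
M(1-\epsilon-\delta)\le 2^{D_{\rm max}^R(\mathcal{N}'\|\Lambda)}\sup_{\Theta_{M'}}\sum_{m=0}^{M'-1}{\rm tr}[E_m\Lambda(\rho_m)].
\end{align}
Choosing $\Lambda$ to approach the infimum defining $\Gamma_\kappa(\mathcal{N}')$ makes the supremum equal $2^{\Gamma_\kappa(\mathcal{N}')}$, while membership in $\mathcal{O}_R^N(\kappa;\mathcal{N}')$ forces $D_{\rm max}^R(\mathcal{N}'\|\Lambda)\le P_{D_{\rm max}}(\mathcal{N}')+\kappa$; controlling both factors by one and the same $\Lambda$ is the crux, and it works precisely because $\Gamma_\kappa$ optimizes over the same error ball $\mathcal{O}_R^N(\kappa;\cdot)$ that defines the resourceless versions. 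Taking logarithms gives $\log_2 M\le P_{D_{\rm max}}(\mathcal{N}')+\Gamma_\kappa(\mathcal{N}')+\log_2\frac{2^\kappa}{1-\epsilon-\delta}$ for every admissible $\mathcal{N}'$ in the $2\delta$-ball.

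Finally I would install the smoothing. Since the last inequality holds for all such $\mathcal{N}'$, I would pick $\mathcal{N}'$ (approximately) attaining the infimum in $P_{D_{\rm max}}^\delta(\mathcal{N})$; for that choice $P_{D_{\rm max}}(\mathcal{N}')=P_{D_{\rm max}}^\delta(\mathcal{N})$ while $\Gamma_\kappa(\mathcal{N}')\le\sup_{\norm{\mathcal{N}-\mathcal{N}'}_\diamond\le2\delta}\Gamma_\kappa(\mathcal{N}')=\Gamma_\kappa^\delta(\mathcal{N})$, which yields Eq.~\eqref{Eq:Result:CCUpper-Bound} after sending the approximation parameters to zero. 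The main obstacle is conceptual rather than computational: it is ensuring that the opposite smoothing directions of $P_{D_{\rm max}}^\delta$ (an infimum) and $\Gamma_\kappa^\delta$ (a supremum) are compatible, i.e.\ that evaluating both monotones at the single $P$-minimizing channel still upper-bounds the left side; this is what lets the two smoothings coexist in one clean inequality. Secondary care is needed to justify the finiteness of $\sup_{\Theta_{M'}}\sum_m{\rm tr}[E_m\Lambda(\rho_m)]$ for resource-annihilating $\Lambda$, which bounds the discriminability achievable through a resourceless channel.
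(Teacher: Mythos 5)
Your proposal is correct and follows essentially the same route as the paper's proof: the diamond-norm smoothing estimate $p_s(\Theta_M,\mathcal{N}')\ge p_s(\Theta_M,\mathcal{N})-\delta$, the operator domination $\mathcal{N}'(\rho)\le 2^{D^R_{\rm max}(\mathcal{N}'\|\Lambda)}\Lambda(\rho)$ (which is exactly the positivity of the paper's map $\mathcal{P}=2^{D^R_{\rm max}}\Lambda^{\mathcal{N}'}-\mathcal{N}'$ from its Fact on $D^R_{\rm max}$), the bound $D^R_{\rm max}\le P_{D_{\rm max}}(\mathcal{N}')+\kappa$ combined with the infimum over $\Lambda\in\mathcal{O}_R^N(\kappa;\mathcal{N}')$ to produce $\Gamma_\kappa(\mathcal{N}')$, and finally evaluating $P$ at its (approximate) minimizer in the $2\delta$-ball while bounding $\Gamma_\kappa$ there by its supremum. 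The resolution you give for the "opposite smoothing directions" concern is precisely how the paper handles it, so no gap remains.
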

This upper bound Eq.~\eqref{Eq:Result:CCUpper-Bound} can be interpreted as follows: It is the highest amount of carriable classical information by every resourceless version of $\mathcal{N}$, i.e., ${\Gamma_\kappa^{\delta}}(\mathcal{N})$, plus the contribution from the ability of $\mathcal{N}$ to preserve $R$, i.e., $P_{D_{\rm max}}^\delta(\mathcal{N})$.
This also suggests that $C_{\rm (1)}^\epsilon(\mathcal{N}) - {\Gamma_\kappa^{\delta}(\mathcal{N})}$ characterizes the resource advantage, since it estimates the amount of transmissible classical information via the ability of $\mathcal{N}$ to preserve $R$.
To see the tightness of Eq.~\eqref{Eq:Result:CCUpper-Bound} (up to error terms containing $\epsilon,\delta$), the inequality is saturated by every state preparation channel outputting a fixed free state $(\cdot)\mapsto\eta$ with $\eta\in\mathcal{F}_R$.
There also exist examples attending the inequality with non-zero classical capacity.
For instance, in a $d$-dimensional system, when $R$ is coherence and $\mathcal{N}$ is the dephasing channel, i.e., $(\cdot)\mapsto\sum_{i=1}^d\proj{i}\cdot\proj{i}$, we have both side as $\log_2d$.

As the last remark on Eq.~\eqref{Eq:Result:CCUpper-Bound}, when the optimal amount of classical information can be encoded into free states, the optimal capacity should intuitively be achievable by channels with zero $R$-preservability, and any general result should respect this fact.
Hence, resource preservability monotone cannot upper bound classical capacity solely, and Theorem~\ref{Result:Upper-Bound} is consistent with this expectation due to the term $\Gamma_\kappa^\delta(\mathcal{N})$. 
However, such resourceless advantages no longer exist when more constraints are made for specific purposes (e.g., Sec.~\ref{Sec:MaintainingMaxEnt}).

Theorem~\ref{Result:Upper-Bound} can link different physical properties to classical communication, which is illustrated by the following result.
Let $P_{D|R}$ denote Eq.~\eqref{Eq:P_D} with the state resource $R$ and
write $R=\gamma$ when the state resource is the athermality induced by the thermal state $\gamma$ (we postpone its formal definition to Sec.~\ref{Sec:Thermo}).
Then in Appendix~\ref{App:Proof-Result:Upper-Bound} we show that:
\begin{corollary}\label{Coro:CCUpper-Bound}
Given $0\le\epsilon<1\;\&\;0<\kappa<1$ and a full-rank thermal state $\gamma$.
For $\mathcal{N}\in\mathcal{O}_R$ that is also Gibbs-preserving and every $\Lambda^\mathcal{N}\in\mathcal{O}_R^N(\kappa;\mathcal{N})$, we have
\begin{align}
C_{\rm (1)}^\epsilon(\mathcal{N})\le P_{D_{\rm max}|R}(\mathcal{N})  + P_{D_{\rm max}|\gamma}\left(\Lambda^\mathcal{N}\right) + \log_2\frac{2^\kappa}{1-\epsilon}.
\end{align}
\end{corollary}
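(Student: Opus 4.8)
The plan is to obtain the corollary as a specialization of Theorem~\ref{Result:Upper-Bound} followed by a single operator-inequality estimate that controls the quantity $\Gamma_\kappa$ by an athermality-preservability. First I would set $\delta=0$ in Theorem~\ref{Result:Upper-Bound}. Since the smoothing ball $\{\mathcal{E}':\norm{\mathcal{N}-\mathcal{E}'}_\diamond\le 2\delta\}$ collapses to $\{\mathcal{N}\}$, this gives $P^0_{D_{\rm max}}(\mathcal{N})=P_{D_{\rm max}}(\mathcal{N})$ and $\Gamma_\kappa^0(\mathcal{N})=\Gamma_\kappa(\mathcal{N})$, while the hypothesis $\epsilon+\delta<1$ reduces to $\epsilon<1$ and the last term becomes $\log_2\frac{2^\kappa}{1-\epsilon}$. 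Writing the $R$-preservability monotone as $P_{D_{\rm max}|R}$ to emphasize the underlying state resource, Theorem~\ref{Result:Upper-Bound} reads $C^\epsilon_{\rm (1)}(\mathcal{N})\le P_{D_{\rm max}|R}(\mathcal{N})+\Gamma_\kappa(\mathcal{N})+\log_2\frac{2^\kappa}{1-\epsilon}$. It therefore suffices to prove the single bound $\Gamma_\kappa(\mathcal{N})\le P_{D_{\rm max}|\gamma}(\Lambda^\mathcal{N})$ for every fixed $\Lambda^\mathcal{N}\in\mathcal{O}_R^N(\kappa;\mathcal{N})$.

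For the key step I would first note that in the athermality resource theory the only free state is the Gibbs state $\gamma$, so the resource-annihilating set is the singleton $\{\Lambda_\gamma\}$ with $\Lambda_\gamma(\cdot)=\gamma$; hence by Eq.~\eqref{Eq:P_D} the infimum is trivial and $P_{D_{\rm max}|\gamma}(\Lambda^\mathcal{N})=D^\gamma_{\rm max}(\Lambda^\mathcal{N},\Lambda_\gamma)$. Restricting the supremum defining $D^\gamma_{\rm max}$ to a trivial ancilla already yields $P_{D_{\rm max}|\gamma}(\Lambda^\mathcal{N})\ge\sup_\rho D_{\rm max}(\Lambda^\mathcal{N}(\rho)\|\gamma)$. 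Unwinding the definition Eq.~\eqref{Eq:Def-D_max} of the max-relative entropy, this translates into the uniform operator inequality $\Lambda^\mathcal{N}(\rho)\le 2^{P_{D_{\rm max}|\gamma}(\Lambda^\mathcal{N})}\gamma$, valid for every input $\rho$; here the assumption that $\gamma$ is full-rank guarantees the right-hand side is finite.

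With this domination in hand the bound on $\Gamma_\kappa$ is immediate. For any $M$-code $\Theta_M=(\{\rho_m\},\{E_m\})$, positivity of the POVM elements together with the operator inequality gives $\sum_{m}{\rm tr}\left[E_m\Lambda^\mathcal{N}(\rho_m)\right]\le 2^{P_{D_{\rm max}|\gamma}(\Lambda^\mathcal{N})}\sum_m{\rm tr}\left[E_m\gamma\right]=2^{P_{D_{\rm max}|\gamma}(\Lambda^\mathcal{N})}$, where the final equality uses the completeness relation $\sum_m E_m=\id$ and ${\rm tr}\left[\gamma\right]=1$. Taking the supremum over all codes and then $\log_2$ bounds the inner quantity of Eq.~\eqref{Eq:GammaQuantity} by $P_{D_{\rm max}|\gamma}(\Lambda^\mathcal{N})$; since $\Gamma_\kappa(\mathcal{N})$ is the $\log_2$ of an infimum over $\mathcal{O}_R^N(\kappa;\mathcal{N})$, the same bound holds for $\Gamma_\kappa(\mathcal{N})$ evaluated at the chosen $\Lambda^\mathcal{N}$. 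Substituting this into the specialized Theorem~\ref{Result:Upper-Bound} produces the stated inequality.

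I expect the only delicate point to be bookkeeping rather than analysis. One must verify that the resourceless version $\Lambda^\mathcal{N}$ legitimately carries an athermality-preservability value, i.e.\ that it sits inside the athermality free operations $\mathcal{O}_\gamma$ so that $P_{D_{\rm max}|\gamma}(\Lambda^\mathcal{N})$ is a bona fide preservability monotone; this is exactly where the Gibbs-preserving hypothesis on $\mathcal{N}$ enters, and one should confirm that adding noise to a Gibbs-preserving $\mathcal{N}$ can realize a resourceless version that is still Gibbs-preserving. One must also track the error parameter $\kappa$ so that it is used consistently in both $\Gamma_\kappa$ and $\mathcal{O}_R^N(\kappa;\mathcal{N})$. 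The operator-domination estimate itself is the standard one-shot max-relative-entropy argument underlying hypothesis-testing and capacity bounds, so no genuine obstacle arises there.
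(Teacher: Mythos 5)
Your proposal is correct and takes essentially the same route as the paper: specialize Theorem~\ref{Result:Upper-Bound} to $\delta=0$, then bound $\Gamma_\kappa(\mathcal{N})$ by $P_{D_{\rm max}|\gamma}\left(\Lambda^\mathcal{N}\right)$ via the uniform operator domination $\Lambda^\mathcal{N}(\rho)\le 2^{P_{D_{\rm max}|\gamma}\left(\Lambda^\mathcal{N}\right)}\gamma$ combined with POVM completeness. The only differences are cosmetic: the paper imports the identity $P_{D_{\rm max}|\gamma}(\mathcal{E})=\sup_\rho D_{\rm max}[\mathcal{E}(\rho)\|\gamma]$ from Eq.~(80) of Ref.~\cite{Hsieh2020-1} whereas you rederive the needed inequality from the singleton structure of the athermality-annihilating set, and your closing worry about $\Lambda^\mathcal{N}$ being Gibbs-preserving is unnecessary, since the argument uses $P_{D_{\rm max}|\gamma}\left(\Lambda^\mathcal{N}\right)$ only as a well-defined quantity via Eq.~\eqref{Eq:P_D} (finite because $\gamma$ is full-rank), not as a monotone on free operations.
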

Corollary~\ref{Coro:CCUpper-Bound} implies that the ability of a Gibbs-preserving $\mathcal{N}\in\mathcal{O}_R$ to transmit classical information is limited by its ability to preserve $R$, plus the ability of its resourceless version (to $R$) to preserve athermality.
This observation will allow us to prove one of the main results of this work in Sec.~\ref{Sec:Thermo}.

\subsection{Asymmetry and Lower Bounds}
When the underlying state resource is the asymmetry of a given unitary group $G$, a lower bound on the classical capacity can also be obtained.
Formally, asymmetry of a given group $G$, or simply {\em $G$-asymmetry}, has free states as those invariant under group actions; that is, $\rho = U\rho U^\dagger$ for all $U\in G$.
One option of free operations, which is adopted here, is {\em $G$-covariant channels}, which are channels commuting with unitaries in $G$: $U\mE(\cdot)U^\dagger = \mE[U(\cdot)U^\dagger]$ for all $U\in G$ (see, e.g., Ref.~\cite{MarvianThesis,Marvian2013,QRF-RMP,Takagi2019-4}). 

To introduce the result, we need to use the {\em information spectrum relative entropy}~\cite{Hayashi2003,Tomamichel2013} (see also Ref.~\cite{Korzekwa2019}) given by
$
D_s^\delta(\rho\|\sigma)\coloneqq \sup\{\omega\,|\,{\rm tr}\left(\rho\Pi_{\rho\le2^\omega\sigma}\right)\le\delta\},
$
where $\Pi_{\rho\le2^\omega\sigma}$ is the projection onto the union of eigenspaces of $2^\omega\sigma - \rho$ with non-negative eigenvalues~\cite{Korzekwa2019}.
Despite its name, the information spectrum relative entropy is not a proper contractive distance measure, since it will not satisfy data-processing inequality and can output negative values~\cite{Leditzky-PhD}.
However, it allows us to obtain a lower bound on $C_{\rm (1)}$. 
In Appendix~\ref{App:Proof-Result:Asymmetry-Lower-Bounds}, we apply results in Ref.~\cite{Korzekwa2019} and show the following bound (now $\mathcal{O}_R^N$ denotes the set of $G$-covariant channels that cannot preserve any $G$-asymmetry):
\begin{theorem}\label{Result:Asymmetry-Lower-Bounds}
Given $R=$ $G$-asymmetry, then for every $G$-covariant channel $\mathcal{N}$ and $0\le\delta<\epsilon<1$, we have
\begin{align}
\max\left\{0;\frac{1}{\ln{2}}\wt{P}_{D_s^{\epsilon - \delta}}(\mathcal{N}) + \log_2\delta -1\right\} \le C_{\rm  (1)}^\epsilon(\mathcal{N}),
\end{align}
where $\wt{P}_{D_s^{\epsilon - \delta}}(\mathcal{N})\coloneqq\inf_{\Lambda\in\mathcal{O}_R^N}\sup_{\rho}D_s^{\epsilon - \delta}[\mathcal{N}(\rho)\,\|\,\Lambda(\rho)]$.
\end{theorem}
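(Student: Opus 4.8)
The plan is to prove the bound by constructing an explicit one-shot code that exploits the $G$-covariant structure, and then to invoke the asymmetry-encoding estimate of Ref.~\cite{Korzekwa2019}. Since $C_{\rm (1)}^\epsilon(\mathcal{N})$ does not depend on the choice of annihilating channel, it suffices to fix an arbitrary $\Lambda\in\mathcal{O}_R^N$, establish the bound with $\sup_\rho D_s^{\epsilon-\delta}[\mathcal{N}(\rho)\,\|\,\Lambda(\rho)]$ in place of $\wt{P}_{D_s^{\epsilon-\delta}}(\mathcal{N})$, and only afterwards take the infimum over $\Lambda$. Because $x\mapsto\frac{1}{\ln 2}x+\log_2\delta-1$ is monotone and $C_{\rm (1)}^\epsilon\ge0$, the infimum on the right-hand side and the outer $\max\{0,\cdot\}$ then follow formally.

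First I would fix such a $\Lambda$ and select $\rho$ that (nearly) attains $\sup_\rho D_s^{\epsilon-\delta}[\mathcal{N}(\rho)\,\|\,\Lambda(\rho)]$. The crucial structural observation is that $\mathcal{N}$ and $\Lambda$ are both $G$-covariant, while $\Lambda$ additionally annihilates $G$-asymmetry, so $\Lambda(\rho)$ is $G$-invariant. Encoding message $m$ as $U_m\rho U_m^\dagger$ with $U_m\in G$, covariance gives $\mathcal{N}(U_m\rho U_m^\dagger)=U_m\mathcal{N}(\rho)U_m^\dagger$, whereas $\Lambda(U_m\rho U_m^\dagger)=U_m\Lambda(\rho)U_m^\dagger=\Lambda(\rho)$ is \emph{independent of $m$}. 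This reduces the channel problem to a static one: distinguish the orbit $\{U_m\mathcal{N}(\rho)U_m^\dagger\}_m$ of the single output state $\mathcal{N}(\rho)$ against the fixed $G$-invariant reference $\Lambda(\rho)$, which is exactly the asymmetry-encoding setting of Ref.~\cite{Korzekwa2019}. (Choosing the canonical $\Lambda=\mathcal{G}\circ\mathcal{N}$, with $\mathcal{G}$ the $G$-twirl, recovers the output asymmetry $D_s^{\epsilon-\delta}[\mathcal{N}(\rho)\,\|\,\mathcal{G}(\mathcal{N}(\rho))]$ and shows the reduction is consistent with the infimum.)

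Next I would apply the one-shot encoding bound of Ref.~\cite{Korzekwa2019}, which in our notation produces a code $\Theta_M$ with $p_s(\Theta_M,\mathcal{N})\ge1-\epsilon$ and $\log_2 M\ge\frac{1}{\ln 2}D_s^{\epsilon-\delta}[\mathcal{N}(\rho)\,\|\,\Lambda(\rho)]+\log_2\delta-1$. Here the split of the total error into a threshold part $\epsilon-\delta$, used to define the information-spectrum quantity, and a packing part $\delta$, absorbed into the additive $\log_2\delta-1$ correction, is exactly what generates the stated parameter dependence. Optimizing over $\rho$ upgrades the bound to $\sup_\rho$, and the argument of the first paragraph then delivers the claimed inequality with $\wt{P}_{D_s^{\epsilon-\delta}}(\mathcal{N})$.

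The main obstacle I expect lies in the packing and decoding step. Because the information spectrum relative entropy satisfies neither the data-processing inequality nor non-negativity, the monotone-based coding arguments used for the upper bound in Theorem~\ref{Result:Upper-Bound} do not transfer, and one must instead work directly with the spectral projector defining $D_s^{\epsilon-\delta}$, controlling the overlaps of the rotated outputs $U_m\mathcal{N}(\rho)U_m^\dagger$ with that projector and with the shared reference $\Lambda(\rho)$. Building a single decoding POVM from these projectors that succeeds on all $M$ messages with probability at least $1-\epsilon$, while bounding the aggregate failure weight by $\delta$, is the delicate quantitative point; this is precisely where the estimates of Ref.~\cite{Korzekwa2019} are invoked, and the additive $-1$ reflects the unavoidable slack in converting a spectral threshold into a guaranteed number of reliably decodable orbit elements.
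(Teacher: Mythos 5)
Your proposal follows the same skeleton as the paper's proof---encode the messages into a group orbit $\{U_m\rho U_m^\dagger\}$, use covariance to reduce the channel problem to the static asymmetry-encoding problem for the output state $\mathcal{N}(\rho)$, and invoke the pretty-good-measurement estimate of Ref.~\cite{Korzekwa2019} with the error split $\kappa=\epsilon-\delta$---but your handling of the infimum has the quantifiers backwards, and the intermediate claim your main argument rests on is false. You propose to fix an \emph{arbitrary} $\Lambda\in\mathcal{O}_R^N$, prove $\frac{1}{\ln 2}\sup_\rho D_s^{\epsilon-\delta}[\mathcal{N}(\rho)\|\Lambda(\rho)]+\log_2\delta-1\le C_{\rm (1)}^\epsilon(\mathcal{N})$ for that $\Lambda$, and ``only afterwards take the infimum.'' The Korzekwa bound cannot deliver this: in its PGM analysis the reference state is forced to be the twirl $\mathcal{T}_G(\mathcal{N}(\rho))$, because it arises as the codebook average of the randomly rotated output states; it is not a free parameter that can be replaced by the output of an arbitrary annihilating channel. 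Worse, the per-$\Lambda$ statement you assert is simply untrue. Take $R$ to be $U(1)$ (time-translation) asymmetry on a qubit with $H=\proj{1}$, let $\mathcal{N}=\mathcal{I}$, and let $\Lambda:(\cdot)\mapsto\proj{0}\tr{\cdot}$, which is $G$-covariant and outputs only symmetric states, hence lies in $\mathcal{O}_R^N$. For $\rho=\proj{1}$ the operator $2^\omega\proj{0}-\proj{1}$ has non-negative part $\proj{0}$ for every $\omega$, so $\tr{\rho\,\Pi_{\rho\le2^\omega\Lambda(\rho)}}=0\le\epsilon-\delta$ for all $\omega$ and $D_s^{\epsilon-\delta}[\mathcal{N}(\rho)\|\Lambda(\rho)]=+\infty$, whereas $C_{\rm (1)}^\epsilon(\mathcal{I})\le\log_2\frac{2}{1-\epsilon}<\infty$. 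Your claimed bound for this $\Lambda$ would read $\infty\le C_{\rm (1)}^\epsilon(\mathcal{I})$.

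The repair is precisely your parenthetical remark, and it is what the paper does: since $\wt{P}_{D_s^{\epsilon-\delta}}(\mathcal{N})$ is an infimum, it suffices to prove the capacity bound at a \emph{single} point of $\mathcal{O}_R^N$, and the right point is $\mathcal{T}_G\circ\mathcal{N}$, which is covariant and outputs only symmetric states. Covariance gives $\mathcal{N}(U^{(g_m)}\rho U^{(g_m),\dagger})=U^{(g_m)}\mathcal{N}(\rho)U^{(g_m),\dagger}$, so the random-codebook/PGM estimate of Ref.~\cite{Korzekwa2019} applies verbatim with reference $\mathcal{T}_G(\mathcal{N}(\rho))$; together with the converse-style step (no $(M_*+1)$-code succeeds, $\log_2 M_*\ge\log_2(M_*+1)-1$, and the choice $\kappa=\epsilon-\delta$) one gets $C_{\rm (1)}^\epsilon(\mathcal{N})\ge\frac{1}{\ln2}\sup_\rho D_s^{\epsilon-\delta}[\mathcal{N}(\rho)\|\mathcal{T}_G\circ\mathcal{N}(\rho)]+\log_2\delta-1\ge\frac{1}{\ln2}\wt{P}_{D_s^{\epsilon-\delta}}(\mathcal{N})+\log_2\delta-1$, where the last inequality is the trivial direction of the infimum. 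The remaining details you flag (completing the PGM elements into a genuine POVM, the origin of the $-1$) are handled exactly as you anticipate, but the argument must be anchored at the twirl rather than at a generic $\Lambda$.
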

This provides an $R$-preservability-like lower bound on the one-shot classical capacity for $G$-covariant channels, which also shows a witness of resourceful advantages, i.e., $C_{\rm (1)}^\epsilon(\mathcal{N}) - {\Gamma_\kappa^{\delta}(\mathcal{N})}$.
Using $(U\otimes U^*)$-asymmetry as an example (see also Appendix~\ref{App:ExIsotropicStateAsymm} for more details), the advantage from asymmetry can be witnessed when $\wt{P}_{D_s^{\epsilon - \delta}}(\mathcal{N})>2\ln{2}+\ln\frac{d^2}{d^2 - 1} - \ln\delta$, which is approximately $\wt{P}_{D_s^{\epsilon - \delta}}(\mathcal{N})>2\ln2 - \ln\delta$ when $d\gg1$.

\section{Application: Classical Communications And Thermodynamics}\label{Sec:Thermo}
It is worth mentioning that our result bridges two seemingly different physical concepts: Classical capacity~\cite{Takagi2019-3} and heat bath size needed for thermalization~\cite{Sparaciari2019,Hsieh2020-1}.
To introduce the result, we give a quick review of the resource theory of athermality and related ingredients for thermalization bath sizes~\cite{Sparaciari2019}.
{\em Athermality} is the status out of thermal equilibrium.
With a fixed system dimension $d$, the unique free state is the thermal equilibrium state, or the {\em thermal state}. 
With a given system Hamiltonian $H_{\rm S}$ and temperature $T$, the thermal state is uniquely given by
$
\gamma = \frac{e^{-\beta H_{\rm S}}}{{\rm tr}(e^{-\beta H_{\rm S}})},
$
where $\beta = \frac{1}{k_BT}$ is the inverse temperature and $k_B$ is the Boltzmann constant.
For multiple systems with tensor product, all free states in this resource theory are $\gamma^{\otimes k}$ for some positive integer $k$ (i.e., all allowed dimensions are $d^k$ with some $k$).
In this work, we adopt {\em Gibbs-preserving channels} as the free operations. 
They are channels $\mathcal{E}$ keeping thermal states invariant: $\mathcal{E}(\gamma^{\otimes k}) = \gamma^{\otimes l}$, where $d^k$ and $d^l$ are the input and output dimensions, respectively.
Physically, these are dynamics that will not drive thermal equilibrium away from equilibrium.

To formally study thermalization, we follow Ref.~\cite{Sparaciari2019} and define a channel (jointly acting on system ${\rm S}$ plus bath ${\rm B}$) $\mathcal{E}_{\rm SB}:{\rm SB}\to{\rm SB}$ to {\em $\epsilon$-thermalize} a system state $\rho_{\rm S}$ if
\begin{align}\label{Eq:Thermalize-Def}
\norm{\mathcal{E}_{\rm SB}\left(\rho_{\rm S}\otimes\gamma^{\otimes (n-1)}\right) - \gamma^{\otimes n}}_1\le\epsilon.
\end{align}
That is, $\mathcal{E}_{\rm SB}$ needs to globally thermalize ${\rm SB}$, where the thermal state $\gamma$ is determined by the Hamiltonian and the temperature of ${\rm S}$, and the initial state of ${\rm B}$ is the $n-1$ copies of $\gamma$.
To depict such thermalization processes dynamically, we consider the collision model introduced in Ref.~\cite{Sparaciari2019}.
To avoid complexity, we refer the reader to Appendix~\ref{App:ThermalizationModel} for a brief introduction of this model, and here we let $\mathcal{C}_n$ be the set of all channels on ${\rm SB}$ that can be realized by this model.
Then the quantity
$
n^\epsilon_\gamma(\rho_{\rm S})\coloneqq\inf\{n\in\mathbb{N}\,|\,\exists\,\mathcal{E}_{\rm SB}\in\mathcal{C}_n\;{\rm s.t.\;Eq.~\eqref{Eq:Thermalize-Def}\;holds}\}
$
can be understood as the smallest bath size needed to $\epsilon$-thermalize $\rho_{\rm S}$ under this model~\cite{Sparaciari2019}.
This concept can be generalized to any channel $\mathcal{N}$ by defining~\cite{Hsieh2020-1}
\begin{align}\label{Eq:BathSizeChannel}
\mathcal{B}^\epsilon_\gamma(\mathcal{N})\coloneqq\sup_\rho n^\epsilon_\gamma[\mathcal{N}(\rho)] - 1,
\end{align}
which maximizes over all the smallest bath sizes among all outputs of $\mathcal{N}$.
This is thus the smallest bath size needed to $\epsilon$-thermalize all outputs of $\mathcal{N}$ under the given collision model.

Now we mention a core assumption made in Ref.~\cite{Sparaciari2019} used to regularize the analysis.
A given Hamiltonian $H$ with energy levels $\{E_i\}_{i=1}^d$ is said to satisfy the {\em energy subspace condition} if for every positive integer $M$ and every pair of different vectors $\{{\bf m}\neq{\bf m}'\}\subset\left(\mathbb{N}\cup\{0\}\right)^d$ satisfying $\sum_{i=1}^dm_i = \sum_{i=1}^dm'_i = M$, we have $\sum_{i=1}^dm_iE_i\neq\sum_{i=1}^dm'_iE_i$.
Hence, energy levels cannot be integer multiples of each other, and energy degeneracy is also forbidden.
As an application of Theorem~\ref{Result:Upper-Bound} (and Corollary~\ref{Coro:CCUpper-Bound}), in Appendix~\ref{App:Proof-Coro:LinkCoro} we show the following bound [we implicitly assume the system Hamiltonian is the one realizing the given thermal state $\gamma$ with some temperature, and its energy eigenbasis defines the coherence, $R = {\rm Coh}$; also, we use $p_{\rm min}(\gamma)$ to denote the smallest eigenvalue of $\gamma$]:
\begin{theorem}\label{Coro:LinkCoro}
Given $0\le\epsilon,\delta<1\;\&\;0<\kappa<1$ and a full-rank thermal state $\gamma$.
Assume the system Hamiltonian satisfies the energy subspace condition.
Then for a Gibbs-preserving map $\mathcal{N}$ of $\gamma$ that is also coherence non-generating, we have
\begin{align}
C_{\rm (1)}^\epsilon(\mathcal{N})\le &P_{D_{\rm max}|{\rm Coh}}(\mathcal{N}) + \log_2\left(\mathcal{B}^{\delta}_\gamma\left({\Lambda^\mathcal{N}}\right) + \frac{2\sqrt{\delta}}{p_{\rm min}(\gamma)}+1\right)\nonumber\\
& + \log_2\frac{2^\kappa}{1-\epsilon}
\end{align}
for every $\Lambda^\mathcal{N}\in\mathcal{O}_{\rm Coh}^N(\kappa,\mathcal{N})$.
\end{theorem}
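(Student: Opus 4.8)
The plan is to derive Theorem~\ref{Coro:LinkCoro} by specializing Corollary~\ref{Coro:CCUpper-Bound} to $R={\rm Coh}$ and then trading the athermality-preservability term for a bath size. First I would note that the hypotheses---$\mathcal{N}$ coherence non-generating (so $\mathcal{N}\in\mathcal{O}_R$ with $R={\rm Coh}$) and Gibbs-preserving---are exactly those required by Corollary~\ref{Coro:CCUpper-Bound}, which therefore gives, for every $\Lambda^\mathcal{N}\in\mathcal{O}_{\rm Coh}^N(\kappa;\mathcal{N})$,
\begin{align}
C_{\rm (1)}^\epsilon(\mathcal{N})\le P_{D_{\rm max}|{\rm Coh}}(\mathcal{N}) + P_{D_{\rm max}|\gamma}\!\left(\Lambda^\mathcal{N}\right) + \log_2\frac{2^\kappa}{1-\epsilon}.
\end{align}
The coherence-preservability term and the final error term already coincide with the claimed bound, so the whole problem collapses to proving the single inequality
\begin{align}\label{plan-key}
P_{D_{\rm max}|\gamma}\!\left(\Lambda^\mathcal{N}\right)\le \log_2\!\left(\mathcal{B}^{\delta}_\gamma\!\left(\Lambda^\mathcal{N}\right) + \frac{2\sqrt{\delta}}{p_{\rm min}(\gamma)} + 1\right).
\end{align}

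Second, I would reduce the left-hand side to a statement about individual outputs. Using the alternative expression for the $D_{\rm max}$-induced monotone [Eq.~\eqref{Eq:AlternativeFormDmaxR}]---valid because the only athermality-free state is $\gamma$ and complete thermalization $(\cdot)\mapsto\gamma$ is absolutely athermality-annihilating, so that a thermalized ancilla factors out of the max-relative entropy---I would bound $P_{D_{\rm max}|\gamma}(\Lambda^\mathcal{N})\le\sup_\rho D_{\rm max}(\Lambda^\mathcal{N}(\rho)\|\gamma)$. It then suffices to prove, uniformly in $\rho$, the per-output estimate $D_{\rm max}(\Lambda^\mathcal{N}(\rho)\|\gamma)\le\log_2\!\left(n^\delta_\gamma[\Lambda^\mathcal{N}(\rho)] + \tfrac{2\sqrt\delta}{p_{\rm min}(\gamma)}\right)$, since taking the supremum and using $\sup_\rho n^\delta_\gamma[\Lambda^\mathcal{N}(\rho)] = \mathcal{B}^\delta_\gamma(\Lambda^\mathcal{N})+1$ from Eq.~\eqref{Eq:BathSizeChannel} then yields Eq.~\eqref{plan-key}. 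Here it is essential that $\Lambda^\mathcal{N}$ is coherence-annihilating, because the collision-model thermalization of Ref.~\cite{Sparaciari2019} acts on states diagonal in the energy eigenbasis, which is precisely the form of every output $\Lambda^\mathcal{N}(\rho)$.

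Third, the per-output estimate I would establish in two steps. For exact thermalization I would invoke the collision-model analysis of Ref.~\cite{Sparaciari2019}: under the energy subspace condition the smallest bath that thermalizes a diagonal $\sigma$ obeys $2^{D_{\rm max}(\sigma\|\gamma)}\le n^0_\gamma(\sigma)$, i.e.\ the bath must exceed the peak population ratio recorded by $D_{\rm max}$. For the operationally relevant error $\delta$ I would pass through a smoothed state: $\delta$-closeness in trace norm [Eq.~\eqref{Eq:Thermalize-Def}] produces, via a Fuchs--van de Graaf/gentle-measurement estimate (the source of the $\sqrt\delta$), a nearby $\tilde\rho$ that is \emph{exactly} thermalizable with the same bath, so that $2^{D_{\rm max}(\tilde\rho\|\gamma)}\le n^0_\gamma(\tilde\rho)\le n^\delta_\gamma[\Lambda^\mathcal{N}(\rho)]$; undoing the smoothing through the operator inequality $\Lambda^\mathcal{N}(\rho)-\tilde\rho\le\frac{\norm{\Lambda^\mathcal{N}(\rho)-\tilde\rho}_1}{p_{\rm min}(\gamma)}\gamma$ (the source of the $1/p_{\rm min}(\gamma)$) then gives $\Lambda^\mathcal{N}(\rho)\le\left(n^\delta_\gamma[\Lambda^\mathcal{N}(\rho)]+\frac{2\sqrt\delta}{p_{\rm min}(\gamma)}\right)\gamma$, which is the per-output estimate. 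The hard part will be precisely this smoothing passage: the Sparaciari correspondence is exact only for perfect thermalization, so the subtlety is to convert trace-norm $\delta$-thermalization into an \emph{unsmoothed} $D_{\rm max}$ bound while tracking constants tightly enough to land on the stated correction $\frac{2\sqrt\delta}{p_{\rm min}(\gamma)}$ rather than a looser one; the energy subspace condition is what makes the exact bath size comparable to $2^{D_{\rm max}}$ in the first place, and without it this step would fail.
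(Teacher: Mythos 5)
Your proposal is correct and follows essentially the same route as the paper: specialize Corollary~\ref{Coro:CCUpper-Bound} to $R={\rm Coh}$, then convert $P_{D_{\rm max}|\gamma}\left(\Lambda^\mathcal{N}\right)$ into the bath-size expression $\log_2\left(\mathcal{B}^{\delta}_\gamma\left(\Lambda^\mathcal{N}\right) + \frac{2\sqrt{\delta}}{p_{\rm min}(\gamma)}+1\right)$. The single inequality to which you reduce the whole problem is precisely Theorem 4 of Ref.~\cite{Hsieh2020-1} (restated as Theorem~\ref{AThm} in Appendix~\ref{App:Proof-Coro:LinkCoro}), which the paper invokes as a black box applied to the coherence-annihilating channel $\Lambda^\mathcal{N}$, so your collision-model re-derivation sketch is work the paper does not redo.
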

Theorem~\ref{Coro:LinkCoro} illustrates how a dynamical resource theory can bridge a pure thermodynamic quantity to a pure communication quantity.
To illustrate this, let us first consider the special case when $\mathcal{N}$ is {\em coherence-annihilating}~\cite{footnote:Coherence-Annihilating}.
In this case, one is able to choose $\mathcal{N} = \Lambda^\mathcal{N}$ and $\kappa = 0$.
Within this setup, if $\mathcal{N}$ can communicate a high amount of classical information, it necessarily requires a large bath to thermalize all its outputs.
On the other hand, if this channel has a small thermalization bath size, it unavoidably has a weak ability to communicate classical information.
Importantly, Theorem~\ref{Coro:LinkCoro} provides a physical message that is in spirit similar to the Landauer's principle~\cite{Landauer1961}.
Landauer's principle says that preparing a pre-defined pure state, e.g., $\ket{0}$, from an initial state $\rho$ requires at least $S(\rho)k_BT\ln2$ amount of energy~\cite{del_Lio2011}, where $S(\rho)\coloneqq-{\rm tr}(\rho\log_2\rho)$ is the von Neumann entropy of $\rho$.
In this sense, energy can be regarded as the thermodynamic cost needed to {\em create} classical information carried by an orthonormal basis $\{\ket{m}\}_{m=0}^{M-1}$.
Theorem~\ref{Coro:LinkCoro} provides a different, {\em dynamical} perspective: Under the given setting, transmitting $n$ bits of classical information, i.e., $C_{\rm (1)}^\epsilon(\mathcal{N}) = n$, necessarily requires the bath size $\mathcal{B}^\epsilon_\gamma(\mathcal{N})$ to be at least $2^n - 1$, up to some small terms.
In this case, the bath size can be interpreted as the thermodynamic cost needed to {\em transmit} classical information.
When the ability of the channel $\mathcal{N}$ to preserve coherence is turned on, interestingly, the cost to transmit classical information becomes a hybrid term: It is the bath size of $\mathcal{N}$'s incoherent version [i.e., $\Lambda^\mathcal{N}\in\mathcal{O}_{\rm Coh}^N(\kappa,\mathcal{N})$] {\em plus} the ability of $\mathcal{N}$ to preserve coherence.
In other words, this is the sum of the thermodynamic cost of $\mathcal{N}$'s classical counterpart, and the quantum effect maintained by $\mathcal{N}$.
By treating Landauer's principle as a bridge between thermodynamics and a {\em static} property of classical information, Theorem~\ref{Coro:LinkCoro} brings a connection between thermodynamics and a {\em dynamical} feature of classical information.

Note that, as expected, a full thermalization process (i.e., a state preparation channel of the given thermal state) cannot transmit any amount of classical information, since the bath size needed for thermalization is precisely zero.
This also implies that the inequality in Theorem~\ref{Coro:LinkCoro} is tight.
However, locally-performed thermalization processes can actually allow global transmission of classical information, which only needs shared randomness as a resource.
In this sense, thermalization together with a classical resource can still achieve nontrivial classical communication.
We will introduce this result more in-dept in Sec.~\ref{Sec:CCThermalization}, and the following section is for a tool needed for its proof.

\section{Application: Maintaining Orthogonal Maximal Entanglement}\label{Sec:MaintainingMaxEnt}
Once a question can be formulated into a classical communication problem, our approach can be used to study connections between the given question and different resource constraints.
To illustrate this, we study the following question: 
{\em How robust is the structure of orthogonal maximal entanglement under dynamics?}
As the motivation, a maximally entangled basis is a well-known tool in quantum information science, promising applications such as quantum teleportation~\cite{Bennett1993} and superdense coding~\cite{QCI-book}. 
The key is the simultaneous existence of maximal entanglement and orthogonality, and maintaining both of them through a physical evolution is vital for applications afterward. 
To model this question, we impose two restrictions in the classical communication scenarios used in this work: (i) The encoding $\{\rho_m\}_{m=0}^{M-1}$ are mutually orthonormal maximally entangled states  $\{\ket{\Phi_m}\}_{m=0}^{M-1}$.
(ii) The decoding $\{E_m\}_{m=0}^{M-1}$ are projective measurements done by a (sub-)basis of orthogonal maximally entangled states $\{\proj{\Phi'_m}\}_{m=0}^{M-1}$.
The corresponding one-shot classical capacity characterizes the ability of a given channel $\mathcal{N}$~\cite{footnote:Local-Dimension} to simultaneously maintain orthogonality and maximal entanglement:
\begin{align}\label{Eq:Capacity-Like-Measure}
C_{{\rm ME},(1)}^\epsilon(\mathcal{N})\coloneqq\log_2\max\{M\,|\,p_{s|{\rm ME}}(M,\mathcal{N})\ge1-\epsilon\},
\end{align}
where the success probability reads
$
p_{s|{\rm ME}}(M,\mathcal{N})\coloneqq
\sup_{\{\ket{\Phi_m}\},\{\ket{\Phi'_m}\}}\frac{1}{M}\sum_{m=0}^{M-1}\bra{\Phi'_m}\mathcal{N}(\proj{\Phi_m})\ket{\Phi'_m},
$
and the maximization is taken over all sets of orthogonal maximally entangled states of size $M$, denoted by $\{\ket{\Phi_m}\},\{\ket{\Phi'_m}\}$.
Thus, $C_{{\rm ME},(1)}^\epsilon(\mathcal{N})$ is the highest maintainable pairs of mutually orthonormal maximally entangled states under the dynamics $\mathcal{N}$, up to an error smaller than $\epsilon$.
To introduce the result, we say a state $\rho$ is {\em multi-copy nonlocal/steerable}~\cite{Palazuelos2012,Cavalcanti2013,Hsieh2016,Quintino2016} if there exists an integer $k$ such that $\rho^{\otimes k}$ is nonlocal/steerable.
Also, recall that $\mathbb{F}_R$ is the set of free operation of $R$-preservability defined in Sec.~\ref{Sec:Formulation}.
Then in Appendix~\ref{App:Proof-Result:ConverseBound} we show that
\begin{theorem}\label{Result:ConverseBound}
For a given $\mathcal{N}\in\mathcal{O}_R$ and $0\le\epsilon,\delta<1$ satisfying $\epsilon+\delta<1$, we have
\begin{align}
\alpha\times\sup_{\mathfrak{F}\in\mathbb{F}_R}C^\epsilon_{{\rm ME},(1)}\left[\mathfrak{F}(\mathcal{N})\right]\le P^\delta_{D_{\rm max}}(\mathcal{N}) + \log_2\frac{1}{1-\epsilon-\delta}
\end{align}
with $\alpha = 1$ when $R=$ athermality; $\alpha = \frac{1}{2}$ when $R=$ entanglement, free entanglement~\cite{Horodecki1998},
multi-copy nonlocality, and multi-copy steerability.
\end{theorem}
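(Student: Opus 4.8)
The plan is to reduce Theorem~\ref{Result:ConverseBound} to a single-channel estimate and then exploit the monotonicity of resource preservability under $\mathbb{F}_R$. Concretely, I would first establish, for an arbitrary $\mathcal{M}\in\mathcal{O}_R$, the core inequality $\alpha\, C^\epsilon_{{\rm ME},(1)}(\mathcal{M})\le P^\delta_{D_{\rm max}}(\mathcal{M}) + \log_2\frac{1}{1-\epsilon-\delta}$. Granting this, one sets $\mathcal{M}=\mathfrak{F}(\mathcal{N})$ for $\mathfrak{F}\in\mathbb{F}_R$; since $\mathcal{O}_R$ is closed under composition and tensoring (Assumption~\ref{Postulate:Composition}) and the ingredients of $\mathfrak{F}$ lie in $\mathcal{O}_R$, we have $\mathfrak{F}(\mathcal{N})\in\mathcal{O}_R$, so the core inequality applies. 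One then uses $P^\delta_{D_{\rm max}}(\mathfrak{F}(\mathcal{N}))\le P^\delta_{D_{\rm max}}(\mathcal{N})$ and takes the supremum over $\mathfrak{F}$, the right-hand side being independent of $\mathfrak{F}$.

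For the core inequality I would follow the max-relative-entropy machinery behind Theorem~\ref{Result:Upper-Bound}. Fix an orthogonal-maximally-entangled code $(\{\ket{\Phi_m}\},\{\ket{\Phi'_m}\})$ of size $M=2^{C^\epsilon_{{\rm ME},(1)}(\mathcal{M})}$ with $\frac{1}{M}\sum_m\bra{\Phi'_m}\mathcal{M}(\ket{\Phi_m}\bra{\Phi_m})\ket{\Phi'_m}\ge 1-\epsilon$. For any $\Lambda\in\mathcal{O}_R^N$, restricting the supremum defining $D^R_{\rm max}$ to the trivial ancilla gives the operator domination $\mathcal{M}(\ket{\Phi_m}\bra{\Phi_m})\le 2^{D^R_{\rm max}(\mathcal{M}\|\Lambda)}\Lambda(\ket{\Phi_m}\bra{\Phi_m})$; sandwiching with $\bra{\Phi'_m}\cdot\ket{\Phi'_m}$, summing, and averaging yields $1-\epsilon\le 2^{D^R_{\rm max}(\mathcal{M}\|\Lambda)}\,\frac{1}{M}\sum_m\bra{\Phi'_m}\Lambda(\ket{\Phi_m}\bra{\Phi_m})\ket{\Phi'_m}$. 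The crucial step is to bound the trailing average by a constant $c$ uniformly over $\Lambda\in\mathcal{O}_R^N$; taking the infimum over $\Lambda$ then produces $1-\epsilon\le c\,2^{P_{D_{\rm max}}(\mathcal{M})}$.

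The identification of $c$ splits the two cases and produces $\alpha$. For $R=$ athermality, every $\Lambda\in\mathcal{O}_R^N$ has fixed output equal to the unique free (thermal) state of its output dimension, so orthonormality of $\{\ket{\Phi'_m}\}$ gives $\frac{1}{M}\sum_m\bra{\Phi'_m}\Lambda(\cdot)\ket{\Phi'_m}\le 1/M$; with $c=1/M$ one rearranges to $\log_2 M\le P_{D_{\rm max}}(\mathcal{M})+\log_2\frac{1}{1-\epsilon}$, i.e. $\alpha=1$. For $R=$ entanglement and its variants, any free output $\sigma$ obeys $\FEF(\sigma)\le 1/d$: for separable $\sigma$ this is the standard fully entangled fraction bound, while for free entanglement, multi-copy nonlocality, and multi-copy steerability it follows because $\FEF>1/d$ forces distillability~\cite{Horodecki1998} and hence multi-copy nonlocality/steerability, so a free output must satisfy $\FEF\le 1/d$. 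Thus $\bra{\Phi'_m}\Lambda(\ket{\Phi_m}\bra{\Phi_m})\ket{\Phi'_m}\le 1/d$ termwise, giving $c=1/d$ and $\log_2 d\le P_{D_{\rm max}}(\mathcal{M})+\log_2\frac{1}{1-\epsilon}$; combining with the fact that $\bip$ admits at most $d^2$ orthonormal maximally entangled states, so $M\le d^2$, yields $\tfrac12\log_2 M\le\log_2 d\le P_{D_{\rm max}}(\mathcal{M})+\log_2\frac{1}{1-\epsilon}$, i.e. $\alpha=\tfrac12$.

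Finally I would insert the smoothing. Using that composition with channels and tensoring with a channel are contractive in diamond norm, any $\mathcal{M}'$ with $\norm{\mathcal{M}-\mathcal{M}'}_\diamond\le 2\delta$ satisfies $\frac{1}{M}\sum_m\bra{\Phi'_m}\mathcal{M}'(\ket{\Phi_m}\bra{\Phi_m})\ket{\Phi'_m}\ge 1-\epsilon-\delta$, since for traceless $\Delta$ and $0\le E\le\id$ one has $|\tr{E\Delta}|\le\tfrac12\norm{\Delta}_1$; running the above argument on $\mathcal{M}'$ and minimizing over such $\mathcal{M}'$ replaces $P_{D_{\rm max}}$ by $P^\delta_{D_{\rm max}}$ and $1-\epsilon$ by $1-\epsilon-\delta$. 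The same diamond-norm contractivity, together with the unsmoothed monotonicity of $P_{D_{\rm max}}$ under $\mathbb{F}_R$, gives $P^\delta_{D_{\rm max}}(\mathfrak{F}(\mathcal{N}))\le P^\delta_{D_{\rm max}}(\mathcal{N})$, closing the argument after the supremum over $\mathfrak{F}$. The main obstacle I anticipate is the uniform bound $\FEF\le 1/d$ for the non-entanglement variants: it requires invoking the distillability characterization of the fully entangled fraction and the implication that a distillable state is multi-copy nonlocal/steerable, and checking that the relevant free sets are precisely those for which these implications hold; the remaining steps are the Takagi-style capacity machinery adapted to the maximally entangled code.
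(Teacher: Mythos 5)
Your proposal is correct in substance and uses the same key ingredients as the paper, but it factors the argument differently. The paper's proof (Lemma~\ref{Lemma:MESUpperBound}) handles the free super-channel inline: it smooths at the level of $\mathcal{N}$ (picking $\mathcal{N}'$ with $\norm{\mathcal{N}-\mathcal{N}'}_\diamond\le2\delta$), writes $\mathcal{N}'+\mathcal{P}_k=\lambda_k\Lambda_k$ with $\Lambda_k\in\mathcal{O}_R^N$, and then pushes this decomposition through $\mathfrak{F}$, using that $\Lambda_+\circ(\mathcal{P}_k\otimes\wt{\Lambda})\circ\Lambda_-$ is a positive map and that $\Lambda_+\circ(\Lambda_k\otimes\wt{\Lambda})\circ\Lambda_-\in\mathcal{O}_R^N$, so the success probability of $\mathfrak{F}(\mathcal{N}')$ is bounded by $2^{P_{D_{\rm max}}(\mathcal{N}')}F^R(d)$ directly. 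You instead prove a single-channel core inequality for $\mathcal{M}=\mathfrak{F}(\mathcal{N})$ via the trivial-ancilla $D_{\rm max}$ domination (sound, since the infimum in Eq.~\eqref{Eq:Def-D_max} is attained when finite), and then dispose of the super-channel by invoking $P^\delta_{D_{\rm max}}[\mathfrak{F}(\mathcal{N})]\le P^\delta_{D_{\rm max}}(\mathcal{N})$. That monotonicity statement is exactly the paper's Lemma~\ref{Lemma:SmoothNonIncreasing}, and your sketch of its proof (diamond-norm contraction under $\mathfrak{F}$ plus unsmoothed monotonicity) matches the paper's; the one point needing care, which the paper flags explicitly, is that the unsmoothed monotonicity must hold for channels \emph{outside} $\mathcal{O}_R$, since the smoothing ball contains such channels --- this is supplied by Theorem~\ref{Thm:GeneralizedThm2}, whose statement 2 applies to every channel. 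Your modular route is arguably cleaner (the capacity argument never touches the super-channel structure), while the paper's inline route keeps everything in one lemma; the dimension counting ($c=1/M$ for athermality giving $\alpha=1$; $F^R(d)\le1/d$ together with $M\le d^2$ giving $\alpha=\tfrac12$) and the diamond-norm smoothing of the success probability are identical in both.

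One repair is needed in your justification of $F^R(d)\le 1/d$ for the multi-copy resources. You propose the chain ``$\F{\rho}>\frac{1}{d}$ forces distillability, and hence multi-copy nonlocality/steerability.'' The second arrow is not available: distillation uses LOCC preprocessing, whereas multi-copy nonlocality/steerability requires some finite tensor power $\rho^{\otimes k}$ to be nonlocal/steerable with no processing at all, and ``distillable $\Rightarrow$ multi-copy nonlocal'' is not a known theorem. The correct, citable implications are the direct ones, which are precisely what the paper invokes: $\F{\rho}>\frac{1}{d}$ implies $\rho$ is free entangled~\cite{Horodecki1999-1}, multi-copy nonlocal~\cite{Palazuelos2012,Cavalcanti2013}, and multi-copy steerable~\cite{Hsieh2016,Quintino2016}. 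With those references substituted for the distillability detour, your argument goes through verbatim.
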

Theorem~\ref{Result:ConverseBound} provides upper bounds on $C_{\rm ME, (1)}(\mathcal{N})$ when $\mathcal{N}$ is a free operation of specific resources, which holds even when the channel is assisted by additional structures constrained by the resource  (which is given by $\mathbb{F}_R$).
Theorem~\ref{Result:ConverseBound} also brings an alternative operational interpretation of $R$-preservability: For the resources $R$ mentioned above, $R$-preservability bounds the channel's simultaneous maintainability of orthogonality and maximal entanglement in resource-constrained scenarios.

We remark that one can also interpret Eq.~\eqref{Eq:Capacity-Like-Measure} as a measure of the ability to admit superdense coding, and Theorem~\ref{Result:ConverseBound} therefore serves as an upper bound on this ability.
Furthermore, Theorem~\ref{Result:ConverseBound} brings a connection between fully entangled fraction~\cite{Horodecki1999-2,Albeverio2002} and $R$-preservability.
We leave the details in Appendix~\ref{App:Maintainability-MoreResults}.

\section{Application: Transmitting Information Through Thermalization}\label{Sec:CCThermalization}
The connection between Communication and Thermodynamics given by Theorem~\ref{Coro:LinkCoro} implies that it is impossible for a {\em full thermalization}, i.e., $\Phi_\gamma:(\cdot)\mapsto\gamma{\rm tr}(\cdot)$ for a given thermal state $\gamma$, to transmit any amount of classical information.
This is consistent with our physical intuition since a full thermalization means a complete destroy of the input information, leaving no freedom for the output.
However, this impossibility can be flipped with the help of pre-shared randomness, which is a pure classical resource.
More precisely, when local baths in a bipartite setting are allowed to be correlated through shared randomness, it is possible for locally performed full thermalizations to globally transmit classical information.

To make the statement precise, let us recall the following notions from Ref.~\cite{Hsieh2020-2}.
A bipartite channel $\mE_{\rm AB}$ on ${\rm AB}$ is called a {\em local thermalization} to a given pair of thermal states $(\gamma_{\rm A},\gamma_{\rm B})$ if 
(1) there exist an ancillary system ${\rm A'B'}$, unitary channels $\mathcal{U}_{\rm AA'},\mathcal{U}_{\rm BB'}$, and a thermal state $\neweta_{\rm A'B'}$ separable in ${\rm AB}$ bipartition such that
$
\mE_{\rm AB}(\rho) = {\rm tr}_{\rm A'B'}\left[\mathcal{U}_{\rm AA'}\otimes\mathcal{U}_{\rm BB'}(\rho\otimes\neweta_{\rm A'B'})\right],
$
and (2) ${\rm tr}_{\rm B}\circ\mE_{\rm AB}(\rho) = \gamma_{\rm A}$ and ${\rm tr}_{\rm A}\circ\mE_{\rm AB}(\rho) = \gamma_{\rm B}$ for all $\rho$.
As mentioned in Ref.~\cite{Hsieh2020-2}, local thermalization is {\em local} in two senses: It is a local operation plus pre-shared randomness channel, and it is locally equivalent to a full thermalization process.
The main message from Ref.~\cite{Hsieh2020-2} is that {\em entanglement preserving local thermalization} exists; that is, for every non-pure full-rank $\gamma_{\rm A},\gamma_{\rm B}$, there exists a local thermalization $\mE_{\rm AB}$ such that $\mE_{\rm AB}(\rho)$ is entangled for some $\rho$.
It turns out that such channels can also transmit classical information.
To state the result, consider a tripartite system ${\rm ABC}$ with local dimension $d,d,d^2+1$, respectively.
We focus on the bipartition ${\rm A|BC}$; namely, the subsystem ${\rm C}$ can be understood as an ancillary system possessed by the agent in ${\rm B}$.
Let $\gamma_{{\rm X}}$ be the thermal state of the subsystem ${\rm X=A,B,C}$ with temperature $T_{\rm X}$ and Hamiltonian $H_{\rm X}$.
Following Ref.~\cite{Hsieh2020-2}, we assume each $H_{\rm X}$ is non-degenerate and finite-energy.
Then, Combining Ref.~\cite{Hsieh2020-2} and results in Sec.~\ref{Sec:MaintainingMaxEnt}, in Appendix~\ref{App:Proof-Result:GeneralizedEPLT} we prove that locally performed thermalization is able to transmit classical information with the help of pre-shared randomness [let $p_{\rm min|AB}\coloneqq\min\{p_{\rm min}(\gamma_{{\rm A}});p_{\rm min}(\gamma_{{\rm B}})\}$ and $\gamma_{\rm BC}\coloneqq\gamma_{\rm B}\otimes\gamma_{\rm C}$; see also Fig.~\ref{Fig:EPLT-CC}]: 
\begin{theorem}\label{Result:GeneralizedEPLT}
When $d<\infty$ and $T_{{\rm X}}>0$, there exists an entanglement preserving local thermalization to $(\gamma_{\rm A},\gamma_{\rm BC})$, denoted by $\mE_{\rm A|BC}$, such that
\begin{align}
C_{\rm (1)}^\epsilon(\mE_{\rm A|BC})\ge\log_2d^2
\end{align}
for every $\epsilon\ge\left(1 - \frac{1}{d^2}\right)\left(1-dp_{\rm min|AB}\right)$.
\end{theorem}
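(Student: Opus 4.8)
The plan is to reduce the claim to the simultaneous-maintenance quantity $C_{{\rm ME},(1)}^\epsilon$ of Sec.~\ref{Sec:MaintainingMaxEnt} and then exhibit an explicit entanglement preserving local thermalization that maintains a full orthonormal Bell basis across ${\rm A|BC}$ with a controllable weight. First I would observe that a maximally entangled encoding together with a Bell-basis decoding is a legitimate, if restricted, code in the sense of Eq.~\eqref{Eq:Capacity}, so that $C_{{\rm ME},(1)}^\epsilon(\mE_{\rm A|BC})\le C_{\rm (1)}^\epsilon(\mE_{\rm A|BC})$; it therefore suffices to build a local thermalization to $(\gamma_{\rm A},\gamma_{\rm BC})$, entangled on some input, with $p_{s|{\rm ME}}(d^2,\mE_{\rm A|BC})\ge1-\epsilon$.

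For the construction I would combine the existence result of Ref.~\cite{Hsieh2020-2} with the decomposition $\gamma_{\rm X}=d\,p_{\rm min}(\gamma_{\rm X})\tfrac{\id}{d}+\bigl(1-d\,p_{\rm min}(\gamma_{\rm X})\bigr)\tau_{\rm X}$ of each thermal state into a maximally mixed component plus a remainder, for ${\rm X=A,B}$. The elementary fact driving everything is that a maximally entangled state and $\tfrac{\id}{d}\otimes\tfrac{\id}{d}$ have identical one-body marginals; hence a maximally entangled state can be injected into the weight-$q$ maximally mixed sector, with $q\defeq d\,p_{\rm min|AB}$, without disturbing the thermal marginals required of a local thermalization. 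Realizing this via local unitaries on ${\rm AA'}$ and ${\rm (BC)(B'C')}$ acting on a thermal ancilla separable across ${\rm A|BC}$ (pre-shared randomness) keeps the map a genuine local thermalization, while the ancilla ${\rm C}$ of dimension $d^2+1$ provides the room to store the label $m$ and thereby lift the single-state preservation of Ref.~\cite{Hsieh2020-2} to the simultaneous maintenance of all $d^2$ Bell states measured by Eq.~\eqref{Eq:Capacity-Like-Measure}.

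Granting such a channel, I would lower-bound $p_{s|{\rm ME}}$ by the decoding choice $\ket{\Phi'_m}=\ket{\Phi_m}$ and the decomposition $\mE_{\rm A|BC}(\proj{\Phi_m})=q\,\proj{\Phi_m}+(1-q)\,\xi$, where the first term is the injected maximally entangled state (decoded by the projector $\proj{\Phi_m}$ with probability one) and $\xi$ is the input-independent thermalized remainder supported on the $d\times d$ Bell subspace. Completeness of the Bell basis, $\sum_m\proj{\Phi_m}=\id$ on that subspace, then forces $\tfrac{1}{d^2}\sum_m\bra{\Phi_m}\xi\ket{\Phi_m}=\tfrac{1}{d^2}$, so $p_{s|{\rm ME}}(d^2,\mE_{\rm A|BC})=q+(1-q)\tfrac{1}{d^2}=1-\bigl(1-\tfrac{1}{d^2}\bigr)(1-q)$. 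Substituting $q=d\,p_{\rm min|AB}$ gives $p_{s|{\rm ME}}\ge1-\epsilon$ exactly when $\epsilon\ge\bigl(1-\tfrac{1}{d^2}\bigr)\bigl(1-d\,p_{\rm min|AB}\bigr)$, and the reduction of the first paragraph upgrades this to $C_{\rm (1)}^\epsilon(\mE_{\rm A|BC})\ge\log_2d^2$.

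The main obstacle lies entirely in the second paragraph: producing a single map that is at once a bona fide local thermalization---exact marginals $\gamma_{\rm A}$ and $\gamma_{\rm BC}$ on every input, generated by local unitaries and a classically correlated thermal ancilla---and an exact maintainer of the full Bell basis, so that the clean mixture $q\,\proj{\Phi_m}+(1-q)\,\xi$ with input-independent $\xi$ genuinely holds and the injected maximally entangled component sits precisely in the weight-$q$ maximally mixed sector. Verifying that the label-storing use of ${\rm C}$ keeps the remainder $\xi$ supported on the Bell subspace while leaving the ${\rm BC}$ marginal exactly thermal is the delicate step; note also that the weight $q=d\,p_{\rm min|AB}$, and hence the threshold, is pinned by the smaller of the two maximally mixed components and saturates at $q=1$ (so that $\epsilon\ge0$) precisely when both $\gamma_{\rm A}$ and $\gamma_{\rm B}$ are maximally mixed.
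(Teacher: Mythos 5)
Your closing arithmetic ($p_s=q+(1-q)/d^2$ with $q=dp_{\rm min|AB}$, hence the stated threshold) coincides exactly with the paper's, and the weight-$q$ injection intuition is indeed the mechanism behind the EPLT of Ref.~\cite{Hsieh2020-2}. However, the reduction you build everything on --- bounding $C_{\rm (1)}^\epsilon(\mE_{\rm A|BC})$ from below by $p_{s|{\rm ME}}(d^2,\mE_{\rm A|BC})$, i.e.\ demanding that $\mE_{\rm A|BC}$ itself maintain $d^2$ orthogonal maximally entangled states \emph{across the cut} ${\rm A|BC}$ and be decoded by the corresponding ME projectors --- cannot work, and the channel you postulate in your second and fourth paragraphs does not exist. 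A local thermalization must satisfy ${\rm tr}_{\rm A}\circ\mE_{\rm A|BC}(\rho)=\gamma_{\rm B}\otimes\gamma_{\rm C}$ for \emph{every} input. If an output contained a component $q\proj{\Phi_m}$ with $\ket{\Phi_m}$ maximally entangled across ${\rm A|BC}$ (so ${\rm tr}_{\rm A}\proj{\Phi_m}=\Pi_m/d$ with $\Pi_m$ a rank-$d$ projector), then $\gamma_{\rm B}\otimes\gamma_{\rm C}\ge q\,\Pi_m/d$, forcing $q\le d\,\lambda_d^\downarrow(\gamma_{\rm B}\otimes\gamma_{\rm C})$; more generally, \emph{any} decoding by ME projectors on such outputs has success probability at most ${\rm tr}\left[\Pi_m(\gamma_{\rm B}\otimes\gamma_{\rm C})\right]\le\sum_{i=1}^{d}\lambda_i^\downarrow(\gamma_{\rm B}\otimes\gamma_{\rm C})$. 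At high temperature this cap tends to $\frac{1}{d^2+1}$, while your threshold tends to $\epsilon\ge0$, i.e.\ it demands success probability $1$. Likewise, a remainder $\xi$ ``supported on the $d\times d$ Bell subspace'' would force the ${\rm BC}$ marginal of the output to have rank at most $d^3<d(d^2+1)$, contradicting the full-rank marginal $\gamma_{\rm BC}$. So the step you flag as ``delicate'' is not delicate but impossible: exact thermal marginals and a surviving weight-$q$ ME component across ${\rm A|BC}$ are mutually exclusive.

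The paper's proof evades precisely this point. The label $m$ is stored in ${\rm C}$; the channel reads it via a controlled unitary $V^{(m)}_{\rm B}$, transfers it into \emph{which} Bell state sits on ${\rm AB}$, and then resets ${\rm C}$ to $\gamma_{\rm C}$ (this is the map $\mathcal{L}^{\bf V}_{\rm BC}$ inside Theorem~\ref{Prop:LT}); the decoder is $\proj{\Phi'_m}_{\rm AB}\otimes\id_{\rm C}$, acting as the identity on ${\rm C}$. This is a \emph{general} code in the sense of Eq.~\eqref{Eq:Capacity}, not an ME code for the cut ${\rm A|BC}$. Correspondingly, the ME quantity of Eq.~\eqref{Eq:Capacity-Like-Measure} is applied not to $\mE_{\rm A|BC}$ but to the two-qudit mixing channel $\mathcal{D}^{\kappa_*}_{\rm AB}(\cdot)=(1-\kappa_*)\,\neweta_{\rm A}^{\kappa_*}\otimes\neweta_{\rm B}^{\kappa_*}+\kappa_*\,(\cdot)$, giving $C_{\rm (1)}^\epsilon\left(\mE^{{\bf V},\kappa_*}_{\rm A|BC}\right)\ge C_{\rm ME,(1)}^\epsilon\left(\mathcal{D}^{\kappa_*}_{\rm AB}\right)$ [Eq.~\eqref{Eq:Computation-LowerBoundEPLT}], after which your mixture computation applies verbatim: the input-independent remainder $\neweta_{\rm A}^{\kappa_*}\otimes\neweta_{\rm B}^{\kappa_*}$ is a full-rank product state on ${\rm AB}$ whose average Bell overlap is exactly $1/d^2$ by completeness. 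If you rewrite your third paragraph with decoders trivial on ${\rm C}$ and the remainder a product state on ${\rm AB}$ rather than Bell-subspace supported, your argument becomes the paper's; as written, the sufficient condition you reduce to is unachievable.
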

Under the global dynamics given by Theorem~\ref{Result:GeneralizedEPLT}, although the local agents will observe a full thermalization process, transmitting classical information is still possible via the joint bipartite dynamics.
This is because classical information can be locally encoded in the global quantum correlation.
Note that the amount of transmissible classical information largely depends on local temperatures: When the local systems are too cold (which corresponds to the limit $p_{\rm min|AB}\to0$), a low temperature will force the local thermal state be closed to a pure state (here we assume no energy degeneracy), resulting in no possibility to maintain global correlation.
Also note that although $\mE_{\rm A|BC}$ is locally identical to a full thermalization process, its global behavior is far from a thermalization.
This gap between local and global dynamics leads to the possibility for a classical communication through local thermalizations when shared randomness is accessed to.
Notably, this illustrate that, although being a classical resource, shared randomness provides advantages to maintain quantum correlation and assist classical communication through thermalization processes.
\begin{figure}
\scalebox{0.8}{\includegraphics{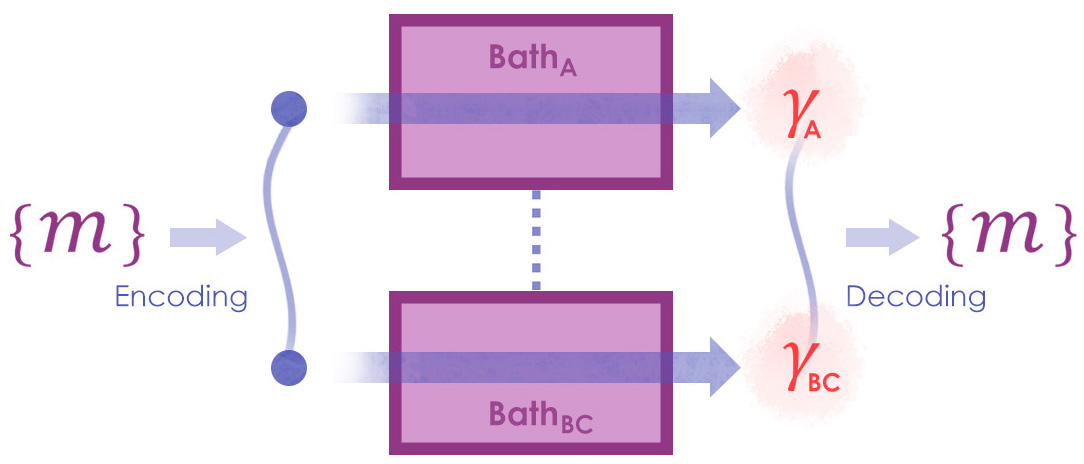} }
\caption{
Schematically, Theorem~\ref{Result:GeneralizedEPLT} implies that when local baths are classically correlated (the dashed line), it is possible to transmit classical information through locally performed full thermalization processes.
}
\label{Fig:EPLT-CC} 
\end{figure}

\section{Conclusion}\label{Sec:Conclusion}
We study classical communication scenarios with free operations of a given resource as the information processor.
The one-shot classical capacity is upper bounded by resource preservability~\cite{Hsieh2020-1} plus a term of resourceless contribution.
This upper bound provides an alternative interpretation of resource preservability. 
Furthermore, when asymmetry is the resource, a lower bound can also be obtained.

As an application, we use our result to bridge two seemingly different concepts: Under the thermalization model given by Ref.~\cite{Sparaciari2019}, for every Gibbs-preserving coherence non-generating channel, its smallest {\em channel bath size} (i.e., a bath size needed to thermalize all outputs of a given channel) plus its coherence preservability will upper bound its one-shot classical capacity.
Thus, under this setting, transmitting $n$ bits of classical information through a coherence-annihilating channel requires the channel bath size to be at least $2^n-1$, up to some small terms.
This reveals an implicit thermodynamic cost of transmitting classical information, providing a dynamical analogue of the Landauer's principle~\cite{Landauer1961} and illustrating how a dynamical resource theory allows applications to connect different dynamical phenomena.

We further apply our approach to study channel's simultaneous maintainability of orthogonality and maximal entanglements.
Formulating the question into a communication form, a capacity-like measure can be introduced and upper bounded by resource preservability.

Finally, applying the thermalization channel introduced in Ref.~\cite{Hsieh2020-2} to a bipartite setting, we show that classically correlated local baths allow a decent amount of one-shot classical capacity even when both local systems are completely thermalized.
Hence, classical information processing and thermalization processes can coexist, which only requires shared randomness as a resource.
This result also means that when a many-body system is in contact with a global bath having classical correlations within, it is possible to maintain classical information even after it has been thermalized locally.
Share randomness as a resource is enough to guarantee that certain amounts of classical information can be extracted after local thermalizations, provided that local temperatures are not too low.

Several open questions remain.
First, currently all the results related to resource preservability (e.g., Ref.~\cite{Hsieh2020-1}) are focusing on the one-shot regime, and a future direction is to understand its asymptotic behavior.
Furthermore, whether one can derive a lower bound similar to Theorem~\ref{Result:Asymmetry-Lower-Bounds} in terms of resource preservability and even extend the result to other state resources are still unknown.
These questions could be difficult and largely depend on the choice of resources, since, e.g., the lower bound on the capacity used in Ref.~\cite{Korzekwa2019} is given by the information spectrum relative entropy, which is not a contractive generalized distance measure~\cite{Leditzky-PhD} and hence cannot induce legal resource preservability monotone.
Also, whether one can obtain any result similar to Theorem~\ref{Result:ConverseBound} in the context of coherence is still unknown.
Finally, as a direct consequence of Theorem~\ref{Coro:LinkCoro}, we have the following conjecture:
{\em
Transmitting $n$ bits of classical information through a Gibbs-preserving coherence-annihilating channel requires the corresponding channel bath size to be at least $2^n-1$.
}
This conjecture could largely depend on the underlying communication setup and the thermalization model.

We hope the physical messages provided by this work can offer alternative interpretations in the interplay of dynamical resource theory, classical communication, thermodynamics, and different forms of inseparability.

\section*{Acknowledgements}
We thank Antonio Ac\'in, Stefan B$\ddot{\rm a}$uml, Dario De Santis, Yeong-Cherng Liang, Matteo Lostaglio, Mohammad Mehboudi, Gabriel Senno, and Ryuji Takagi for fruitful discussions and comments.
This project is part of the ICFOstepstone - PhD Programme for Early-Stage Researchers in Photonics, funded by the Marie Sk\l odowska-Curie Co-funding of regional, national and international programmes (GA665884) of the European Commission, as well as by the ‘Severo Ochoa 2016-2019' program at ICFO (SEV-2015-0522), funded by the Spanish Ministry of Economy, Industry, and Competitiveness (MINECO).
We also acknowledge support from the Spanish MINECO (Severo Ochoa SEV-2015-0522), Fundaci\'o Cellex and Mir-Puig, Generalitat de Catalunya (SGR1381 and CERCA Programme).

\onecolumngrid

\appendix

\section{Being Realizable Without Consuming Resource and Being Resource Non-Generating Are Not Equivalent}\label{App:NotionCostless}
In order to study channels constrained by a given static resource $R$, it is straightforward to expect these channels to be {\em free from $R$}. 
In this work, we depict this by requiring those channels to be ``unable to generate $R$.'' 
This coincides with the notion of being {\em $R$ non-generating}~\cite{RT-RMP}, and hence being free operations of $R$.
Meanwhile, there is another possible approach, which is requiring those channels to be ``realizable without consuming $R$.'' 
It turns out that this latter option is less generic and accessible compared with the former. 
This Appendix aims to briefly make this difference clear.

In a given resource theory, either static, dynamical, or a more general one, requiring an operation to be free from the resource sometimes includes two seemingly equivalent concepts implicitly; namely, being realizable without consuming the resource, and being unable to generate the resource.
While these two concepts match for some settings, in general they are not equivalent.
For instance, the resource theories of entanglement equipped with {\em local operation and classical communication} (LOCC) channels or {\em local operation plus pre-shared randomness} (LOSR) channels allow this property, and so does the resource theory of nonlocality with LOSR channels.
This is because LOCC (so do LOSR) channels can be realized without touching and using any entangled state.
Nevertheless, the resource theory of athermality demonstrates a counterexample.
In this case, the only free state is the state in thermal equilibrium, i.e., the thermal state $\gamma$.
Physically, it is impossible to realize any non-trivial channel only within thermal equilibrium (the only realizable one is the state preparation channel of $\gamma$, since one can artificially switch $\gamma$ with the input and discard the original system).
On the other hand, a commonly used free operation is the thermal operation, which takes the form $(\cdot)_{\rm A}\mapsto{\rm tr}_{\rm B}[U_{\rm AB}((\cdot)_{\rm A}\otimes\gamma_{\rm B})U_{\rm AB}^\dagger]$, where $U_{\rm AB}$ conserves the total energy (i.e., it commutes with the total Hamiltonian).
One can see that any non-trivial thermal operation (which is athermality non-generating) needs a non-trivial unitary $U_{\rm AB}$, and hence includes effects out of thermal equilibrium;
i.e., it is not realizable without consuming athermality.

Apart from resource theories of states, there are also instances in dynamical resource theories illustrating this difference.
In the resource theory for non-signaling assisted classical communication~\cite{Takagi2019-3}, the free quantities are state preparation channels, and it is impossible to output channels useful for classical communication if one only uses state preparation channels to implement free super-channels.
Similarly, in the theory of resource preservability~\cite{Hsieh2020-1}, it is again impossible to output resourceful channels when one only uses resource annihilating channels to implement free super-channels.

If one upgrades the discussion to general and abstract considerations, it can be tough to access the detailed physical structures of operations.
Consequently, the best one can do is to analyze an operation by comparing its inputs and outputs. 
This is also the only way to check whether an operation is free from the given resource.
Hence, ``zero ability to generate the resource'' ends up to be the most feasible and well-defined way to depict ``begin free'' in the most general extend when further structures and contexts are not available.
Being realizable without consuming the resource is an additional property that can be satisfied in certain cases, but this notion could be generally ill-defined.

Note that this is also why in a general, model-independent level, the definition of being resource non-generating only requires no generation of the resource for free inputs: Before introducing free operations, we cannot compare and order different resourceful states, and the only existing concept before defining free operations is ``whether the quantity is resourceful or not."
This gives us the most extensive range to clarify the notion of ``being free from the resource.'' 
It also briefly summarizes the features of central ingredients in a resource theory: Free states give us detection, free operations give us comparison, and monotones give us quantification.

Due to the above discussion, in this work, we depict a channel as constrained by a resource if it is a free operation.

\section{Assumptions on State Resource Theories for Resource Preservability Theories}\label{App:Basic-Assumptions}
To have a general study that is also analytically feasible, we need to impose certain assumptions on the state resource theories considered in this work.
Let $(R,\mathcal{F}_R,\mathcal{O}_R)$ be a given state resource theory.
Then we consider
\begin{enumerate}
\item\label{Assumption:Identity} Identity channel and partial trace are both free operations; namely, they are both in $\mathcal{O}_R$.
\item\label{Assumption:Convex} Free operations are closed under tensor products, convex sums, and compositions: For every $\mE,\mE'\in\mathcal{O}_R$ and $p\in[0,1]$, we have $\mE\otimes\mE'\in\mathcal{O}_R$, $p\mE + (1-p)\mE'\in\mathcal{O}_R$, and $\mE\circ\mE'\in\mathcal{O}_R$.
\item\label{Assumption:AbsolutelyFree} For every system ${\rm S'}$ there exists a state $\sigma_{\rm S'}$ such that $(\cdot)_{\rm S}\mapsto(\cdot)_{\rm S}\otimes\sigma_{\rm S'}$ is a free operation.
\end{enumerate}
Assumptions~\ref{Assumption:Identity} and~\ref{Assumption:Convex} are always assumed in this work in order to capture the necessary properties of a monotone, and we leave Assumption~\ref{Assumption:AbsolutelyFree} optional.
This is slightly different from Ref.~\cite{Hsieh2020-1}, and our motivation is to relax the assumptions made by Ref.~\cite{Hsieh2020-1} to achieve a general consideration admitting more applicable cases.
We briefly explain each assumptions.
Assumption~\ref{Assumption:Identity} follows from our conceptual expectation; that is, ``doing nothing'' and ``ignoring part of the system'' are both unable to generate $R$.
Assumption~\ref{Assumption:Convex} implies that if two channels are unable to generate $R$, then neither can their simultaneous application (tensor product), classical mixture (convex sum), and sequential application (composition). 
Finally, Assumption~\ref{Assumption:AbsolutelyFree} ensures that there always exists a ``free extension,'' which automatically implies the state $\sigma_{\rm S'}$ is free an hence $\mathcal{F}_R\neq\emptyset$ (to see this, consider ${\rm tr}_{\rm S}$ and use Assumptions~\ref{Assumption:Identity} and~\ref{Assumption:Convex}).
Note that Assumption~\ref{Assumption:AbsolutelyFree} is only imposed on systems with proper system sizes.
For example, in the resource theory of entanglement, steering, and nonlocality, ${\rm S'}$ must be bipartite (and we always assume equal local dimension); in the resource theory of athermality, ${\rm S'}$ can only have dimension $d^k$ with some positive integer $k$, where $d$ is the dimension of the given thermal state.

Many known resource theories share these assumptions.
For instance, Assumptions~\ref{Assumption:Identity},~\ref{Assumption:Convex}, and~\ref{Assumption:AbsolutelyFree} are satisfied by the sets of LOCC channels, LOSR channels, Gibbs-preserving maps, and $G$-covariant channels (in multipartite cases, we consider the group $G^{\otimes k}\coloneqq\{\bigotimes_{i=1}^kU_i\,|\,U_i\in G\;\forall i\}$).
Note that Assumption~\ref{Assumption:AbsolutelyFree} holds since for every system ${\rm S'}$ with dimension $d_{\rm S'}$ the mapping $(\cdot)\mapsto(\cdot)\otimes\frac{\id_{\rm S'}}{d_{\rm S'}}$ is an LOSR and $G$-covariant channel.
The case of Gibbs-preserving maps (with the thermal state $\gamma$) follows from the fact that $(\cdot)\mapsto(\cdot)\otimes\gamma^{\otimes k}$ is Gibbs-preserving for all $k$.
This implies the validity of Assumptions~\ref{Assumption:Identity},~\ref{Assumption:Convex}, and~\ref{Assumption:AbsolutelyFree} in the following state resource theories, which covers most of the cases studied in this work:
(i) entanglement and free entanglement~\cite{Horodecki1998} equipped with LOCC or LOSR channels, (ii) nonlocality, steering, multi-copy nonlocality, and multi-copy steering equipped with LOSR channels (see Appendix~\ref{App:LOSR-Nonlocality-Steering} for a discussion), (iii) $G$-asymmetry equipped with $G$-covariant channels, and (iv) athermality equipped with Gibbs-preserving maps.

It turns out that, by using Assumptions~\ref{Assumption:Identity},~\ref{Assumption:Convex}, and~\ref{Assumption:AbsolutelyFree}, we can prove a generalized version of Theorem 2 in Ref.~\cite{Hsieh2020-1}, which is summarized as follows:
\begin{atheorem}\label{Thm:GeneralizedThm2}{\em\cite{Hsieh2020-1}}
$(R,\mathcal{F}_R,\mathcal{O}_R)$ is a state resource theory satisfying Assumptions~\ref{Postulate:Identity} and~\ref{Postulate:Composition}.
$D$ is a contractive generalized distance measure of states.
Then $P_D$ satisfies
\begin{enumerate}
\item $P_D(\mathcal{N})\ge0$ and $P_D(\mathcal{N}) = 0$ if $\mathcal{N}\in\mathcal{O}_R^N$.
\item $P_D[\mathfrak{F}(\mE)]\le P_D(\mE)$ for every channel $\mE$ and free super-channel $\mathfrak{F}\in\mathbb{F}_R$.
\end{enumerate}
If Assumption~\ref{Assumption:AbsolutelyFree} holds, then we have
\begin{align}
P_D(\mathcal{N}\otimes\mathcal{N}')\ge P_D(\mathcal{N})
\end{align}
for every $\mathcal{N},\mathcal{N}'\in\mathcal{O}_R$, and the equality holds if $\mathcal{N}'\in\wt{\mathcal{O}}_R^N$.
\end{atheorem}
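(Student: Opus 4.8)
The plan is to dispatch the two universal properties first and then the tensor (in)equality under the optional Assumption~\ref{Assumption:AbsolutelyFree}, with everything reduced to data-processing plus one closure lemma. Non-negativity of $P_D$ is immediate, since $D$ is non-negative, so each $D^R(\mathcal{N},\Lambda)$ is a supremum of non-negative numbers and the infimum over $\Lambda\in\mathcal{O}_R^N$ inherits this. Faithfulness is equally direct: if $\mathcal{N}\in\mathcal{O}_R^N$ I would take $\Lambda=\mathcal{N}$ as a feasible point in the infimum defining $P_D(\mathcal{N})$, so that $D^R(\mathcal{N},\mathcal{N})=\sup_{\wt{\Lambda},\rho}D[(\mathcal{N}\otimes\wt{\Lambda})(\rho),(\mathcal{N}\otimes\wt{\Lambda})(\rho)]=0$ because $D(x,x)=0$; with non-negativity this forces $P_D(\mathcal{N})=0$.

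For the monotonicity bullet, fix a free super-channel $\mathfrak{F}(\mE)=\Lambda_+\circ(\mE\otimes\wt{\Lambda})\circ\Lambda_-$ with $\Lambda_\pm\in\mathcal{O}_R$ and $\wt{\Lambda}\in\wt{\mathcal{O}}_R^N$, together with an arbitrary competitor $\Lambda\in\mathcal{O}_R^N$ for $P_D(\mE)$. The guiding idea is that $\mathfrak{F}(\Lambda)$ is an admissible competitor for $P_D[\mathfrak{F}(\mE)]$. I would first check $\mathfrak{F}(\Lambda)\in\mathcal{O}_R^N$: it lies in $\mathcal{O}_R$ by Assumption~\ref{Assumption:Convex}, and since $\wt{\Lambda}$ is absolutely resource annihilating the block $\Lambda\otimes\wt{\Lambda}$ outputs only free states, so composing with the free $\Lambda_+$ preserves freeness. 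Then, for any ancillary absolutely-resource-annihilating $\wt{\Lambda}'$ and joint input $\sigma$ in the supremum defining $D^R(\mathfrak{F}(\mE),\mathfrak{F}(\Lambda))$, I would use the interchange law to factor $\mathfrak{F}(\mE)\otimes\wt{\Lambda}'=(\Lambda_+\otimes\mathcal{I})\circ(\mE\otimes\wt{\Lambda}\otimes\wt{\Lambda}')\circ(\Lambda_-\otimes\mathcal{I})$ (and likewise with $\Lambda$), absorb $\Lambda_-\otimes\mathcal{I}$ into a new input $\tau\coloneqq(\Lambda_-\otimes\mathcal{I})(\sigma)$, and apply the data-processing inequality for $D$ under the channel $\Lambda_+\otimes\mathcal{I}$.

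The crux is the observation that $\wt{\mathcal{O}}_R^N$ is closed under tensor products, i.e.\ $\wt{\Lambda}\otimes\wt{\Lambda}'\in\wt{\mathcal{O}}_R^N$ whenever $\wt{\Lambda},\wt{\Lambda}'\in\wt{\mathcal{O}}_R^N$, which follows from two applications of the defining property of $\wt{\mathcal{O}}_R^N$ together with associativity of the tensor product. Granting this, the data-processed expression $D[(\mE\otimes\wt{\Lambda}\otimes\wt{\Lambda}')(\tau),(\Lambda\otimes\wt{\Lambda}\otimes\wt{\Lambda}')(\tau)]$ is exactly of the form appearing in $D^R(\mE,\Lambda)$, with absolutely-resource-annihilating ancilla channel $\wt{\Lambda}\otimes\wt{\Lambda}'$ and input $\tau$, hence bounded by $D^R(\mE,\Lambda)$. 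Taking the supremum over $(\wt{\Lambda}',\sigma)$ yields $D^R(\mathfrak{F}(\mE),\mathfrak{F}(\Lambda))\le D^R(\mE,\Lambda)$, and the infimum over $\Lambda\in\mathcal{O}_R^N$ then gives $P_D[\mathfrak{F}(\mE)]\le P_D(\mE)$. For the tensor statement, I would prove both inequalities by exhibiting free super-channels and invoking the monotonicity just established. For $P_D(\mathcal{N}\otimes\mathcal{N}')\ge P_D(\mathcal{N})$, consider $\mathfrak{G}(\mathcal{M})\coloneqq{\rm tr}_{\rm A'}\circ\mathcal{M}\circ\mathcal{A}_\sigma$, where $\mathcal{A}_\sigma\colon(\cdot)_{\rm S}\mapsto(\cdot)_{\rm S}\otimes\sigma_{\rm A}$ is the free extension from Assumption~\ref{Assumption:AbsolutelyFree} and ${\rm tr}_{\rm A'}$ is the free partial trace from Assumption~\ref{Assumption:Identity}; since $\mathcal{N}'$ is trace preserving, $\mathfrak{G}(\mathcal{N}\otimes\mathcal{N}')=\mathcal{N}$, so monotonicity gives $P_D(\mathcal{N})=P_D[\mathfrak{G}(\mathcal{N}\otimes\mathcal{N}')]\le P_D(\mathcal{N}\otimes\mathcal{N}')$. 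When $\mathcal{N}'\in\wt{\mathcal{O}}_R^N$, the map $\mathcal{M}\mapsto\mathcal{M}\otimes\mathcal{N}'$ is itself a free super-channel (take $\Lambda_\pm=\mathcal{I}$ and $\wt{\Lambda}=\mathcal{N}'$), so monotonicity yields the reverse bound $P_D(\mathcal{N}\otimes\mathcal{N}')\le P_D(\mathcal{N})$, and the two combine to equality.

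The main obstacle I anticipate is precisely the closure of $\wt{\mathcal{O}}_R^N$ under tensor products: it is what makes the data-processing reduction land back inside the feasible set defining $D^R(\mE,\Lambda)$, since it lets the smoothing ancilla $\wt{\Lambda}'$ introduced by the definition of $D^R$ merge with the super-channel's internal $\wt{\Lambda}$ into a single admissible absolutely-resource-annihilating channel. Everything else amounts to careful bookkeeping of tensor/composition identities and repeated appeals to data-processing, all routine once Assumptions~\ref{Assumption:Identity} and~\ref{Assumption:Convex} guarantee that the relevant appending, discarding, and composite maps remain inside $\mathcal{O}_R$.
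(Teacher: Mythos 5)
Your proposal is correct, and it is organized differently from the paper's proof, so a comparison is worthwhile. The paper defers items 1 and 2 (and the equality case) to Ref.~\cite{Hsieh2020-1}, and only writes out the tensor inequality $P_D(\mathcal{N}\otimes\mathcal{N}')\ge P_D(\mathcal{N})$, which it proves by a direct chain: data-processing under ${\rm tr}_{\rm S'}$, restricting the input supremum to sub-optimal product states $\rho_{\rm SA}\otimes\sigma_{\rm S'}$ with $\sigma_{\rm S'}$ the state guaranteed by Assumption~\ref{Assumption:AbsolutelyFree}, and observing that $({\rm tr}_{\rm S'}\circ\Lambda_{\rm SS'})[(\cdot)\otimes\sigma_{\rm S'}]\in\mathcal{O}_R^N$. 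You instead prove items 1 and 2 in full --- your key lemma, that $\wt{\mathcal{O}}_R^N$ is closed under tensor products (two applications of the defining property plus associativity), is exactly the engine behind the monotonicity the paper cites, and your competitor argument $\Lambda\mapsto\mathfrak{F}(\Lambda)$ with data-processing under $\Lambda_+\otimes\mathcal{I}$ is valid for arbitrary channels $\mE$, as the statement requires --- and you then obtain part 3 as a corollary of part 2 by exhibiting two free super-channels: $\mathfrak{G}(\mathcal{M})\coloneqq{\rm tr}_{\rm A'}\circ\mathcal{M}\circ\mathcal{A}_\sigma$ for the ``$\ge$'' direction, and $\mathcal{M}\mapsto\mathcal{M}\otimes\mathcal{N}'$ for the ``$\le$'' direction when $\mathcal{N}'\in\wt{\mathcal{O}}_R^N$. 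Unfolding your $\mathfrak{G}$ through your own proof of part 2 reproduces the paper's chain line by line (data-processing with ${\rm tr}_{\rm S'}$ corresponds to $\Lambda_+$, the product-input restriction corresponds to absorbing $\Lambda_-=\mathcal{A}_\sigma$, and $\mathfrak{G}(\Lambda_{\rm SS'})$ is precisely the paper's competitor channel), so the mathematical content coincides. What your packaging buys is the clean structural statement that the tensor inequality is a formal consequence of monotonicity once Assumption~\ref{Assumption:AbsolutelyFree} makes the appending map free; what the paper's explicit chain buys is that it never has to interpret $\mathfrak{G}$ as an element of $\mathbb{F}_R$ at all.

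That last point is the one technicality in your argument: $\mathfrak{G}$ as you wrote it has no ancillary $\wt{\Lambda}$, whereas elements of $\mathbb{F}_R$ are defined with some $\wt{\Lambda}\in\wt{\mathcal{O}}_R^N$ present. This is harmless and can be patched in either of two ways: allow the trivial (one-dimensional) ancillary system, consistent with the paper's explicit convention elsewhere that trivial ancillas are admissible; or insert a dummy $\wt{\Lambda}=\Phi_\sigma\colon(\cdot)\mapsto\sigma\,{\rm tr}(\cdot)$, the preparation channel of the Assumption~\ref{Assumption:AbsolutelyFree} state, which lies in $\wt{\mathcal{O}}_R^N$ under Assumptions~\ref{Assumption:Identity}, \ref{Assumption:Convex}, and~\ref{Assumption:AbsolutelyFree}, and trace its output out in $\Lambda_+$. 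Either patch leaves your argument unchanged, so there is no genuine gap.
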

\begin{proof}
Apart from Eq.~(49) in Ref.~\cite{Hsieh2020-1}, the proof is the same with the one of Theorem 2 in Ref.~\cite{Hsieh2020-1} (see Eqs.~(48) and~(50) in Ref.~\cite{Hsieh2020-1}).
Note that Eq.~(48) in Ref.~\cite{Hsieh2020-1} works for every channel, which explains the validity of statement 2 in this theorem.
It remains to show Eq.~(7) in Ref.~\cite{Hsieh2020-1}, which can be seen by the following alternative proof:
\begin{align}
P_D(\mE_{\rm S}\otimes\mE_{\rm S'})& = \inf_{\Lambda_{\rm SS'}\in\mathcal{O}_R^N}\sup_{\rm A}D\left[(\mE_{\rm S}\otimes\mE_{\rm S'}\otimes\wt{\Lambda}_{\rm A})(\rho_{\rm SS'A}),(\Lambda_{\rm SS'}\otimes\wt{\Lambda}_{\rm A})(\rho_{\rm SS'A})\right]\nonumber\\
&\ge \inf_{\Lambda_{\rm SS'}\in\mathcal{O}_R^N}\sup_{\rm A}D\left[(\mE_{\rm S}\otimes\wt{\Lambda}_{\rm A})(\rho_{\rm SA}),[({\rm tr}_{\rm S'}\circ\Lambda_{\rm SS'})\otimes\wt{\Lambda}_{\rm A}](\rho_{\rm SS'A})\right]\nonumber\\
&\ge\inf_{\Lambda_{\rm SS'}\in\mathcal{O}_R^N}\sup_{\rm A}D\left[(\mE_{\rm S}\otimes\wt{\Lambda}_{\rm A})(\rho_{\rm SA}),[({\rm tr}_{\rm S'}\circ\Lambda_{\rm SS'})\otimes\wt{\Lambda}_{\rm A}](\rho_{\rm SA}\otimes\sigma_{\rm S'})\right]\nonumber\\
&\ge\inf_{\Lambda_{\rm S}\in\mathcal{O}_R^N}\sup_{\rm A}D\left[(\mE_{\rm S}\otimes\wt{\Lambda}_{\rm A})(\rho_{\rm SA}),(\Lambda_{\rm S}\otimes\wt{\Lambda}_{\rm A})(\rho_{\rm SA})\right]\nonumber\\
& = P_D(\mE_{\rm S}).
\end{align}
The second line follows from the data-processing inequality and the fact that $\mE_{\rm S}\otimes\mE_{\rm S'}$ has a well-defined marginal in ${\rm S}$~\cite{Acknowledgement}.
In the third line, $\rho_{\rm SA}\otimes\sigma_{\rm S'}$ forms a sub-optimal range of the maximization, where $\sigma_{\rm S'}$ is the state guaranteed by Assumption~\ref{Assumption:AbsolutelyFree} that allows the map $(\cdot)\mapsto(\cdot)\otimes\sigma_{\rm S'}$ to be a free operation of $R$.
Together with Assumptions~\ref{Assumption:Identity} and~\ref{Assumption:Convex}, we learn that $({\rm tr}_{\rm S'}\circ\Lambda_{\rm SS'})[(\cdot)\otimes\sigma_{\rm S'}]\in\mathcal{O}_R^N$ is a resource annihilating channel, which forms a sub-optimal range of the minimization and implies the fourth line.
Hence, Assumptions~\ref{Postulate:Identity},~\ref{Postulate:Composition}, and~\ref{Assumption:AbsolutelyFree} are enough to ensure the correcteness of Theorem 2 in Ref.~\cite{Hsieh2020-1}.
\end{proof}
Theorem~\ref{Thm:GeneralizedThm2} generalizes Theorem 2 in Ref.~\cite{Hsieh2020-1} by relaxing the assumptions of absolutely free states (i.e., the assumptions (R1) and (R3) in Ref.~\cite{Hsieh2020-1}) into Assumption~\ref{Assumption:AbsolutelyFree}.
Furthermore, there is no need to assume the convexity of $\mathcal{F}_R$. 
Another remark is that non-increasing under free super-channel actually works for every channel, including channels that are not free operations.
This is a useful observation when one needs to consider the smooth version of $R$-preservability, e.g., in the next sub-section.

\subsection{Properties of $P_D^\delta$}\label{App:Proof-Eq:Corollary}
We remark that $P_D^\delta$, which can be interpreted as the smooth version of $P_D$, still possesses the expected properties that a monotone should have.
First, if $\mathcal{N}\in\mathcal{O}_R^N$, then we have $P_D^\delta(\mathcal{N})\coloneqq\inf_{\norm{\mathcal{N} - \mathcal{N}'}_\diamond\le2\delta}P_{D}(\mathcal{N}') \le P_D(\mathcal{N}) = 0$.
The non-increasing property under free super-channels can be summarized in the following lemma:
\begin{alemma}\label{Lemma:SmoothNonIncreasing}
For every channel $\mathcal{N}$, $0\le\delta\le1$, and $\mathfrak{F}\in\mathbb{F}_R$, we have
\begin{align}
P_D^\delta[\mathfrak{F}(\mathcal{N})]\le P_D^\delta(\mathcal{N}).
\end{align}
\end{alemma}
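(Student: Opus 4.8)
The plan is to reduce the smooth statement to the non-smooth monotonicity $P_D[\mathfrak{F}(\mathcal{E})]\le P_D(\mathcal{E})$ already established in Theorem~\ref{Thm:GeneralizedThm2} (statement~2). The crucial point I would exploit is the remark following that proof: the inequality $P_D[\mathfrak{F}(\mathcal{E})]\le P_D(\mathcal{E})$ in fact holds for \emph{every} channel $\mathcal{E}$, including those that are not free operations. The only extra ingredient then needed is that a free super-channel does not increase diamond-norm distances between channels.

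First I would record the structural fact that every $\mathfrak{F}\in\mathbb{F}_R$ acts linearly on its input. Since $\mathfrak{F}(\mathcal{E}) = \Lambda_+\circ(\mathcal{E}\otimes\wt{\Lambda})\circ\Lambda_-$ with $\Lambda_\pm\in\mathcal{O}_R$ and $\wt{\Lambda}\in\wt{\mathcal{O}}_R^N$ all being CPTP channels, for any two channels $\mathcal{N},\mathcal{N}'$ we obtain $\mathfrak{F}(\mathcal{N}) - \mathfrak{F}(\mathcal{N}') = \Lambda_+\circ\bigl[(\mathcal{N}-\mathcal{N}')\otimes\wt{\Lambda}\bigr]\circ\Lambda_-$. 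I would then bound its diamond norm using submultiplicativity of $\norm{\cdot}_\diamond$ under composition together with $\norm{\Lambda_\pm}_\diamond=1$, and multiplicativity of $\norm{\cdot}_\diamond$ under tensor products together with $\norm{\wt{\Lambda}}_\diamond=1$, which yields the contraction $\norm{\mathfrak{F}(\mathcal{N}) - \mathfrak{F}(\mathcal{N}')}_\diamond\le\norm{\mathcal{N}-\mathcal{N}'}_\diamond$. These are standard properties of the completely bounded trace norm, so this step is short.

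With the contraction in hand the smoothing argument is immediate. Given any channel $\mathcal{N}'$ in the feasible set of the infimum defining $P_D^\delta(\mathcal{N})$, i.e. $\norm{\mathcal{N}-\mathcal{N}'}_\diamond\le2\delta$, the contraction gives $\norm{\mathfrak{F}(\mathcal{N}) - \mathfrak{F}(\mathcal{N}')}_\diamond\le2\delta$, so $\mathfrak{F}(\mathcal{N}')$ (again a channel, since $\mathfrak{F}$ maps channels to channels) is feasible for the infimum defining $P_D^\delta[\mathfrak{F}(\mathcal{N})]$. Hence $P_D^\delta[\mathfrak{F}(\mathcal{N})]\le P_D[\mathfrak{F}(\mathcal{N}')]\le P_D(\mathcal{N}')$, where the second inequality is Theorem~\ref{Thm:GeneralizedThm2} applied to the generally non-free channel $\mathcal{N}'$. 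Taking the infimum over all feasible $\mathcal{N}'$ turns the right-hand side into $P_D^\delta(\mathcal{N})$, which closes the argument.

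The only place demanding care — and the step I expect to be the main obstacle — is verifying the diamond-norm contraction with full rigor: one must confirm that tensoring the Hermiticity-preserving, trace-annihilating map $\mathcal{N}-\mathcal{N}'$ with the channel $\wt{\Lambda}$ leaves its diamond norm unchanged, and that pre- and post-composition with the channels $\Lambda_\pm$ does not increase it. Since these are textbook facts about the completely bounded trace norm, the difficulty is conceptual bookkeeping rather than a genuine technical barrier, and everything else is a one-line feasibility argument.
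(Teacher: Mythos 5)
Your proposal is correct and follows essentially the same route as the paper's own proof: both establish the diamond-norm contraction $\norm{\mathfrak{F}(\mathcal{N})-\mathfrak{F}(\mathcal{N}')}_\diamond\le\norm{\mathcal{N}-\mathcal{N}'}_\diamond$ from the structure $\mathfrak{F}(\cdot)=\Lambda_+\circ((\cdot)\otimes\wt{\Lambda})\circ\Lambda_-$, and then combine it with the non-smooth monotonicity of Theorem~\ref{Thm:GeneralizedThm2} (valid for arbitrary, not necessarily free, channels) via a feasibility argument over the $2\delta$-ball. The only cosmetic difference is that you invoke standard multiplicativity/submultiplicativity properties of the completely bounded trace norm where the paper derives the contraction directly from the data-processing inequality, and you phrase the last step pointwise before taking the infimum rather than as a chain of infima; the logic is the same.
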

\begin{proof}
We note the following estimate first:
\begin{align}\label{Eq:EstimateAncillaryLambda}
\norm{(\mathcal{N}-\mathcal{N}')\otimes\wt{\Lambda}}_\diamond&\le\norm{(\mathcal{N}-\mathcal{N}')\otimes\mathcal{I}}_\diamond\nonumber\\
&\le\norm{\mathcal{N}-\mathcal{N}'}_\diamond.
\end{align}
The first inequality follows from the data processing inequality, or equivalently, the contractivity of the trace norm; the second ineuqality follows from the definition of the diamond norm.
Recall that $\mathfrak{F}\in\mathbb{F}_R$ will take the form $\mathfrak{F}(\mathcal{N}) = \Lambda_+\circ(\mathcal{N}\otimes\wt{\Lambda})\circ\Lambda_-$. 
We conclude that
\begin{align}\label{Eq:ComputationMathfrak}
\norm{\mathfrak{F}(\mathcal{N}) - \mathfrak{F}(\mathcal{N}')}_\diamond& = \norm{\Lambda_+\circ\left[(\mathcal{N} - \mathcal{N}')\otimes\wt{\Lambda}\right]\circ\Lambda_-}_\diamond\nonumber\\
&\le\norm{\left[(\mathcal{N} - \mathcal{N}')\otimes\wt{\Lambda}\right]\circ\Lambda_-}_\diamond\nonumber\\
&\coloneqq\sup_{{\rm A};\rho_{\rm SA}}\norm{\left[(\mathcal{N} - \mathcal{N}')\otimes\wt{\Lambda}\otimes\mathcal{I}_{\rm A}\right]\circ(\Lambda_-\otimes\mathcal{I}_{\rm A})(\rho_{\rm SA})}_1\nonumber\\
&\le\norm{(\mathcal{N} - \mathcal{N}')\otimes\wt{\Lambda}}_\diamond\nonumber\\
&\le\norm{\mathcal{N} - \mathcal{N}'}_\diamond,
\end{align}
where the second line follows from data-processing inequality, the fourth line is because $(\Lambda_-\otimes\mathcal{I}_{\rm A})(\rho_{\rm SA})$ induces a sub-optimal range of the maximization in the definition of diamond norm.
In the last line we use Eq.~\eqref{Eq:EstimateAncillaryLambda}.
Now, direct computation shows
\begin{align}
P_D^\delta(\mathcal{N})&\coloneqq\inf_{\norm{\mathcal{N} - \mathcal{N}'}_\diamond\le2\delta}P_{D}(\mathcal{N}')\nonumber\\
&\ge\inf_{\norm{\mathfrak{F}(\mathcal{N}) - \mathfrak{F}(\mathcal{N}')}_\diamond\le2\delta}P_{D}(\mathcal{N}')\nonumber\\
&\ge\inf_{\norm{\mathfrak{F}(\mathcal{N}) - \mathfrak{F}(\mathcal{N}')}_\diamond\le2\delta}P_{D}[\mathfrak{F}(\mathcal{N}')]\nonumber\\
&\ge\inf_{\norm{\mathfrak{F}(\mathcal{N}) - \mathcal{N}''}_\diamond\le2\delta}P_{D}(\mathcal{N}'')\nonumber\\
& = P_D^\delta[\mathfrak{F}(\mathcal{N})].
\end{align}
From Eq.~\eqref{Eq:ComputationMathfrak} we learn that all channels $\mathcal{N}'$ satisfying $\norm{\mathcal{N} - \mathcal{N}'}_\diamond\le2\delta$ form a subset of all channels $\mathcal{N}'$ satisfying $\norm{\mathfrak{F}(\mathcal{N}) - \mathfrak{F}(\mathcal{N}')}_\diamond\le2\delta$.
This explains the second line.
The third line follows from Theorem~\ref{Thm:GeneralizedThm2} (note that $\mathcal{N}'$ could be outside $\mathcal{O}_R$, but it is still a channel).
In the fourth line, we have the set of all channels of the form $\mathfrak{F}(\mathcal{N}')$ be a subset of the set of all channels.
This shows the desired claim.
\end{proof}

\subsection{LOSR Channels as Free Operations of Nonlocality, Steering, Multi-Copy Nonlocality, and Multi-Copy Steering}\label{App:LOSR-Nonlocality-Steering}
Appendix A.1 in Ref.~\cite{Hsieh2020-1} explains that LOSR channels can be free operations of nonlocality.
Here, we briefly show that LOSR channels can also be free operations of three other different forms of inseparabilities: Steering~\cite{Wiseman2007,Jones2007,steering-review,RMP-steering}, multi-copy nonlocality~\cite{Palazuelos2012,Cavalcanti2013}, and multi-copy steering~\cite{Hsieh2016,Quintino2016}.
Formally, a LOSR channel in a given bipartite system ${\rm AB}$ is given by the following form:
\begin{align}\label{Eq:LOSR}
\mE\coloneqq\int(\mE^{\rm A}_\lambda\otimes\mE^{\rm B}_\lambda)p_\lambda d\lambda,
\end{align}
where the integration is taken over the parameter $\lambda$.
Physically, it is a convex mixture of local dynamics.
Now, in the given bipartite system ${\rm AB}$, a state is {\em unsteerable from ${\rm A}$ to ${\rm B}$}~\cite{Wiseman2007,Jones2007,steering-review,RMP-steering}, or simply ${\rm A\to B}$ {\em unsteerable}, if for every local POVMs $\{E_{a|x}^{\rm A}\}$ in ${\rm A}$ and $\{E_{b|y}^{\rm B}\}$ in ${\rm B}$, one can write
\begin{align}\label{Eq:LHV}
{\rm tr}\left[\left(E_{a|x}^{\rm A}\otimes E_{b|y}^{\rm B}\right)\rho\right] = \int_{\lambda\in\Lambda_{\rm LHS}}P(a|x,\lambda){\rm tr}\left(E_{b|y}^{\rm B}\sigma_\lambda\right)p_\lambda d\lambda
\end{align}
for some variable $\lambda$ in a set $\Lambda_{\rm LHS}$, some probability distributions $P(a|x,\lambda),p_\lambda$, and some local states $\sigma_\lambda$ in ${\rm B}$.
In other words, a state is ${\rm A\to B}$ unsteerable if every outcome of local measurements in ${\rm B}$ is indistinguishable from the outputs of pre-shared randomness combined with local quantum theory in ${\rm B}$.
Such models are called {\em local hidden state models}~\cite{Wiseman2007,Jones2007,steering-review,RMP-steering}, as depicted by $\Lambda_{\rm LHS}$.
States that are not ${\rm A\to B}$ unsteerable are said to be ${\rm A\to B}$ {\em steerable}.

To see why LOSR channels can be free operations for steering, consider an LOSR channel $\mE\coloneqq\int_\nu \left(\mE_\nu^{\rm A}\otimes\mE_\nu^{\rm B}\right)q_\nu d\nu$ and the following computation
\begin{align}\label{Eq:ComputationLHV}
{\rm tr}\left[\left(E_{a|x}^{\rm A}\otimes E_{b|y}^{\rm B}\right)\mE(\rho)\right] &= \int_\nu {\rm tr}\left[\left(E_{a|x}^{\rm A}\otimes E_{b|y}^{\rm B}\right)\left(\mE_\nu^{\rm A}\otimes\mE_\nu^{\rm B}\right)(\rho)\right]q_\nu d\nu\nonumber\\
& = \int_\nu {\rm tr}\left[\left(\mE_\nu^{\rm A,\dagger}\left(E_{a|x}^{\rm A}\right)\otimes \mE_\nu^{\rm B,\dagger}\left(E_{b|y}^{\rm B}\right)\right)\rho\right]q_\nu d\nu,
\end{align}
where $\mE_\nu^{\rm A,\dagger}\left(E_{a|x}^{\rm A}\right)$ and $\mE_\nu^{\rm B,\dagger}\left(E_{b|y}^{\rm B}\right)$ again form local POVMs since $\mE_\nu^{\rm A,\dagger},\mE_\nu^{\rm B,\dagger}$ are completely-positive unital maps.
Hence, when $\rho$ is ${\rm A\to B}$ unsteerable, it means, for every $\nu$, we can write ${\rm tr}\left[\left(\mE_\nu^{\rm A,\dagger}\left(E_{a|x}^{\rm A}\right)\otimes \mE_\nu^{\rm B,\dagger}\left(E_{b|y}^{\rm B}\right)\right)\rho\right]$ as Eq.~\eqref{Eq:LHV}.
This means the output of Eq.~\eqref{Eq:ComputationLHV} is again described by Eq.~\eqref{Eq:LHV}.

With the notions of nonlocality and steering, we say a state $\rho$ is {\em multi-copy nonlocal}~\cite{Palazuelos2012} (and, similarly, {\em multi-copy ${\rm A\to B}$ steerable}~\cite{Hsieh2016,Quintino2016}) if $\rho^{\otimes k}$ is nonlocal (${\rm A\to B}$ steerable) for some positive integer $k$.
One can see that LOSR channels again act as free operations for these two resources.
To see this, it suffices to observe that if $\mE$ is an LOSR channel in a given bipartition, then $\mE^{\otimes k}$ will again be an LOSR channel in the same bipartition.
More precisely, consider an LOSR channel $\mE_{\rm AB}$ in ${\rm AB}$ bipartition. 
Suppose $\rho$ is multi-copy local (${\rm A\to B}$ unsteerable) in this bipartition; namely, $\rho^{\otimes k}$ is local (${\rm A\to B}$ unsteerable) for all $k$. 
Then, for all $k$, $\left[\mE_{\rm AB}(\rho)\right]^{\otimes k} = \mE_{\rm AB}^{\otimes k}\left(\rho^{\otimes k}\right)$ must be local (${\rm A\to B}$ unsteerable) since $\rho^{\otimes k}$ is local (${\rm A\to B}$ unsteerable) and $\mE_{\rm AB}^{\otimes k}$ is again an LOSR channel in the ${\rm AB}$ bipartition.
This shows that LOSR channels can be free operations of multi-copy nonlocality and multi-copy steering.

\section{Proofs of Theorem~\ref{Result:Upper-Bound} and Corollary~\ref{Coro:CCUpper-Bound}}\label{App:Proof-Result:Upper-Bound}
First, we note the following lemma similar to Fact E.2 in Ref.~\cite{Hsieh2020-1}.
This will enable us to obtain an equivalent representation of $P_{D_{\rm max}}$ defined in Eq.~\eqref{Eq:P_D}.
In what follows, the maximization $\sup_{\wt{\Lambda}_{\rm A},\rho_{\rm SA}}$ is taken over all ancillary systems ${\rm A}$, absolutely resource annihilating channels $\wt{\Lambda}_{\rm A}$, and joint input states $\rho_{\rm SA}$.
Note that the maximization includes the trivial ancillary system (i.e.,  the one with dimension 1), which means it also covers the case when there is no ancillary system.
\begin{alemma}\label{Lemma:AlternativeRepresentation}
Given two channels $\mathcal{N}$ and $\mathcal{E}$, then we have
\begin{align}
\sup_{\wt{\Lambda}_{\rm A},\rho_{\rm SA}}\inf\left\{\lambda\ge0\,|\,0\le[(\lambda\mathcal{E}-\mathcal{N})\otimes\wt{\Lambda}_{\rm A}](\rho_{\rm SA})\right\} = \inf\left\{\lambda\ge0\,|\,0\le[(\lambda\mathcal{E}-\mathcal{N})\otimes\wt{\Lambda}_{\rm A}](\rho_{\rm SA})\;\forall{\rm A},\wt{\Lambda}_{\rm A},\rho_{\rm SA}\right\}.
\end{align}
\end{alemma}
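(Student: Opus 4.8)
The plan is to exploit the affine (and in fact monotone) dependence of the defining positivity constraint on $\lambda$, which collapses each feasible set into a half-line; the claimed identity then reduces to the elementary order-theoretic fact that the infimum of an intersection of left-closed half-lines equals the supremum of their left endpoints.

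First I would fix an arbitrary admissible triple $({\rm A},\wt{\Lambda}_{\rm A},\rho_{\rm SA})$ and examine the feasible set
\[
S(\wt{\Lambda}_{\rm A},\rho_{\rm SA})\coloneqq\left\{\lambda\ge0\,\middle|\,0\le[(\lambda\mathcal{E}-\mathcal{N})\otimes\wt{\Lambda}_{\rm A}](\rho_{\rm SA})\right\}.
\]
The key observation is that $(\mathcal{E}\otimes\wt{\Lambda}_{\rm A})(\rho_{\rm SA})\ge0$, since $\mathcal{E}$ and $\wt{\Lambda}_{\rm A}$ are channels (hence completely positive) and $\rho_{\rm SA}$ is a state. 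The identity $[(\lambda'\mathcal{E}-\mathcal{N})\otimes\wt{\Lambda}_{\rm A}](\rho_{\rm SA}) = [(\lambda\mathcal{E}-\mathcal{N})\otimes\wt{\Lambda}_{\rm A}](\rho_{\rm SA}) + (\lambda'-\lambda)(\mathcal{E}\otimes\wt{\Lambda}_{\rm A})(\rho_{\rm SA})$ then shows that if $\lambda\in S(\wt{\Lambda}_{\rm A},\rho_{\rm SA})$ and $\lambda'\ge\lambda$ then $\lambda'\in S(\wt{\Lambda}_{\rm A},\rho_{\rm SA})$, so each such set is upward closed. Since positivity of a Hermitian operator is a closed condition and the operator depends continuously (affinely) on $\lambda$, the set is a closed half-line: writing $f(\wt{\Lambda}_{\rm A},\rho_{\rm SA})\coloneqq\inf S(\wt{\Lambda}_{\rm A},\rho_{\rm SA})$ (with the convention $\inf\emptyset=\infty$), we get $S(\wt{\Lambda}_{\rm A},\rho_{\rm SA})=[f(\wt{\Lambda}_{\rm A},\rho_{\rm SA}),\infty)$ whenever it is nonempty, and the left-hand side of the lemma is precisely $\sup_{\wt{\Lambda}_{\rm A},\rho_{\rm SA}}f(\wt{\Lambda}_{\rm A},\rho_{\rm SA})$.

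Next I would identify the right-hand side with the intersection $\bigcap_{\wt{\Lambda}_{\rm A},\rho_{\rm SA}}S(\wt{\Lambda}_{\rm A},\rho_{\rm SA})$, because a given $\lambda$ lies in the constraint set on the right exactly when the positivity condition holds simultaneously for every admissible triple. Since each factor is the half-line $[f(\wt{\Lambda}_{\rm A},\rho_{\rm SA}),\infty)$, a value $\lambda$ belongs to the intersection iff $\lambda\ge f(\wt{\Lambda}_{\rm A},\rho_{\rm SA})$ for all triples, i.e.\ iff $\lambda\ge\sup_{\wt{\Lambda}_{\rm A},\rho_{\rm SA}}f(\wt{\Lambda}_{\rm A},\rho_{\rm SA})$. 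Hence the intersection equals $[\sup_{\wt{\Lambda}_{\rm A},\rho_{\rm SA}}f,\infty)$, and taking its infimum returns $\sup_{\wt{\Lambda}_{\rm A},\rho_{\rm SA}}f$, which is exactly the left-hand side.

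The only delicate points are boundary bookkeeping rather than genuine obstacles, and I would verify that the $\inf\emptyset=\infty$ convention makes both sides agree in the degenerate cases: when no finite $\lambda$ works for some triple (so that $f=\infty$ there and both sides are $\infty$), when $\sup f=\infty$ (the intersection then contains no finite $\lambda$, so its infimum is $\infty$ by the same convention), and when $\sup f$ is finite but unattained (the left-closedness of the half-line still gives $\inf[\sup f,\infty)=\sup f$). None of these require anything beyond the positivity of $(\mathcal{E}\otimes\wt{\Lambda}_{\rm A})(\rho_{\rm SA})$, so the entire argument rests on the single monotonicity-in-$\lambda$ observation that turns each feasible set into a half-line.
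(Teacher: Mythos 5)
Your proof is correct, and it takes a cleaner, more structural route than the paper's. The paper argues via two inequalities: the easy direction follows from the inclusion $\bigcap_{\bf A}\mathcal{L}_{\bf A}\subseteq\mathcal{L}_{\bf A'}$, and the hard direction uses a sequence of near-optimal triples ${\bf A}_k$ and near-optimal feasible points $\lambda_k$, concluding that $\lambda_k+\tfrac{1}{k}$ lies in every $\mathcal{L}_{\bf A}$ and then letting $k\to\infty$ to absorb the $\tfrac{2}{k}$ error. Crucially, the paper's step ``$\inf\mathcal{L}_{\bf A}<\lambda_k+\tfrac{1}{k}$ for all ${\bf A}$ implies $\lambda_k+\tfrac{1}{k}\in\bigcap_{\bf A}\mathcal{L}_{\bf A}$'' silently invokes exactly the upward-closedness of each feasible set that you prove explicitly via the affine decomposition
\begin{align}
[(\lambda'\mathcal{E}-\mathcal{N})\otimes\wt{\Lambda}_{\rm A}](\rho_{\rm SA}) = [(\lambda\mathcal{E}-\mathcal{N})\otimes\wt{\Lambda}_{\rm A}](\rho_{\rm SA}) + (\lambda'-\lambda)(\mathcal{E}\otimes\wt{\Lambda}_{\rm A})(\rho_{\rm SA})\ge0
\end{align}
and the positivity of $(\mathcal{E}\otimes\wt{\Lambda}_{\rm A})(\rho_{\rm SA})$; so your write-up actually fills a gap the paper leaves implicit. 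Once each feasible set is identified as a half-line $[f,\infty)$, your order-theoretic observation that the intersection is $[\sup f,\infty)$ gives both directions at once, with no approximation bookkeeping, and your explicit treatment of the degenerate cases ($\inf\emptyset=\infty$, unattained or infinite suprema) covers situations the paper does not address. What the paper's $\tfrac{1}{k}$-argument buys in exchange is only that it never needs the closedness of the positivity condition (it works with strict inequalities throughout); but as your final remark notes, even that closedness is inessential to the identity itself, since the infimum of an intersection of half-lines equals the supremum of the left endpoints whether or not those endpoints are included.
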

\begin{proof}
Let $\mathcal{L}_{\bf A}\coloneqq\left\{\lambda\,|\,0\le[(\lambda\mathcal{E}-\mathcal{N})\otimes\wt{\Lambda}_{\rm A}](\rho_{\rm SA})\right\}$, where ${\bf A}\coloneqq({\rm A},\wt{\Lambda}_{\rm A},\rho_{\rm SA})$ is a specific combination of ${\rm A}$, $\wt{\Lambda}_{\rm A}$, and $\rho_{\rm SA}$.
Then the left-hand-side is $\sup_{\bf A}\inf\{\lambda\,|\,\lambda\in\mathcal{L}_{\bf A}\}$, and the right-hand-side is $\inf\left\{\lambda\,|\,\lambda\in\bigcap_{\bf A}\mathcal{L}_{\bf A}\right\}$.
The inequality ``$\le$'' follows since $\bigcap_{\bf A}\mathcal{L}_{\bf A}\subseteq\mathcal{L}_{\bf A'}$ for all ${\bf A'}$.
To show the opposite, consider an arbitrary positive integer $k$.
Then there exist ${\bf A}_k$  and $\lambda_k\in\mathcal{L}_{{\bf A}_k}$ such that
\begin{align}
&\inf\left\{\lambda\,|\,\lambda\in\mathcal{L}_{{\bf A}_k}\right\}\le \sup_{\bf A}\inf\{\lambda\,|\,\lambda\in\mathcal{L}_{\bf A}\}<\inf\left\{\lambda\,|\,\lambda\in\mathcal{L}_{{\bf A}_k}\right\} + \frac{1}{k};\\
&\lambda_k - \frac{1}{k}<\inf\left\{\lambda\,|\,\lambda\in\mathcal{L}_{{\bf A}_k}\right\}\le\lambda_k.
\end{align}
This means $\inf\{\lambda\,|\,\lambda\in\mathcal{L}_{\bf A}\}<\lambda_k + \frac{1}{k}$ for all ${\bf A}$, which further implies $\lambda_k + \frac{1}{k}\in\bigcap_{\bf A}\mathcal{L}_{\bf A}$. 
We conclude that
\begin{align}
\inf\left\{\lambda\,|\,\lambda\in\bigcap_{\bf A}\mathcal{L}_{\bf A}\right\}&\le\lambda_k + \frac{1}{k}\nonumber\\
&\le\inf\left\{\lambda\,|\,\lambda\in\mathcal{L}_{{\bf A}_k}\right\}+\frac{2}{k}\nonumber\\
&\le\sup_{\bf A}\inf\{\lambda\,|\,\lambda\in\mathcal{L}_{\bf A}\}+\frac{2}{k},
\end{align}
and the desired claim follows by considering all possible $k$.
\end{proof}
Combining Eq.~\eqref{Eq:Def-D_max} and Lemma~\ref{Lemma:AlternativeRepresentation}, we note the following:
\begin{align}\label{Eq:AlternativeFormDmaxR}
D^R_{\rm max}(\mathcal{N}\|\mE)&\coloneqq \sup_{\wt{\Lambda}_{\rm A},\rho_{\rm SA}} D_{\rm max}\left[(\mathcal{N}\otimes\widetilde{\Lambda}_{\rm A})(\rho_{\rm SA})\,\|\,(\mE\otimes\widetilde{\Lambda}_{\rm A})(\rho_{\rm SA})\right]\nonumber\\
&\coloneqq\log_2\sup_{\wt{\Lambda}_{\rm A},\rho_{\rm SA}}\inf\left\{\lambda\ge0\,|\,0\le[(\lambda\mE - \mathcal{N})\otimes\widetilde{\Lambda}_{\rm A}](\rho_{\rm SA})\right\}\nonumber\\
&=\log_2\inf\left\{\lambda\ge0\,|\,(\lambda\mE - \mathcal{N})\otimes\widetilde{\Lambda}_{\rm A}\;{\rm is\;a\;positive\;map}\;\forall{\rm A},\wt{\Lambda}_{\rm A}\right\}.
\end{align}
A direct observation from Eq.~\eqref{Eq:AlternativeFormDmaxR} is 
\begin{afact}\label{Eq:DmaxObservation}
$\left(2^{D^R_{\rm max}(\mathcal{N}\|\mE)}\mE - \mathcal{N}\right)\otimes\wt{\Lambda}_{\rm A}$ is a positive map $\forall{\rm A},\wt{\Lambda}_{\rm A}$.
\end{afact}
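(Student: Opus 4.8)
The plan is to read the statement as asserting that the infimum defining $D^R_{\rm max}(\mathcal{N}\|\mE)$ in the last line of Eq.~\eqref{Eq:AlternativeFormDmaxR} is actually \emph{attained}. Writing $\mu\coloneqq 2^{D^R_{\rm max}(\mathcal{N}\|\mE)}$, that line reads $\mu=\inf S$ with $S\coloneqq\{\lambda\ge0\,|\,(\lambda\mE - \mathcal{N})\otimes\wt{\Lambda}_{\rm A}\text{ is positive }\forall{\rm A},\wt{\Lambda}_{\rm A}\}$, so the Fact is exactly the claim $\mu\in S$, i.e.\ that positivity survives at the boundary value $\lambda=\mu$.

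First I would establish that $S$ is upward closed: if $\lambda\in S$ and $\lambda'\ge\lambda$, then for every ${\rm A},\wt{\Lambda}_{\rm A}$ one has the decomposition $(\lambda'\mE - \mathcal{N})\otimes\wt{\Lambda}_{\rm A} = (\lambda\mE - \mathcal{N})\otimes\wt{\Lambda}_{\rm A} + (\lambda'-\lambda)(\mE\otimes\wt{\Lambda}_{\rm A})$, where the first summand is positive since $\lambda\in S$, and the second is positive because $\mE$ and $\wt{\Lambda}_{\rm A}$ are channels (hence completely positive, so $\mE\otimes\wt{\Lambda}_{\rm A}$ is positive) and $\lambda'-\lambda\ge0$. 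A sum of positive maps is positive, so $\lambda'\in S$. Because $\mu=\inf S$, any $\lambda>\mu$ exceeds some element of $S$ and therefore lies in $S$ by upward closedness; thus $(\mu,\infty)\subseteq S\subseteq[\mu,\infty)$.

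Next I would attain the boundary by a closedness argument. Fix an arbitrary ${\rm A}$, $\wt{\Lambda}_{\rm A}$, and input $\rho_{\rm SA}$, and consider $\lambda\mapsto[(\lambda\mE - \mathcal{N})\otimes\wt{\Lambda}_{\rm A}](\rho_{\rm SA}) = \lambda(\mE\otimes\wt{\Lambda}_{\rm A})(\rho_{\rm SA}) - (\mathcal{N}\otimes\wt{\Lambda}_{\rm A})(\rho_{\rm SA})$, which is affine, hence continuous, in $\lambda$. For every $\lambda>\mu$ this output is positive semidefinite by the previous step. Since the cone of positive semidefinite operators is closed, letting $\lambda\downarrow\mu$ gives $[(\mu\mE - \mathcal{N})\otimes\wt{\Lambda}_{\rm A}](\rho_{\rm SA})\ge0$. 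As ${\rm A},\wt{\Lambda}_{\rm A},\rho_{\rm SA}$ were arbitrary, $(\mu\mE - \mathcal{N})\otimes\wt{\Lambda}_{\rm A}$ is a positive map for all ${\rm A},\wt{\Lambda}_{\rm A}$, which is precisely the Fact.

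The only delicate point—and the sole reason the claim is not literally immediate from Eq.~\eqref{Eq:AlternativeFormDmaxR}—is that an infimum need not be attained, so the main obstacle will be interchanging the quantifier ``for all ${\rm A},\wt{\Lambda}_{\rm A}$'' with the limit $\lambda\downarrow\mu$. I would handle this by fixing the triple $({\rm A},\wt{\Lambda}_{\rm A},\rho_{\rm SA})$ \emph{before} passing to the limit, so that closedness of the PSD cone is applied in a single finite-dimensional space and no uniformity over the unbounded family of ancillas is needed.
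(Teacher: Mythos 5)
Your proposal is correct and takes essentially the same approach as the paper's proof, which argues by contradiction: it fixes a witness $({\rm A}_*,\wt{\Lambda}_{{\rm A}_*},\rho_{{\rm SA}_*},\ket{\phi})$ of alleged non-positivity, observes that positivity holds at the values $2^{D^R_{\rm max}(\mathcal{N}\|\mE)+\frac{1}{k}}$ strictly above the infimum, and lets $k\to\infty$ --- precisely your ``every $\lambda>\mu$ is feasible, then pass to the limit with the triple fixed'' argument in contrapositive form. The only cosmetic difference is that you spell out the upward-closedness of the feasible set via the decomposition $(\lambda'\mE-\mathcal{N})\otimes\wt{\Lambda}_{\rm A}=(\lambda\mE-\mathcal{N})\otimes\wt{\Lambda}_{\rm A}+(\lambda'-\lambda)(\mE\otimes\wt{\Lambda}_{\rm A})$, a step the paper uses implicitly.
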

\begin{proof}
Suppose the opposite was correct.
Then there exists an ancillary system ${\rm A}_*$, an absolutely resource annihilating channel $\wt{\Lambda}_{\rm A_*}$, two states $\rho_{\rm SA_*}$, $\ket{\phi}$ such that $\bra{\phi}\left(2^{D^R_{\rm max}(\mathcal{N}\|\mE)}\mE - \mathcal{N}\right)\otimes\wt{\Lambda}_{\rm A_*}(\rho_{\rm SA_*})\ket{\phi}<0$.
However, we have $\bra{\phi}\left(2^{\left[D^R_{\rm max}(\mathcal{N}\|\mE)+\frac{1}{k}\right]}\mE - \mathcal{N}\right)\otimes\wt{\Lambda}_{\rm A_*}(\rho_{\rm SA_*})\ket{\phi}\ge0\;\forall k\in\mathbb{N}$ due to Eq.~\eqref{Eq:AlternativeFormDmaxR}.
This leads to a contradiction when $k\to\infty$.
\end{proof}

Finally, we note the following alternative form of Eq.~\eqref{Eq:P_D}:
\begin{align}\label{Eq:AlternativeForm}
P_{D_{\rm max}}(\mathcal{N}) =\log_2\inf_{\Lambda\in\mathcal{O}_R^N}\inf\left\{\lambda\ge0\,|\,(\lambda\Lambda - \mathcal{N})\otimes\widetilde{\Lambda}_{\rm A}\;{\rm is\;a\;positive\;map}\;\forall{\rm A},\wt{\Lambda}_{\rm A}\right\}.
\end{align}

Now, we can present the proofs of Theorem~\ref{Result:Upper-Bound} and Corollary~\ref{Coro:CCUpper-Bound}.

\subsection{Proof of Theorem~\ref{Result:Upper-Bound}}\label{App:MainProof-Result:Upper-Bound}
\begin{proof}
Consider a channel $\mathcal{N}'$ satisfying $\norm{\mathcal{N} - \mathcal{N}'}_\diamond\le2\delta$.
For a given error $\kappa>0$, recall that $\mathcal{O}_R^N(\kappa;\mathcal{N}')\coloneqq\{\Lambda\in\mathcal{O}_R^N\,|\,|D_{\rm max}^R(\mathcal{N}'\|\Lambda) - P_{D_{\rm max}}(\mathcal{N}')|\le\kappa\}$, which is by definition non-empty. 
Then for every $\Lambda^{\mathcal{N}'}\in\mathcal{O}_R^N(\kappa;\mathcal{N}')$, Fact~\ref{Eq:DmaxObservation} implies the existence of a positive map $\mathcal{P}$ such that
\begin{align}
&\mathcal{P}\otimes\wt{\Lambda}_{\rm A}\;{\rm is\;a\;positive\;map}\;\forall{\rm A},\wt{\Lambda}_{\rm A};\label{Eq:P_D-Relation2}\\
&\mathcal{N}' + \mathcal{P} = 2^{D^R_{\rm max}\left(\mathcal{N}'\|\Lambda^{\mathcal{N}'}\right)}\Lambda^{\mathcal{N}'}.\label{Eq:P_D-Relation3}
\end{align}
Note that the positivity of $\mathcal{P}$ actually follows from Eq.~\eqref{Eq:P_D-Relation2} and the fact that one is allowed to consider the trivial ancillary system, i.e., the case when there is no ancillary system.
Now, with a given $M$-code $\Theta_M = (\{\rho_m\}_{m=0}^{M-1},\{E_m\}_{m=0}^{M-1})$, we have [recall the definition from Eq.~\eqref{Eq:SuccessProbability}]: 
\begin{align}
p_s\left(\Theta_M,\mathcal{N}'\right)&\coloneqq\frac{1}{M}\sum_{m=0}^{M-1}{\rm tr}\left[E_m\mathcal{N}'(\rho_m)\right]\nonumber\\
&=\frac{2^{D^R_{\rm max}\left(\mathcal{N}'\|\Lambda^{\mathcal{N}'}\right)}}{M}\sum_{m=0}^{M-1}{\rm tr}\left[E_m\Lambda^{\mathcal{N}'}(\rho_m)\right]-\frac{1}{M}\sum_{m=0}^{M-1}{\rm tr}\left[E_m\mathcal{P}(\rho_m)\right]\nonumber\\
&\le\frac{2^{\left[P_{D_{\rm max}}(\mathcal{N}') + \kappa\right]}}{M}\sup_{\Theta_{M'}}\sum_{m=0}^{M'-1}{\rm tr}\left[E'_m\Lambda^{\mathcal{N}'}(\rho'_m)\right],
\end{align}
where the facts that $\Lambda^{\mathcal{N}'}\in\mathcal{O}_R^N(\kappa,\mathcal{N}')$ and ${\rm tr}\left[E_m\mathcal{P}(\rho_m)\right]\ge0$ for all $m$ imply the third line, and the maximization $\sup_{\Theta_{M'}}$ is taken over every $M'\in\mathbb{N}$ and $M'$-code $\Theta_{M'} = (\{\rho'_m\}_{m=0}^{M'-1},\{E'_m\}_{m=0}^{M'-1})$. 
Since this is true for every $\Lambda^{\mathcal{N}'}\in\mathcal{O}_R^N(\kappa,\mathcal{N}')$, we conclude the following with Eq.~\eqref{Eq:GammaQuantity}:
\begin{align}
p_s\left(\Theta_M,\mathcal{N}'\right)\le\frac{1}{M}\times2^{\left[P_{D_{\rm max}}(\mathcal{N}')+\Gamma_\kappa(\mathcal{N}') +\kappa\right]}.
\end{align}
Now we use the estimate $\left|p_s\left(\Theta_M,\mathcal{N}'\right) - p_s\left(\Theta_M,\mathcal{N}\right)\right|\le\frac{1}{2}\norm{\mathcal{N}' - \mathcal{N}}_\diamond$~\cite{Takagi2019-3}, where $\norm{\mE}_\diamond\coloneqq\sup_{{\rm A},\rho_{\rm SA}}\norm{(\mE\otimes\mathcal{I}_{\rm A})(\rho_{\rm SA})}_1$ is the diamond norm.
This can be seen by the following computation
\begin{align}\label{Eq:SmoothCondition}
p_s\left(\Theta_M,\mathcal{N}'\right) - p_s\left(\Theta_M,\mathcal{N}\right) &= \frac{1}{M}\sum_{m=0}^{M-1}{\rm tr}\left[E_m(\mathcal{N}' - \mathcal{N})(\rho_m)\right]\nonumber\\
&\le\frac{1}{2M}\sum_{m=0}^{M-1}\norm{\mathcal{N}' - \mathcal{N}}_\diamond\nonumber\\
&=\frac{1}{2}\norm{\mathcal{N}' - \mathcal{N}}_\diamond,
\end{align}
which follows from the estimate $\sup_\rho\sup_{0\le E\le\id}2{\rm tr}[E(\mE' - \mE)(\rho)]\le\norm{\mE' - \mE}_\diamond$~\cite{Takagi2019-3} for arbitrary channels $\mE,\mE'$.
Gathering the above ingredients, we conclude that for every channel $\mathcal{N}'$ satisfying $\norm{\mathcal{N} - \mathcal{N}'}_\diamond\le2\delta$ and $M$-code $\Theta_M$ achieving $p_s\left(\Theta_M,\mathcal{N}\right)\ge1-\epsilon$, we have
\begin{align}
1-\epsilon&\le p_s\left(\Theta_M,\mathcal{N}\right)\nonumber\\
&\le p_s\left(\Theta_M,\mathcal{N}'\right) + \delta\nonumber\\
&\le\frac{1}{M}\times2^{\left[P_{D_{\rm max}}(\mathcal{N}')+\Gamma_\kappa(\mathcal{N}') +\kappa\right]} + \delta.
\end{align}
In other words, for every given $\epsilon,\delta\ge0\;\&\;0<\kappa<1$ satisfying $\epsilon+\delta<1$ we have
\begin{align}
C_{(1)}^\epsilon(\mathcal{N})&\le\log_2\frac{1}{1 - \epsilon - \delta} + \kappa + \inf_{\norm{\mathcal{N} - \mathcal{N}'}_\diamond\le2\delta}\left[P_{D_{\rm max}}(\mathcal{N}')+\Gamma_\kappa(\mathcal{N}')\right]\nonumber\\
&\le\log_2\frac{2^\kappa}{1 - \epsilon - \delta} +\inf_{\norm{\mathcal{N} - \mathcal{N}'}_\diamond\le2\delta}P_{D_{\rm max}}(\mathcal{N}') + \Gamma_\kappa^\delta(\mathcal{N}),
\end{align}
and the result follows.
\end{proof}
\subsection{Remark}\label{App:ExIsotropicStateAsymm}
Note that in some cases Eq.~\eqref{Eq:Result:CCUpper-Bound} can be simplified.
For instance, when the free states are isotropic states~\cite{Horodecki1999-2} (i.e., $R$ is asymmetry of the group $U\otimes U^*$), then $\Gamma_\kappa^{\delta}(\mathcal{N})\le \log_2\left(2\times\frac{d^2}{d^2-1}\right)$~\cite{footnote:IsotropicStates}. 
When $d\gg1$, Theorem~\ref{Result:Upper-Bound} implies $C_{\rm (1)}^\epsilon(\mathcal{N})\lesssim P_{D_{\rm max}}^\delta(\mathcal{N}) + \log_2\frac{1}{1-\epsilon-\delta} + 1$, and the additional degrees of freedom of $(U\otimes U^*)$-asymmetry allow performance better than isotropic states.
Another example is when $R$ is athermality, which implies Corollary~\ref{Coro:CCUpper-Bound} that will be proved in the following sub-section.

\subsection{Proof of Corollary~\ref{Coro:CCUpper-Bound}}
\begin{proof}
First, from Eq.~(80) in Ref.~\cite{Hsieh2020-1} we learn that $P_{D_{\rm max}|\gamma}(\mE) = \sup_\rho D_{\rm max}[\mE(\rho)\|\gamma]$.
This means that $\mE(\rho)\le2^{P_{D_{\rm max}|\gamma}(\mE)}\gamma\;\forall\rho$ and hence
\begin{align}
\Gamma_\kappa(\mathcal{N}) &\coloneqq \log_2\inf_{\Lambda^\mathcal{N}\in\mathcal{O}_R^N(\kappa;\mathcal{N})}\sup_{\Theta_M}\sum_{m=0}^{M-1}{\rm tr}\left[E_m\Lambda^\mathcal{N}(\rho_m)\right]\nonumber\\
&\le\log_2\inf_{\Lambda^\mathcal{N}\in\mathcal{O}_R^N(\kappa;\mathcal{N})}2^{P_{D_{\rm max}|\gamma}\left(\Lambda^\mathcal{N}\right)}\sup_{\Theta_M}\sum_{m=0}^{M-1}{\rm tr}\left(E_m\gamma\right)\nonumber\\
& = \inf_{\Lambda^\mathcal{N}\in\mathcal{O}_R^N(\kappa;\mathcal{N})}P_{D_{\rm max}|\gamma}\left(\Lambda^\mathcal{N}\right),
\end{align}
and the result follows.
\end{proof}

\section{Proof of Theorem~\ref{Result:Asymmetry-Lower-Bounds}}\label{App:Proof-Result:Asymmetry-Lower-Bounds}
\begin{proof}
We follow the proof of Theorem 2 in Ref.~\cite{Korzekwa2019}.
First, a {\em $G$-twirling channel}, which is an operation used to symmetrize all input states with respect to a unitary group $G\coloneqq\left\{U^{(g)}\right\}_{g=1}^{|G|}$, is defined by
\begin{align}
\mathcal{T}_G(\cdot)\coloneqq\frac{1}{|G|}\sum_{g=1}^{|G|} U^{(g)}(\cdot)U^{(g),\dagger}.
\end{align}
When the group is infinite, one can replace the summation by integration equipped with the Haar measure:
$
\mathcal{T}_G(\cdot)\coloneqq\int_{U\in G} U(\cdot)U^\dagger dU.
$
We focus on the finite case to illustrate the proof.

With a given state $\rho$ and a given {\em codebook} $\mathcal{C}$ (that is, a mapping, $m\mapsto g_m$, from the classical information $\{m\}_{m=0}^{M-1}$ to the set $\{1,2,...,|G|\}$)~\cite{Korzekwa2019}, consider the encoding 
\begin{align}\label{Eq:LowerBoundEncoding}
\left\{\sigma_{g_m|\rho}\coloneqq U^{(g_m)}\rho U^{(g_m),\dagger}\right\}_{m=0}^{M-1}.
\end{align} 
To construct the decoding, consider the following $M$ elements of POVM (which is a pretty good measurement scheme):
\begin{align}\label{Eq:LowerBoundDecoding}
\left\{E_m^{\mathcal{C}|\rho}\coloneqq S\sigma_{g_m|\rho}S\right\}_{m=0}^{M-1},
\end{align}
 where $S\coloneqq\left(\sum_{m=0}^{M-1}\sigma_{g_m|\rho}\right)^{-\frac{1}{2}}$.
Note that for a positive semi-definite operator $A$, the notation $A^{-1}$ is the inverse of $A$ restricted to the support of $A$~\cite{Beigi2014}.
This means $A^{-1}A = AA^{-1}$ will be the projection onto the support of $A$, and we have $A^{-1}A = AA^{-1}\le\id$ in general.
Hence, $\left\{E_m^{\mathcal{C}|\rho}\right\}_{m=0}^{M-1}$ is not a POVM in general, since $\sum_{m=0}^{M-1}E_m^{\mathcal{C}|\rho}$ will be the projection onto the support of $\sum_{m=0}^{M-1}\sigma_{g_m|\rho}$.
Recently, Korzekwa {\em et al.} (see Eqs. (44), (45) and (51) in Ref.~\cite{Korzekwa2019}) show that for $0<\kappa<1$ we have
\begin{align}\label{Eq:KorzekwaResult}
\mathbb{E}_{\mathcal{C}}\frac{1}{M}\sum_{m=0}^{M-1}{\rm tr}\left[E_m^{\mathcal{C}|\rho}\sigma_{g_m|\rho}\right]\ge (1-\kappa)\left(1 - Me^{-D_s^\kappa[\rho\,\|\,\mathcal{T}_G(\rho)]}\right),
\end{align}
where $\mathbb{E}_{\mathcal{C}}$ indicates the average over randomly chosen codebook $\mathcal{C}$ (following Ref.~\cite{Korzekwa2019}, each $m$ is independently and uniformly at random encoded into the integer $g_m$, which means $\{g_m\}_{m=0}^{M-1}$ can be interpreted as independent and identically distributed random variables with uniform distribution).
For a $G$-covariant channel $\mathcal{N}$, consider the $M$-code given by $\Theta_{M}^{\mathcal{C}|\rho}\coloneqq\left(\{\sigma_{g_m|\rho}\}_{m=0}^{M-1},\{\wt{E}_m\}_{m=0}^{M-1}\right)$, where $\wt{E}_m\coloneqq E_m^{\mathcal{C}|\mathcal{N}(\rho)}$ for $m>0$ and $\wt{E}_0\coloneqq E_0^{\mathcal{C}|\mathcal{N}(\rho)} + \left(\id - \sum_{m=0}^{M-1}E_m^{\mathcal{C}|\mathcal{N}(\rho)}\right)$.
Note that $\sum_{m=0}^{M-1}E_m^{\mathcal{C}|\mathcal{N}(\rho)}$ is the projection onto the support of $\sum_{m=0}^{M-1}\sigma_{g_m|\mathcal{N}(\rho)}$, which means $\sum_{m=0}^{M-1}E_m^{\mathcal{C}|\mathcal{N}(\rho)}\le\id$.
Then we have
\begin{align}
\sup_\rho\mathbb{E}_{\mathcal{C}}p_s\left(\Theta_{M}^{\mathcal{C}|\rho},\mathcal{N}\right) &=\sup_\rho\mathbb{E}_{\mathcal{C}}\frac{1}{M}\left({\rm tr}\left[\left(\wt{E}_0 - E_0^{\mathcal{C}|\mathcal{N}(\rho)}\right)\mathcal{N}\left(\sigma_{g_m|\rho}\right)\right] + \sum_{m=0}^{M-1}{\rm tr}\left[E_m^{\mathcal{C}|\mathcal{N}(\rho)}\mathcal{N}\left(\sigma_{g_m|\rho}\right)\right]\right)\nonumber\\
&\ge \sup_\rho\mathbb{E}_{\mathcal{C}}\frac{1}{M}\sum_{m=0}^{M-1}{\rm tr}\left[E_m^{\mathcal{C}|\mathcal{N}(\rho)}\mathcal{N}\left(\sigma_{g_m|\rho}\right)\right]\nonumber\\
&=\sup_\rho\mathbb{E}_{\mathcal{C}}\frac{1}{M}\sum_{m=0}^{M-1}{\rm tr}\left[E_m^{\mathcal{C}|\mathcal{N}(\rho)}\sigma_{g_m|\mathcal{N}(\rho)}\right]\nonumber\\
&\ge(1-\kappa)\left(1 - Me^{-\sup_\rho D_s^\kappa[\mathcal{N}(\rho)\,\|\,\mathcal{T}_G\circ\mathcal{N}(\rho)]}\right).
\end{align}
Since $\wt{E}_0 - E_0^{\mathcal{C}|\mathcal{N}(\rho)} = \id - \sum_{m=0}^{M-1}E_m^{\mathcal{C}|\mathcal{N}(\rho)}$ is non-negative, the second line follows.
The third line is because $\mathcal{N}$ is $G$-covariant and $U^{(g_m)}\in G$, so we have $\mathcal{N}(\sigma_{g_m|\rho}) = \mathcal{N}(U^{(g_m)}\rho U^{(g_m),\dagger}) = U^{(g_m)}\mathcal{N}\left(\rho\right)U^{(g_m),\dagger} = \sigma_{g_m|\mathcal{N}(\rho)}$.
The last line is a direct application of Eq.~\eqref{Eq:KorzekwaResult} by replacing the role of $\rho$ by $\mathcal{N}(\rho)$.
This means when $1-\epsilon < (1-\kappa)\left(1 - Me^{-\sup_\rho D_s^\kappa[\mathcal{N}(\rho)\,\|\,\mathcal{T}_G\circ\mathcal{N}(\rho)]}\right)$, there must exist an $M$-code $\Theta_{M}^{\mathcal{C}|\rho}$ with some $\rho$ and $\mathcal{C}$ achieving $p_s(\Theta_{M}^{\mathcal{C}|\rho},\mathcal{N})\ge 1-\epsilon$.
Let $\log_2M_* = C_{\rm (1)}^{\epsilon}(\mathcal{N})$.
Because no $(M_*+1)$-code can achieve success probability $p_s\ge1-\epsilon$, we must have
\begin{align}\label{Eq:ComputationStep}
1-\epsilon \ge (1-\kappa)\left(1 - (M_*+1)e^{-\sup_\rho D_s^\kappa[\mathcal{N}(\rho)\,\|\,\mathcal{T}_G\circ\mathcal{N}(\rho)]}\right).
\end{align}
Following Ref.~\cite{Korzekwa2019}, we set $\kappa = \epsilon-\delta$.
Since $\log_2n\ge\log_2(n+1) - 1$ for all positive integer $n$, we conclude that
\begin{align}
\log_2{M_*}&> \frac{1}{\ln2}{\sup_\rho D_s^{\epsilon - \delta}[\mathcal{N}(\rho)\,\|\,\mathcal{T}_G\circ\mathcal{N}(\rho)]} + \log_2\delta -1 \nonumber\\
&\ge\frac{1}{\ln2}{\inf_{\Lambda\in\mathcal{O}_R^N}\sup_\rho D_s^{\epsilon - \delta}[\mathcal{N}(\rho)\,\|\,\Lambda(\rho)]} + \log_2\delta -1,
\end{align}
where the first line is a direct consequence of Eq.~\eqref{Eq:ComputationStep}, and the second line is because $\mathcal{T}_G\circ\mathcal{N}$ is an $G$-covariant channel that can only output symmetric states.
\end{proof}

\section{Collision Model for Thermalization}\label{App:ThermalizationModel}
Following Ref.~\cite{Sparaciari2019}, consider a finite dimensional system {\rm S} with Hamiltonian $H_{\rm S}$ and a bath ${\rm B}$, which is assumed to be $n-1$ copies of the system ${\rm S}$, and each copy has Hamiltonian $H_{\rm S}$.
Labeling the bath as ${\rm B}_1{\rm B_2}...{\rm B}_{n-1}$, this means $H_{\rm B} = \sum_{k=1}^{n-1}\id_{{\rm B}_1}\otimes...\otimes\id_{{\rm B}_{k-1}}\otimes H_{\rm S}\otimes\id_{{\rm B}_{k+1}}\otimes...\otimes\id_{{\rm B}_{n-1}}$.
The collision model introduced in Ref.~\cite{Sparaciari2019} used to depict thermalization processes is given by
\begin{align}\label{Eq:Model}
\frac{\partial\rho_{\rm SB}(t)}{\partial t} = \sum_k\lambda_k\left[U_{\rm SB}^{(k)}\rho_{\rm SB}(t)U_{\rm SB}^{(k),\dagger} - \rho_{\rm SB}(t)\right],
\end{align}
where $\rho_{\rm SB}(t)$ is the global state on ${\rm SB}$ at time $t$, $U_{\rm SB}^{(k)}$ represents an energy-preserving unitary on ${\rm SB}$; i.e., $[U_{\rm SB}^{(k)},H_{\rm S}+H_{\rm B}] = 0$, and $\lambda_k$ is the rate for $U_{\rm SB}^{(k)}$ to occur (see also Eqs.~(A2) and~(A3) in Appendix A of Ref.~\cite{Sparaciari2019}).
Roughly speaking, each $U_{\rm SB}^{(k)}$ models an elastic collision between certain subsystems of ${\rm SB}$.
We refer the reader to Ref.~\cite{Sparaciari2019} for the details of the model and its physical reasoning.
In this work, we use the notation $\mathcal{C}_n$ to denote the set of channels $\mE$ on ${\rm SB}$ such that, for every $\rho$ in ${\rm S}$, $\mE\left(\rho\otimes\gamma^{\otimes(n-1)}\right)$ can be realized by Eq.~\eqref{Eq:Model} at a time point $t>0$ [i.e., $\rho_{\rm SB}(t) = \mE\left(\rho\otimes\gamma^{\otimes(n-1)}\right)$; note that we also allow $t=\infty$] with $\rho_{\rm SB}(0) = \rho\otimes\gamma^{\otimes(n-1)}$. 
See also Ref.~\cite{Hsieh2020-1} for its connection with resource preservability.

\section{Proof of Theorem~\ref{Coro:LinkCoro}}\label{App:Proof-Coro:LinkCoro}
Theorem~\ref{Coro:LinkCoro} is a consequence of the combination of Theorem~\ref{Result:Upper-Bound} and Theorem 4 in Ref.~\cite{Hsieh2020-1}, which is formally stated as follows (here we implicitly assume the system Hamiltonian is the one realizing the thermal state $\gamma$ with some temperature):
\begin{atheorem}\label{AThm}{\em\cite{Hsieh2020-1}}
Given a Gibbs-preserving channel $\mathcal{N}$, $0\le\epsilon<1$, and a full-rank thermal state $\gamma$.
If $\mathcal{N}$ is coherence-annihilating and the system Hamiltonian satisfies the energy subspace condition, then we have
\begin{align}
2^{P_{D_{\rm max}|\gamma}}(\mathcal{N})\le\mathcal{B}^\epsilon_\gamma(\mathcal{N}) + \frac{2\sqrt{\epsilon}}{p_{\rm min}(\gamma)} + 1,
\end{align}
where $p_{\rm min}(\gamma)$ is the smallest eigenvalue of $\gamma$.
\end{atheorem}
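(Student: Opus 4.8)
The plan is to peel the channel statement down to a single-state thermodynamic estimate and then exploit the occupation-number superselection that the energy subspace condition enforces. First, recall from Eq.~(80) in Ref.~\cite{Hsieh2020-1} that for athermality $P_{D_{\rm max}|\gamma}(\mathcal{N})=\sup_\rho D_{\rm max}[\mathcal{N}(\rho)\|\gamma]$, so that $2^{P_{D_{\rm max}|\gamma}(\mathcal{N})}=\sup_\rho 2^{D_{\rm max}[\mathcal{N}(\rho)\|\gamma]}$. Because $\mathcal{N}$ is coherence-annihilating, each output $\sigma\coloneqq\mathcal{N}(\rho)$ is diagonal in the energy eigenbasis; writing $\sigma=\sum_i q_i\proj{i}$ and $\gamma=\sum_i p_i\proj{i}$, the commuting-diagonal case of Eq.~\eqref{Eq:Def-D_max} collapses to $2^{D_{\rm max}(\sigma\|\gamma)}=\max_i q_i/p_i$, which I denote $\mu$ and which is attained at some level $i_*$. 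By Eq.~\eqref{Eq:BathSizeChannel} it then suffices to establish, for each such $\sigma$, the single-state bound $\mu\le n+2\sqrt{\epsilon}/p_{\rm min}(\gamma)$ with $n\coloneqq n^\epsilon_\gamma(\sigma)$, and to take the supremum over $\rho$ at the end.

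Next I would analyse the $n$-copy Hamiltonian $H^{(n)}=\sum_{k=1}^{n}H_{\rm S}^{(k)}$. Under the energy subspace condition, every eigenvalue of $H^{(n)}$ on the $n$-particle sector has the form $\sum_i m_iE_i$ for a \emph{unique} occupation vector $\mathbf{m}=(m_1,\dots,m_d)$ with $\sum_i m_i=n$; hence each energy eigenspace coincides with a single occupation sector (of dimension $\binom{n}{m_1,\dots,m_d}$). Two structural facts drive the proof. First, $\gamma^{\otimes n}$ assigns the common weight $\prod_i p_i^{m_i}$ to every product vector in the $\mathbf{m}$-sector, so $\gamma^{\otimes n}$ is maximally mixed within each sector, with energy-coarse-grained distribution the multinomial $\binom{n}{\mathbf{m}}\prod_i p_i^{m_i}$. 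Second, the total-occupation operator $N_{i_*}=\sum_{k=1}^{n}\Pi_{i_*}^{(k)}$, where $\Pi_{i_*}^{(k)}$ is $\proj{i_*}$ on the $k$-th copy, is a function of $H^{(n)}$ (it equals the scalar $m_{i_*}$ on each sector), so it commutes with every energy-preserving collision unitary $U^{(k)}_{\rm SB}$ appearing in Eq.~\eqref{Eq:Model}. This is exactly where the energy subspace condition is indispensable: without it an energy eigenspace could straddle several occupation sectors and $N_{i_*}$ would fail to be conserved.

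The heart of the argument is then a conservation identity. Since the generator of Eq.~\eqref{Eq:Model} is built from the $U^{(k)}_{\rm SB}$, the expectation $\langle N_{i_*}\rangle$ is left exactly invariant along the whole trajectory, hence by any $\mathcal{E}_{\rm SB}\in\mathcal{C}_n$. On the input $\sigma\otimes\gamma^{\otimes(n-1)}$ one computes $\langle N_{i_*}\rangle=q_{i_*}+(n-1)p_{i_*}=np_{i_*}+(\mu-1)p_{i_*}$, whereas on the target $\gamma^{\otimes n}$ one has $\langle N_{i_*}\rangle=np_{i_*}$. Invariance therefore pins the final-state expectation to $np_{i_*}+(\mu-1)p_{i_*}$, while the $\epsilon$-thermalization requirement Eq.~\eqref{Eq:Thermalize-Def} forces the final state within trace distance $\epsilon$ of $\gamma^{\otimes n}$. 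Comparing the two expectations by trace-norm duality bounds the excess $(\mu-1)p_{i_*}$ by the thermalization error, and dividing by $p_{i_*}\ge p_{\rm min}(\gamma)$ controls $\mu$; taking $\sup_\rho$ and using $\sup_\rho n^\epsilon_\gamma[\mathcal{N}(\rho)]=\mathcal{B}^\epsilon_\gamma(\mathcal{N})+1$ then upgrades the single-state estimate to the asserted channel inequality.

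The step I expect to be the main obstacle is converting the trace-distance guarantee into precisely the error term $2\sqrt{\epsilon}/p_{\rm min}(\gamma)$. A direct operator-norm estimate with $\norm{N_{i_*}}_\infty=n$ already yields a bound of the stated shape, but with an error linear in $\epsilon$ carrying an extra factor $n$; matching the advertised $\sqrt{\epsilon}$ form instead calls for passing through fidelity (Fuchs--van de Graaf), so that the deviation of the bounded observable $N_{i_*}/n$ between the two states is measured by the purified distance $\sim\sqrt{\epsilon}$ rather than by $\epsilon$ itself. Carrying out this conversion cleanly, while keeping track of the flatness of $\gamma^{\otimes n}$ inside each occupation sector, is the delicate part; once it is in place, the conservation identity of the previous paragraph makes the remaining bookkeeping routine.
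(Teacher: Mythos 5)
First, note that the paper you are comparing against does not actually prove this statement: it is imported verbatim (as Theorem~4 of Ref.~\cite{Hsieh2020-1}), and its proof ultimately rests on Lemma~17 in Appendix~C of Ref.~\cite{Sparaciari2019}. Your skeleton matches that argument's mechanism well — the reduction via $P_{D_{\rm max}|\gamma}(\mathcal{N})=\sup_\rho D_{\rm max}[\mathcal{N}(\rho)\|\gamma]$, the diagonality of outputs so that $2^{D_{\rm max}(\sigma\|\gamma)}=\max_i q_i/p_i\eqdef\mu$, the identification (via the energy subspace condition) of the eigenspaces of the $n$-copy Hamiltonian with occupation sectors, the flatness of $\gamma^{\otimes n}$ on each sector, and exact conservation under Eq.~\eqref{Eq:Model} are all correct and are exactly the right ingredients. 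The genuine gap is in your quantitative endgame: the conserved quantity you use, the \emph{mean} of $N_{i_*}$, is too coarse. The signal you must detect is $(\mu-1)p_{i_*}\le 1$, an $O(1)$ shift in the mean of an observable of operator norm $n$, while a state $\epsilon$-close in trace norm can displace that mean by $\Theta(n\epsilon)$. Your estimate therefore yields only $\mu\le 1+O(n\epsilon/p_{\rm min})$, which (even combined with the trivial $\mu\le 1/p_{\rm min}$) provably does not imply the claimed bound: e.g.\ for $\epsilon=0.04$, $p_{\rm min}=10^{-3}$, $n=30$ you get $\mu\le 601$ versus the theorem's $n+2\sqrt{\epsilon}/p_{\rm min}=430$, and more generally the mean-based bound fails whenever $4/\sqrt{\epsilon}\lesssim n\lesssim(1-2\sqrt{\epsilon})/p_{\rm min}$. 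Your proposed repair via Fuchs--van de Graaf cannot close this, because the obstruction is the operator-norm factor $n$, not the metric: purified distance $\sim\sqrt{\epsilon}$ controls the deviation of the normalized observable $N_{i_*}/n$ only at scale $\sqrt{\epsilon}$, giving the even weaker $\mu\le 1+O(n\sqrt{\epsilon}/p_{\rm min})$.

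The fix stays inside your own framework but uses the full strength of the superselection you identified: since each $U^{(k)}_{\rm SB}$ commutes with the total Hamiltonian and the sectors are its eigenspaces, \emph{every} sector probability ${\rm tr}[\Pi_{\mathbf{m}}\rho(t)]$ is individually conserved, not merely the first moment of $N_{i_*}$. A short computation gives ${\rm tr}[\Pi_{\mathbf{m}}(\sigma\otimes\gamma^{\otimes(n-1)})]={\rm Mult}_n(\mathbf{m})\,X(\mathbf{m})$ with $X(\mathbf{m})=\sum_i q_i m_i/(np_i)$, while ${\rm tr}[\Pi_{\mathbf{m}}\gamma^{\otimes n}]={\rm Mult}_n(\mathbf{m})$; measuring $\{\Pi_{\mathbf{m}}\}$ and contractivity of the trace norm turn Eq.~\eqref{Eq:Thermalize-Def} into $\mathbb{E}_{\mathbf{m}\sim{\rm Mult}_n}|X-1|\le\epsilon$. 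If $\mu>n$ then necessarily $np_{i_*}<1$ (otherwise $\mu\le 1/p_{i_*}\le n$), and on the event $m_{i_*}\ge1$, which has probability $1-(1-p_{i_*})^n\ge np_{i_*}/2$, one has $X\ge\mu m_{i_*}/n\ge\mu/n>1$; hence $\epsilon\ge(\mu/n-1)\,np_{i_*}/2$, i.e.\ $\mu\le n+2\epsilon/p_{i_*}\le n+2\sqrt{\epsilon}/p_{\rm min}$. Taking the supremum over inputs and using $\mathcal{B}^\epsilon_\gamma(\mathcal{N})+1=\sup_\rho n^\epsilon_\gamma[\mathcal{N}(\rho)]$ then yields exactly the statement. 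In short: your structural analysis is on target, but the information lives in the low-occupation sectors $m_{i_*}\ge1$ — of total weight only about $np_{i_*}$, where the relative distortion is $\mu/n$ — and averaging them into a single expectation value, as your conservation-of-mean step does, washes that signal out.
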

We remark that being coherence-annihilating is required by the proof given in Ref.~\cite{Sparaciari2019} (specifically, it is crucial for the proof of Lemma 17 in Appendix C of Ref.~\cite{Sparaciari2019}), which explains the assumption made in Theorem~\ref{Coro:LinkCoro}.
Combining Corollary~\ref{Coro:CCUpper-Bound} (and hence Theorem~\ref{Result:Upper-Bound}) and Theorem~\ref{AThm}, we are now in the position to prove Theorem~\ref{Coro:LinkCoro}:
\begin{proof}
Consider coherence as the state resource ($R={\rm Coh}$) in Corollary~\ref{Coro:CCUpper-Bound}, we obtain
\begin{align}
C_{\rm (1)}^\epsilon(\mathcal{N})\le P_{D_{\rm max}|{\rm Coh}}(\mathcal{N})  + P_{D_{\rm max}|\gamma}\left(\Lambda^\mathcal{N}\right) + \log_2\frac{2^\kappa}{1-\epsilon}
\end{align}
for every $\Lambda^\mathcal{N}\in\mathcal{O}_{\rm Coh}^N(\kappa,\mathcal{N})$.
Hence, $\Lambda^\mathcal{N}$ is coherence-annihilating.
Use Theorem~\ref{AThm} (consequently, we need to assume all conditions made in the statement of Theorem~\ref{AThm}), we conclude that for $0\le\epsilon,\delta<1$ we have
\begin{align}
C_{\rm (1)}^\epsilon(\mathcal{N})\le P_{D_{\rm max}|{\rm Coh}}(\mathcal{N}) + \log_2\left(\mathcal{B}^{\delta}_\gamma(\Lambda^\mathcal{N}) + \frac{2\sqrt{\delta}}{p_{\rm min}(\gamma)} + 1\right) +  \log_2\frac{2^\kappa}{1 - \epsilon},
\end{align}
which completes the proof.
\end{proof}

\section{Proof of Theorem~\ref{Result:ConverseBound}}\label{App:Proof-Result:ConverseBound}
Before the proof, let us recall a crucial tool called {\em fully entangled fraction} (FEF)~\cite{Horodecki1999-2,Albeverio2002}. For a bipartite state $\rho$ with equal local dimension $d$, its FEF is defined by
\begin{align}
\F{\rho}\coloneqq\max_{\ket{\Phi_d}}\bra{\Phi_d}\rho\ket{\Phi_d},
\end{align}
which maximizes over all maximally entangled states $\ket{\Phi_d}$ with local dimension $d$.
FEF is well-known for characterizing different forms of inseparability~\cite{Horodecki1999-2,Albeverio2002,Zhao2010,Cavalcanti2013,Bell-RMP,Quintino2016,Hsieh2016,Hsieh2018E,Ent-RMP,Liang2019,Hsieh2020-2}. 
For instance, $\mathcal{F}_{\rm max}(\rho)>\frac{1}{d}$ implies $\rho$ is free entangled~\cite{Ent-RMP,Horodecki1999-1}, useful for teleporation~\cite{Horodecki1999-2}, multi-copy nonlocal~\cite{Palazuelos2012, Cavalcanti2013}, and multi-copy steerable~\cite{Hsieh2016,Quintino2016} (see also Appendix~\ref{App:LOSR-Nonlocality-Steering}).

Following the proof of Theorem~\ref{Result:Upper-Bound}, we will prove a lemma and have the main theorem as a corollary.
Given a state resource $R$, recall from Sec.~\ref{Sec:Formulation} that $\mathbb{F}_R$ is the set of free operations of $R$-preservability given by~\cite{Hsieh2020-1} $\mE\mapsto\Lambda_+\circ(\mE\otimes\wt{\Lambda})\circ\Lambda_-$, where $\Lambda_+,\Lambda_-\in\mathcal{O}_R$ are free operations and $\wt{\Lambda}\in\wt{\mathcal{O}}_R^N$ is an absolutely resource annihilating channel.
Since now, we will always assume that for every $\mathfrak{F}\in\mathbb{F}_R$ and $\mathcal{N}\in\mathcal{O}_R$, the input and output systems of $\mathfrak{F}(\mathcal{N})$ are both bipartite with finite equal local dimension.
Also, we will use the notation $\mathcal{F}_R^{(d\times d)}$ to denote the set of free states of $R$ in bipartite systems with equal local dimension $d$.

\begin{alemma}\label{Lemma:MESUpperBound}
Given $\epsilon,\delta>0$ satisfying $\epsilon+\delta<1$.
For $\mathcal{N}\in\mathcal{O}_R$ and $\mathfrak{F}\in\mathbb{F}_R$, if $M$ achieves $p_{s|{\rm ME}}\left[M,\mathfrak{F}(\mathcal{N})\right]\ge1-\epsilon$, then we have
\begin{align}
0\le P_{D_{\rm max}}^\delta(\mathcal{N}) + \log_2\frac{1}{1 - \epsilon - \delta} + \log_2F^R(d),
\end{align}
where 
\begin{align}
F^R(d)\coloneqq\sup_{\eta\in\mathcal{F}_R^{(d\times d)}}\mathcal{F}_{\rm max}(\eta),
\end{align}
and $d$ is the local dimension of the output bipartite system of $\mathfrak{F}(\mathcal{N})$.
\end{alemma}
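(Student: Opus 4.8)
The plan is to mirror the proof of Theorem~\ref{Result:Upper-Bound}, replacing the generic success probability by the maximally-entangled success probability $p_{s|{\rm ME}}$ and replacing the carriable-information term by a fully-entangled-fraction estimate for free states. First I would fix an arbitrary channel $\mathcal{N}'$ with $\norm{\mathcal{N}-\mathcal{N}'}_\diamond\le 2\delta$ and an arbitrary resource annihilating channel $\Lambda\in\mathcal{O}_R^N$, and invoke Fact~\ref{Eq:DmaxObservation} to write $\mathcal{N}' = 2^{D^R_{\rm max}(\mathcal{N}'\|\Lambda)}\Lambda - \mathcal{P}$, where $\mathcal{P}$ is a positive map enjoying the stronger property that $\mathcal{P}\otimes\wt{\Lambda}_{\rm A}$ is positive for every absolutely resource annihilating $\wt{\Lambda}_{\rm A}$. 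Writing a free super-channel as $\mathfrak{F}(\mathcal{N}') = \Lambda_+\circ(\mathcal{N}'\otimes\wt{\Lambda})\circ\Lambda_-$ and substituting this decomposition yields $\mathfrak{F}(\mathcal{N}') = 2^{D^R_{\rm max}(\mathcal{N}'\|\Lambda)}\mathfrak{F}(\Lambda) - \mathcal{Q}$ with $\mathcal{Q}\coloneqq\Lambda_+\circ(\mathcal{P}\otimes\wt{\Lambda})\circ\Lambda_-$.

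The two pillars of the argument are then: (i) that $\mathcal{Q}$ contributes non-negatively to every maximally entangled matrix element, and (ii) that $\mathfrak{F}(\Lambda)$ only outputs free states. For (i), note that $\Lambda_-(\proj{\Phi_m})$ is a genuine state, $(\mathcal{P}\otimes\wt{\Lambda})$ is a positive map by Fact~\ref{Eq:DmaxObservation} (this is exactly why the free super-channel must carry an \emph{absolutely} resource annihilating $\wt{\Lambda}$), and $\Lambda_+$ is completely positive, so $\mathcal{Q}(\proj{\Phi_m})\ge0$ and hence $\bra{\Phi'_m}\mathcal{Q}(\proj{\Phi_m})\ket{\Phi'_m}\ge0$. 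For (ii), since $\Lambda\in\mathcal{O}_R^N$ and $\wt{\Lambda}\in\wt{\mathcal{O}}_R^N$ we have $\Lambda\otimes\wt{\Lambda}\in\mathcal{O}_R^N$; composing with the free operations $\Lambda_\pm\in\mathcal{O}_R$ shows $\mathfrak{F}(\Lambda)\in\mathcal{O}_R^N$, so $\mathfrak{F}(\Lambda)(\proj{\Phi_m})\in\mathcal{F}_R^{(d\times d)}$. Because $\ket{\Phi'_m}$ is maximally entangled, $\bra{\Phi'_m}\mathfrak{F}(\Lambda)(\proj{\Phi_m})\ket{\Phi'_m}\le\mathcal{F}_{\rm max}[\mathfrak{F}(\Lambda)(\proj{\Phi_m})]\le F^R(d)$. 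Combining (i) and (ii), averaging over $m$ and taking the supremum over maximally entangled encodings and decodings gives $p_{s|{\rm ME}}[M,\mathfrak{F}(\mathcal{N}')]\le 2^{D^R_{\rm max}(\mathcal{N}'\|\Lambda)}F^R(d)$; the factor $M$ cancels, which is why the final bound carries no $M$-dependence. Taking the infimum over $\Lambda\in\mathcal{O}_R^N$ then upgrades the exponent to $P_{D_{\rm max}}(\mathcal{N}')$.

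To close, I would transport the estimate from $\mathfrak{F}(\mathcal{N}')$ back to $\mathfrak{F}(\mathcal{N})$. Using the contraction $\norm{\mathfrak{F}(\mathcal{N})-\mathfrak{F}(\mathcal{N}')}_\diamond\le\norm{\mathcal{N}-\mathcal{N}'}_\diamond\le2\delta$ already established in Eq.~\eqref{Eq:ComputationMathfrak}, together with the standard operator estimate $2\,{\rm tr}[E(\mathcal{E}'-\mathcal{E})(\rho)]\le\norm{\mathcal{E}'-\mathcal{E}}_\diamond$ applied termwise with $E=\proj{\Phi'_m}$, gives $p_{s|{\rm ME}}[M,\mathfrak{F}(\mathcal{N})]\le p_{s|{\rm ME}}[M,\mathfrak{F}(\mathcal{N}')]+\delta$. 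Feeding in the hypothesis $p_{s|{\rm ME}}[M,\mathfrak{F}(\mathcal{N})]\ge1-\epsilon$ yields $1-\epsilon-\delta\le 2^{P_{D_{\rm max}}(\mathcal{N}')}F^R(d)$; taking $\log_2$ and then the infimum over all $\mathcal{N}'$ within diamond-distance $2\delta$ of $\mathcal{N}$ turns the exponent into $P_{D_{\rm max}}^\delta(\mathcal{N})$ and delivers the claimed inequality. The main obstacle I anticipate is pillar (i): one must ensure the noise map $\mathcal{P}$ retains positivity \emph{after} being tensored with the absolutely resource annihilating $\wt{\Lambda}$ and sandwiched between $\Lambda_\pm$, which is precisely the role of Fact~\ref{Eq:DmaxObservation} and the reason the definition of $\mathbb{F}_R$ insists on $\wt{\Lambda}\in\wt{\mathcal{O}}_R^N$ rather than a mere resource annihilating channel.
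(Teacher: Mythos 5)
Your proposal is correct and follows essentially the same route as the paper's proof: decompose a $2\delta$-close channel $\mathcal{N}'$ via Fact~\ref{Eq:DmaxObservation} as a scaled resource annihilating channel minus a noise map that stays positive when tensored with $\wt{\Lambda}\in\wt{\mathcal{O}}_R^N$, push this through $\mathfrak{F}$, bound the free-output overlaps by $F^R(d)$, and smooth using Eq.~\eqref{Eq:ComputationMathfrak}. The only (immaterial) difference is bookkeeping: you fix an arbitrary $\Lambda\in\mathcal{O}_R^N$ and take the infimum at the end, whereas the paper picks near-optimal $(\lambda_k,\Lambda_k)$ from Eq.~\eqref{Eq:AlternativeForm} and lets $k\to\infty$.
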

\begin{proof}
For every positive integer $k$ and every $\mathcal{N}'$ such that $\norm{\mathcal{N}' - \mathcal{N}}_\diamond\le2\delta$, Eq.~\eqref{Eq:AlternativeForm} implies the existence of a value $\lambda_k\ge0$ and a $R$-annihilating channel $\Lambda_k\in\mathcal{O}_R^N$ such that (i) $\left|P_{D_{\rm max}}(\mathcal{N}') - \log_2\lambda_k\right|\le\frac{1}{k}$, and (ii) $\mathcal{P}_k\otimes\wt{\Lambda}_{\rm A}$ is a positive map for every ancillary system ${\rm A}$ and $\wt{\Lambda}_{\rm A}\in\wt{\mathcal{O}}_R^N$, where $\mathcal{N}' + \mathcal{P}_k = \lambda_k\Lambda_k$.
Write $\mathfrak{F}(\mE) = \Lambda_+\circ(\mE\otimes\wt{\Lambda})\circ\Lambda_-$ with $\Lambda_+,\Lambda_-\in\mathcal{O}_R$ and $\wt{\Lambda}\in\wt{\mathcal{O}}_R^N$.
This means for every positive integer $M$ we have (note that $\ket{\Phi'_m}$'s are all staying in the output space of $\mathfrak{F}(\mathcal{N})$, which is a bipartite system with equal local dimension $d$):
\begin{align}\label{Eq:MESComputation}
p_{s|{\rm ME}}[M,\mathfrak{F}(\mathcal{N}')]&\coloneqq\sup_{\{\ket{\Phi_m}\}_{m=0}^{M-1},\{\ket{\Phi'_m}\}_{m=0}^{M-1}}\frac{1}{M}\sum_{m=0}^{M-1}\bra{\Phi'_m} \Lambda_+\circ(\mathcal{N}'\otimes\wt{\Lambda})\circ\Lambda_-(\proj{\Phi_m})\ket{\Phi'_m}\nonumber\\
&\le\frac{2^{\left[P_{D_{\rm max}}(\mathcal{N}') + \frac{1}{k}\right]}}{M}\times\sup_{\{\ket{\Phi_m}\}_{m=0}^{M-1},\{\ket{\Phi'_m}\}_{m=0}^{M-1}}\sum_{m=0}^{M-1}\bra{\Phi'_m}\Lambda_+\circ(\Lambda_k\otimes\wt{\Lambda})\circ\Lambda_-(\proj{\Phi_m})\ket{\Phi'_m}\nonumber\\
&\le\frac{2^{\left[P_{D_{\rm max}}(\mathcal{N}') + \frac{1}{k}\right]}}{M}\times\sup_{\{\eta_m\}_{m=0}^{M-1}\subseteq\mathcal{F}_R^{(d\times d)},\{\ket{\Phi'_m}\}_{m=0}^{M-1}}\sum_{m=0}^{M-1}\bra{\Phi'_m}\eta_m\ket{\Phi'_m}\nonumber\\
&\le 2^{\left[P_{D_{\rm max}}(\mathcal{N}') + \frac{1}{k}\right]}\times F^R(d).
\end{align}
The second line is because $\Lambda_+\circ(\mathcal{P}_k\otimes\wt{\Lambda})\circ\Lambda_-$ is a positive map.
The third line is because $\Lambda_+\circ(\Lambda_k\otimes\wt{\Lambda})\circ\Lambda_-\in\mathcal{O}_R^N$, and the fact that the output system is a bipartite system with equal local dimension $d$.
From here we conclude that 
\begin{align}
p_{s|{\rm ME}}[M,\mathfrak{F}(\mathcal{N}')]\le 2^{P_{D_{\rm max}}(\mathcal{N}')}\times F^R(d).
\end{align}

Hence, for every $M$ achieving $p_{s|{\rm ME}}[M,\mathfrak{F}(\mathcal{N}')]\ge1-\epsilon$ and for every $\mathcal{N}'$ achieving $\norm{\mathcal{N}' - \mathcal{N}}_\diamond\le2\delta$, we have
\begin{align}
1-\epsilon&\le p_{s|{\rm ME}}[M,\mathfrak{F}(\mathcal{N})]\nonumber\\
&\le p_{s|{\rm ME}}[M,\mathfrak{F}(\mathcal{N}')] + \delta\nonumber\\
&\le 2^{P_{D_{\rm max}}(\mathcal{N}')}\times F^R(d) + \delta.
\end{align}
The second line is a consequence of Eqs.~\eqref{Eq:SmoothCondition} and~\eqref{Eq:EstimateAncillaryLambda}.
The desired result follows.
\end{proof}

As a direct observation on Lemma~\ref{Lemma:MESUpperBound}, once $F^R(d)$ has an explicit dependency on $d$, one could conclude an upper bound on $\log_2M$.
This is the case for various resources, and this fact allows us to prove Theorem~\ref{Result:ConverseBound} as follows.
\begin{proof}
In the first case, consider $R=$ athermality with the thermal state $\gamma$, which is in a bipartite system with equal local dimension.
Then it suffices to notice that in this case Eq.~\eqref{Eq:MESComputation} becomes 
\begin{align}
p_{s|{\rm ME}}[M,\mathfrak{F}(\mathcal{N}')]&\le\frac{2^{\left[P_{D_{\rm max}}(\mathcal{N}') + \frac{1}{k}\right]}}{M}\times\sup_{\{\ket{\Phi'_m}\}_{m=0}^{M-1}}\sum_{m=0}^{M-1}\bra{\Phi'_m}\gamma\ket{\Phi'_m}\nonumber\\
&\le\frac{2^{\left[P_{D_{\rm max}}(\mathcal{N}') + \frac{1}{k}\right]}}{M}.
\end{align}
This means we have $C_{{\rm ME}, (1)}^\epsilon[\mathfrak{F}(\mathcal{N})]\le P_{D_{\rm max}}^\delta(\mathcal{N}) + \log_2\frac{1}{1 - \epsilon - \delta}$ for all $\mathfrak{F}\in\mathbb{F}_R$, and the desired bound follows.

Now we recall that a bipartite state $\rho$ with equal finite local dimension $d$ is free entangled~\cite{Ent-RMP,Horodecki1998}, multi-copy nonlocal~\cite{Palazuelos2012,Cavalcanti2013}, and multi-copy steerable~\cite{Hsieh2016,Quintino2016} if $\mathcal{F}_{\rm max}(\rho)>\frac{1}{d}$.
This means for these resources we have $F^R(d)\le\frac{1}{d}$.
Given $\mathfrak{F}\in\mathbb{F}_R$ and $\mathcal{N}\in\mathcal{O}_R$, Lemma~\ref{Lemma:MESUpperBound} implies that for every $M$ achieving $p_{s|{\rm ME}}\left[M,\mathfrak{F}(\mathcal{N})\right]\ge1-\epsilon$, we have [in what follows we again use $d$ to denote the local dimension of the output bipartite system of $\mathfrak{F}(\mathcal{N})$]
\begin{align}
0&\le P_{D_{\rm max}}^\delta(\mathcal{N}) + \log_2\frac{1}{1 - \epsilon - \delta} + \log_2F^R(d)\nonumber\\
&\le P_{D_{\rm max}}^\delta(\mathcal{N}) + \log_2\frac{1}{1 - \epsilon - \delta}-\log_2\sqrt{M},
\end{align}
where the second line is because for any such $M$ the output space of $\mathfrak{F}(\mathcal{N})$ contains $M$ mutually orthonormal maximally entangled states, which means $M\le d^2$ and hence $\frac{1}{d}\le\frac{1}{\sqrt{M}}$.
From here we conclude that
\begin{align}
\frac{1}{2}\times\log_2M\le P_{D_{\rm max}}^\delta(\mathcal{N}) + \log_2\frac{1}{1 - \epsilon - \delta},
\end{align}
which is the desired bound by considering all possible $M$ and $\mathfrak{F}\in\mathbb{F}_R$.
\end{proof}

\section{Implications of Theorem~\ref{Result:ConverseBound}}\label{App:Maintainability-MoreResults}
Theorem~\ref{Result:ConverseBound} gives further implications to superdense coding and also connects FEF and resource preservability.
We briefly summarize these remarks in this appendix.

\subsection{Maintaining Orthogonal Maximal Entanglement and Superdense Coding}
Theorem~\ref{Result:ConverseBound} allows an interpretation for superdense coding, and we briefly introduce the setup here to illustrate this.
Consider two agents, Alice and Bob, sharing a maximally entangled state $\ket{\Phi_0}\coloneqq\frac{1}{\sqrt{d}}\sum_{i=0}^{d-1}\ket{ii}$ with local dimension $d$.
First, Alice encodes the classical information $m$ in her local system (a {\em qudit}) by applying $U_m$, the unitary operator achieving $(U_m\otimes\id_{\rm B})\ket{\Phi_0} = \ket{\Phi_m}$ with a given set of orthogonal maximally entangled states $\{\ket{\Phi_m}\}_{m=0}^{M-1}$.
After this, she sends her qudit to Bob, and both Alice's and Bob's qudits undergo a dynamics modeled by a bipartite channel $\mathcal{N}$.
After receiving Alice's qudit, Bob decodes the classical information $m$ from $\mathcal{N}(\proj{\Phi_m})$ by a bipartite measurement.
We call such task a {\em $d$ dimensional superdense coding through $\mathcal{N}$}. 
It is the conventional superdesne coding when $\mathcal{N} = \mathcal{I}$, where Bob can apply a $d^2$ dimensional Bell measurement to perfectly decode $d^2$ classical data when only one qudit has been sent.
In general, different channels have different abilities to admit superdense coding, and $\sup_{\mathfrak{F}\in\mathbb{F}_R}C^\epsilon_{{\rm ME},(1)}\left[\mathfrak{F}(\mathcal{N})\right]$ is the highest amount of classical information allowed by a $d$ dimensional superdense coding through a channel $\mathcal{N}\in\mathcal{O}_R$ even with all possible assistance structures constrained by $R$, i.e., $\mathfrak{F}\in\mathbb{F}_R$.
In this sense, $\sup_{\mathfrak{F}\in\mathbb{F}_R}C^\epsilon_{{\rm ME},(1)}\left[\mathfrak{F}(\mathcal{N})\right]$ can be understood as the {\em superdense coding ability} of $\mathcal{N}$, and Theorem~\ref{Result:ConverseBound} estimates the optimal performance of superdense coding.

\subsection{Fully Entangled Fraction and Resource Preservability}
It is worth mentioning that the proof of Theorem~\ref{Result:ConverseBound} largely relies on FEF. 
Once an FEF threshold with an explicit dependency of local dimension exists, a result similar to Theorem~\ref{Result:ConverseBound} can be obtained.
For example, from Ref.~\cite{Hsieh2016} we learn that $\rho$ is (two-way) steerable if $\F{\rho}>\frac{d-1+\sqrt{d+1}}{d\sqrt{d+1}}$.
This means when $R=$ (two-way) steerability and $M\le d^2$, we have 
$
F^R(d)\le\frac{d-1+\sqrt{d+1}}{d\sqrt{d+1}}\le\frac{1}{\sqrt{d}} + \frac{1}{d}\le \frac{1}{M^{\frac{1}{4}}}+\frac{1}{\sqrt{M}}.
$
Hence, when $R=$ (two-way) steerability, we have 
\begin{align}
\frac{1}{4}\times\sup_{\mathfrak{F}\in\mathbb{F}_R}C^\epsilon_{{\rm ME},(1)}\left[\mathfrak{F}(\mathcal{N})\right]\le P_{D_{\rm max}}^\delta(\mathcal{N}) + \log_2\frac{1}{1 - \epsilon - \delta} + 1.
\end{align}
It turns out that resource preservability can be related to FEF as follows

\begin{aproposition}\label{Result:FEF-P_D}
Given a resource $R$, then $\mathcal{N}\in\mathcal{O}_R$ cannot maintain any maximally entangled state with an average error less than $\epsilon$ if
\begin{eqnarray}
P_{D_{\rm max}}(\mathcal{N}) < \log_2\frac{1-\epsilon}{F^R(d)},
\end{eqnarray}
where $d$ is the local dimension of the output bipartite space of $\mathcal{N}$.
\end{aproposition}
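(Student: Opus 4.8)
The plan is to recognize that ``$\mathcal{N}$ cannot maintain any maximally entangled state with average error less than $\epsilon$'' is precisely the failure of the $M=1$ instance of the maximal-entanglement success probability of Eq.~\eqref{Eq:Capacity-Like-Measure}: the channel maintains some maximally entangled state with error below $\epsilon$ exactly when $p_{s|{\rm ME}}(1,\mathcal{N})=\sup_{\ket{\Phi},\ket{\Phi'}}\bra{\Phi'}\mathcal{N}(\proj{\Phi})\ket{\Phi'}>1-\epsilon$, the supremum running over maximally entangled states in the output bipartition. Accordingly I would establish the single master inequality
\begin{align}
p_{s|{\rm ME}}(1,\mathcal{N})\le 2^{P_{D_{\rm max}}(\mathcal{N})}\,F^R(d),
\end{align}
and then read off the proposition by contraposition: the hypothesis $P_{D_{\rm max}}(\mathcal{N})<\log_2\frac{1-\epsilon}{F^R(d)}$ gives $2^{P_{D_{\rm max}}(\mathcal{N})}F^R(d)<1-\epsilon$, so the master inequality forces $p_{s|{\rm ME}}(1,\mathcal{N})<1-\epsilon$, i.e.\ no maximally entangled state survives with error below $\epsilon$.

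This is, in effect, the specialization of Lemma~\ref{Lemma:MESUpperBound} to $M=1$, the identity super-channel, and $\delta=0$, so I would mirror the computation in Eq.~\eqref{Eq:MESComputation}. Fix any $R$-annihilating channel $\Lambda\in\mathcal{O}_R^N$. By Fact~\ref{Eq:DmaxObservation}, applied with the trivial ancilla, the map $2^{D^R_{\rm max}(\mathcal{N}\|\Lambda)}\Lambda-\mathcal{N}$ is positive, so $\mathcal{N}+\mathcal{P}=2^{D^R_{\rm max}(\mathcal{N}\|\Lambda)}\Lambda$ for a positive map $\mathcal{P}$. Hence for any maximally entangled $\ket{\Phi},\ket{\Phi'}$, dropping the non-negative term $\bra{\Phi'}\mathcal{P}(\proj{\Phi})\ket{\Phi'}\ge0$ yields $\bra{\Phi'}\mathcal{N}(\proj{\Phi})\ket{\Phi'}\le 2^{D^R_{\rm max}(\mathcal{N}\|\Lambda)}\bra{\Phi'}\Lambda(\proj{\Phi})\ket{\Phi'}$. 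Since $\Lambda$ is $R$-annihilating, $\Lambda(\proj{\Phi})\in\mathcal{F}_R^{(d\times d)}$, and the definitions of the fully entangled fraction and of $F^R(d)$ give $\bra{\Phi'}\Lambda(\proj{\Phi})\ket{\Phi'}\le\F{\Lambda(\proj{\Phi})}\le F^R(d)$.

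Finally I would take the supremum over $\ket{\Phi},\ket{\Phi'}$ on the left, which is valid for each fixed $\Lambda$, and then the infimum over $\Lambda\in\mathcal{O}_R^N$ on the right, using $P_{D_{\rm max}}(\mathcal{N})=\inf_{\Lambda\in\mathcal{O}_R^N}D^R_{\rm max}(\mathcal{N}\|\Lambda)$ from Eq.~\eqref{Eq:P_D}, to obtain the master inequality. I do not expect a genuine obstacle, since the heavy lifting is already carried by Fact~\ref{Eq:DmaxObservation} and Lemma~\ref{Lemma:MESUpperBound}; the only points requiring care are (i) the conceptual identification of ``maintaining maximal entanglement with error $<\epsilon$'' with the $M=1$ success probability, and (ii) exchanging the supremum over input/output maximally entangled states with the infimum over $R$-annihilating channels. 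The latter is legitimate because the per-$\Lambda$ bound $p_{s|{\rm ME}}(1,\mathcal{N})\le 2^{D^R_{\rm max}(\mathcal{N}\|\Lambda)}F^R(d)$ is uniform in $\ket{\Phi},\ket{\Phi'}$ and therefore holds for every $\Lambda$ separately before taking the infimum; should one wish to avoid presupposing that the infimum defining $P_{D_{\rm max}}$ is attained, one simply replaces the exact decomposition by the $1/k$-approximate one of Eq.~\eqref{Eq:AlternativeForm}, exactly as in Lemma~\ref{Lemma:MESUpperBound}, and lets $k\to\infty$.
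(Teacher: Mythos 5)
Your proposal is correct and follows essentially the same route as the paper: the paper's proof simply invokes Lemma~\ref{Lemma:MESUpperBound} with $\delta=0$ (and the identity super-channel) and takes the contrapositive, while you inline that lemma's computation specialized to $M=1$ — the same decomposition $\mathcal{N}+\mathcal{P}=2^{D^R_{\rm max}(\mathcal{N}\|\Lambda)}\Lambda$ from Fact~\ref{Eq:DmaxObservation} and the same bound $\bra{\Phi'}\Lambda(\proj{\Phi})\ket{\Phi'}\le F^R(d)$. A minor bonus of your direct version is that it never touches the smoothed quantity $P^\delta_{D_{\rm max}}$, so the $\delta=0$ boundary case of the lemma's hypotheses never arises.
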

\begin{proof}
Applying Lemma~\ref{Lemma:MESUpperBound} with $\delta=0$, we learn that there exists no $M$ that can achieve $p_s(M,\mathcal{N})\ge1-\epsilon$ if 
\begin{align}
P_{D_{\rm max}}(\mathcal{N}) <-\log_2F^R(d) + \log_2(1 - \epsilon).
\end{align}
In other words, $\mathcal{N}\in\mathcal{O}_R$ cannot maintain any maximally entangled state with an average error lower than $\epsilon$ when this inequality is satisfied.
\end{proof}
Proposition~\ref{Result:FEF-P_D} gives a new way to understand FEF: Once a channel's resource preservability is not strong enough compared with a threshold induced by FEF, it is impossible to maintain maximal entanglement to the desired level.
For examples, suppose $\mathcal{N}$ is a free operation of free entanglement (or, similarly, multi-copy nonlocality or multi-copy steerability; note that we need to assume Assumptions~\ref{Postulate:Identity} and~\ref{Postulate:Composition}), then Proposition~\ref{Result:FEF-P_D} implies that $\mathcal{N}$ cannot maintain any maximally entangled state with an average error lower than $\epsilon$ if
$
P_{D_{\rm max}}(\mathcal{N}) < \log_2d(1-\epsilon).
$

\section{Proof of Theorem~\ref{Result:GeneralizedEPLT}}\label{App:Proof-Result:GeneralizedEPLT}
To demonstrate the proof, we will first provide a generalized version of the entanglement preserving local thermalization introduced in Ref.~\cite{Hsieh2020-2}.
After that, Theorem~\ref{Result:GeneralizedEPLT} can be proved by using it and the capacity-like measure $C_{\rm ME,(1)}$ defined in Eq.~\eqref{Eq:Capacity-Like-Measure}.
Before proceeding, we remark that, in this appendix, we will use subscripts to denote the located subsystems for both states and channels; e.g., $\rho_{\rm X}$ and $\mE_{\rm X}$ are a state and a channel in the system ${\rm X}$, respectively.
Also, following Ref.~\cite{Hsieh2020-2}, we assume no energy degeneracy for all subsystems, and all thermal states are assumed to be full-rank.

To start with, we recall from Ref.~\cite{Hsieh2020-2} the following alternative definition of a local thermalization with a given pair of thermal states $(\gamma_{\rm A},\gamma_{\rm B})$ on the subsystem ${\rm A,B}$, respectively [recall from Eq.~\eqref{Eq:LOSR} the definition of local operation and pre-shared randomness (LOSR) channels]:
\begin{adefinition}{\rm~\cite{Hsieh2020-2}}
A bipartite channel $\mE_{\rm AB}$ on ${\rm AB}$ is called a {\em local thermalization} to $(\gamma_{\rm A},\gamma_{\rm B})$ if
\begin{enumerate}
\item $\mE_{\rm AB}$ is a LOSR channel in ${\rm AB}$ bipartition.
\item ${\rm tr}_{\rm B}\circ\mE_{\rm AB}(\rho_{\rm AB}) = \gamma_{\rm A}$ and ${\rm tr}_{\rm A}\circ\mE_{\rm AB}(\rho_{\rm AB}) = \gamma_{\rm B}$ for all states $\rho_{\rm AB}$.
\end{enumerate}
\end{adefinition}
As remarked in Ref.~\cite{Hsieh2020-2}, a local thermalization is {\em local} in the sense that it is a {\em LOSR channel} that can {\em locally thermalize} all inputs.
This definition is equivalent to the one given in the main text, and we refer the reader to Definition 1, Definition 2, and Theorem 2 in Ref.~\cite{Hsieh2020-2} for the detailed reasoning.

Now, consider a tripartite system ${\rm ABC}$ with finite local dimensions $d,d,d^2+1$, respectively.
As mentioned in the main text, we will focus on the bipartition ${\rm A|BC}$, and the subsystem ${\rm C}$ can be treated as an ancillary system possessed by the local agent in ${\rm B}$.
In what follows, $\gamma_{{\rm X}}$ is the given thermal state of the subsystem ${\rm X=A,B,C}$ with temperature $T_{\rm X}$ and Hamiltonian $H_{\rm X}$.
Suppose $\{\ket{n}_{{\rm X}}\}_{n=0}^{D_{\rm X}-1}$ is the orthonormal energy eigenbasis of $H_{\rm X}$ with local dimension $D_{\rm X}$.
Then we define the following channel, which has a schematic interpretation given in Fig.~\ref{Fig:forEq:CCEPLT}.
First, consider a given ordered set of local unitary operators in ${\rm B}$ denoted by ${\bf V}\coloneqq\left\{V^{(n)}_{\rm B}\right\}_{n=0}^{d^2},$ where $V^{(d^2)}_{\rm B}\coloneqq\id_{\rm B}$ is always assumed to be the maximally mixed state, and other elements are arbitrary.
Then, for every $\kappa\in[0,1]$, we define
\begin{align}\label{Eq:CCEPLT}
\mE_{\rm A|BC}^{{\bf V},\kappa}\coloneqq\left(\mathcal{D}_{\rm AB}^\kappa\otimes\mathcal{I}_{\rm C}\right)\circ\left(\mathcal{I}_{\rm A}\otimes\mathcal{L}_{\rm BC}^{\bf V}\right)\circ\left(\mathcal{T}_{\rm AB}\otimes\mathcal{I}_{\rm C}\right),
\end{align}
whose components are defined as follows.
First, in a bipartite system with finite equal local dimension, 
\begin{align}
\mathcal{T}(\cdot)\coloneqq\int (U\otimes U^*)(\cdot)(U\otimes U^*)^\dagger dU
\end{align}
is the {\em $(U\otimes U^*)$-twirling operation}~\cite{Horodecki1999-1,Bennett1996} used to symmetrize the local states.
To encode classical information into the correlation shared by ${\rm A,B}$, we define the following local channel in ${\rm BC}$:
\begin{align}
\mathcal{L}_{\rm BC}^{\bf V}(\cdot)\coloneqq {\rm tr}_{\rm C}\left[\sum_{n=0}^{d^2}\left(V^{(n)}_{\rm B}\otimes\proj{n}_{\rm C}\right)(\cdot) \left(V^{(n)}_{\rm B}\otimes\proj{n}_{\rm C}\right)^\dagger\right]\otimes\gamma_{\rm C}.
\end{align}
Using operator-sum representation~\cite{QCI-book}, one can check that this is indeed a channel.
Finally, 
\begin{align}
\mathcal{D}_{\rm AB}^\kappa(\cdot)\coloneqq(1-\kappa)\neweta_{\rm A}^\kappa\otimes\neweta_{\rm B}^\kappa + \kappa\times(\cdot)
\end{align}
where
\begin{align}
\neweta_{\rm X}^\kappa\coloneqq\gamma_{\rm X} + \frac{\kappa}{1-\kappa}\left(\gamma_{\rm X} - \frac{\id_{\rm X}}{d}\right)
\end{align}
is an alternative thermal state for ${\rm X=A,B}$.
Let $p_{\rm min|AB}\coloneqq\min\{p_{\rm min}(\gamma_{{\rm A}});p_{\rm min}(\gamma_{{\rm B}})\}$, which is the smallest eigenvalue among $\gamma_{\rm A},\gamma_{\rm B}$.
Write $\gamma_{\rm BC}\coloneqq\gamma_{\rm B}\otimes\gamma_{\rm C}$,
then we have the following result:
\begin{atheorem}\label{Prop:LT}
For every $d<\infty$ and ${\bf V}$, we have
\begin{enumerate}
\item$\mE_{\rm A|BC}^{{\bf V},\kappa}$ is a local thermalization to $(\gamma_{\rm A},\gamma_{\rm BC})$ for all $0\le\kappa\le \kappa_*\coloneqq dp_{\rm min|AB}$.
\item$\mE_{\rm A|BC}^{{\bf V},\kappa = \kappa_*}$ is an entanglement preserving local thermalization to $(\gamma_{\rm A},\gamma_{\rm BC})$ for all full-rank $\gamma_{{\rm A}}$ and $\gamma_{{\rm B}}$.
\end{enumerate}
\end{atheorem}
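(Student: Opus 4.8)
The plan is to prove the two assertions separately: part 1 is a verification that $\mE_{\rm A|BC}^{{\bf V},\kappa}$ meets the two defining conditions of a local thermalization (being LOSR in ${\rm A|BC}$ and having the prescribed marginals $\gamma_{\rm A}$ and $\gamma_{\rm BC}$) while remaining a legitimate channel on the whole interval $0\le\kappa\le\kappa_*$; part 2 is the exhibition of a single input whose image under $\mE_{\rm A|BC}^{{\bf V},\kappa_*}$ is entangled across ${\rm A|BC}$. The three building blocks $\mathcal{T}_{\rm AB}$, $\mathcal{L}_{\rm BC}^{\bf V}$, $\mathcal{D}_{\rm AB}^\kappa$ are treated one at a time, and the twirling identities are the computational engine.

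For part 1 I would first argue the LOSR structure factor by factor in the ${\rm A|BC}$ cut: the twirl $\mathcal{T}_{\rm AB}\otimes\mathcal{I}_{\rm C}$ is a shared-randomness mixture of the local unitaries $U_{\rm A}\otimes(U^*_{\rm B}\otimes\id_{\rm C})$; the map $\mathcal{I}_{\rm A}\otimes\mathcal{L}_{\rm BC}^{\bf V}$ acts only on the ${\rm BC}$ side; and $\mathcal{D}_{\rm AB}^\kappa\otimes\mathcal{I}_{\rm C}$ is the convex combination, with weights $1-\kappa$ and $\kappa$, of a product state-preparation channel (preparing $\neweta_{\rm A}^\kappa$ on ${\rm A}$ and $\neweta_{\rm B}^\kappa$ on ${\rm B}$ while retaining ${\rm C}$) and the identity channel; closure of LOSR under composition then makes the whole map LOSR. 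Next I would compute the marginals on an arbitrary input: since $\mathcal{L}_{\rm BC}^{\bf V}$ leaves ${\rm A}$ untouched and the twirl depolarizes ${\rm A}$, the state entering $\mathcal{D}^\kappa$ has ${\rm A}$-marginal $\id_{\rm A}/d$, so ${\rm tr}_{\rm BC}\circ\mE = (1-\kappa)\neweta_{\rm A}^\kappa+\kappa\,\id_{\rm A}/d$, which equals $\gamma_{\rm A}$ by the telescoping identity $(1-\kappa)\neweta_{\rm X}^\kappa+\kappa\,\id_{\rm X}/d=\gamma_{\rm X}$ read off from the definition of $\neweta_{\rm X}^\kappa$. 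For the ${\rm BC}$-marginal I would use the one-sided twirl identity ${\rm tr}_{\rm A}[(\mathcal{T}_{\rm AB}\otimes\mathcal{I}_{\rm C})(\rho)]=\frac{\id_{\rm B}}{d}\otimes{\rm tr}_{\rm AB}(\rho)$, note that $\mathcal{L}_{\rm BC}^{\bf V}$ resets ${\rm C}$ to $\gamma_{\rm C}$ and sends the maximally mixed ${\rm B}$ back to $\id_{\rm B}/d$ through the unitaries $V^{(n)}_{\rm B}$, and apply the same telescoping identity on ${\rm B}$ to obtain $\gamma_{\rm B}\otimes\gamma_{\rm C}=\gamma_{\rm BC}$. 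Complete positivity finally reduces to positivity of the preparation part, $\neweta_{\rm X}^\kappa=\frac{1}{1-\kappa}\big(\gamma_{\rm X}-\frac{\kappa}{d}\id_{\rm X}\big)\ge0$, which holds precisely for $\kappa\le d\,p_{\rm min}(\gamma_{\rm X})$; imposing this for ${\rm X=A,B}$ yields exactly $\kappa_*=d\,p_{\rm min|AB}$ (and full-rankness gives $0<\kappa_*\le1$).

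For part 2 I would feed in $\rho=\proj{\Phi}_{\rm AB}\otimes\proj{d^2}_{\rm C}$, where $\ket{\Phi}=\frac{1}{\sqrt{d}}\sum_i\ket{ii}$ is the canonical maximally entangled state (a fixed point of $\mathcal{T}_{\rm AB}$ since it is $U\otimes U^*$ invariant) and $\ket{d^2}_{\rm C}$ is the energy level that triggers the guaranteed identity $V^{(d^2)}_{\rm B}=\id_{\rm B}$. Tracing through, the output is $\big[(1-\kappa_*)\,\neweta_{\rm A}^{\kappa_*}\otimes\neweta_{\rm B}^{\kappa_*}+\kappa_*\proj{\Phi}\big]\otimes\gamma_{\rm C}$, so ${\rm A|BC}$ entanglement is equivalent to ${\rm A|B}$ entanglement of the bracketed state $\sigma_{\rm AB}$. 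I would detect the latter by a non-positive partial transpose (NPT): using $(\proj{\Phi})^{T_{\rm B}}=\frac{1}{d}\mathrm{SWAP}$ and the fact that $\neweta_{\rm A}^{\kappa_*},\neweta_{\rm B}^{\kappa_*}$ are diagonal in the energy basis, $\sigma_{\rm AB}^{T_{\rm B}}$ splits into $2\times2$ blocks on $\mathrm{span}\{\ket{ij},\ket{ji}\}$ with determinant $(1-\kappa_*)^2(\neweta_{\rm A}^{\kappa_*})_{ii}(\neweta_{\rm A}^{\kappa_*})_{jj}(\neweta_{\rm B}^{\kappa_*})_{ii}(\neweta_{\rm B}^{\kappa_*})_{jj}-\kappa_*^2/d^2$. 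At $\kappa=\kappa_*$ one of $\neweta_{\rm A}^{\kappa_*},\neweta_{\rm B}^{\kappa_*}$ has a vanishing diagonal entry exactly at the index realizing $p_{\rm min|AB}$; taking that index for $i$ makes the determinant equal $-\kappa_*^2/d^2<0$, so $\sigma_{\rm AB}$ is NPT and hence entangled for every full-rank $(\gamma_{\rm A},\gamma_{\rm B})$. This recovers, and through the extra ${\rm C}$-register and the unitaries $\bf V$ generalizes, the entanglement-preserving local thermalization of Ref.~\cite{Hsieh2020-2}.

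The main obstacle I anticipate is part 2, and specifically arranging that entanglement survives at the critical endpoint $\kappa_*$ rather than only for $\kappa$ close to $1$. The structural observation that unlocks this is that $\kappa_*$ is at once the largest $\kappa$ keeping $\mathcal{D}^\kappa$ completely positive and the value at which some $\neweta_{\rm X}^{\kappa}$ first becomes rank-deficient, and it is precisely this rank deficiency that annihilates the positive diagonal contribution in the $2\times2$ partial-transpose block and leaves the strictly negative determinant $-\kappa_*^2/d^2$. A secondary delicate point is the universal quantifier over $\bf V$: routing the demonstration through the level $d^2$, where $V^{(d^2)}_{\rm B}=\id_{\rm B}$ is guaranteed, keeps $\sigma_{\rm AB}$ of the form ``diagonal plus $\proj{\Phi}$'' in the energy basis so the clean block argument applies for every admissible $\bf V$; a generic $V^{(0)}$ would make $\neweta_{\rm B}^{\kappa_*}$ non-diagonal and force a messier eigenvalue estimate, which this choice sidesteps.
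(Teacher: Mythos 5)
Your proposal is correct, and it is essentially a self-contained reconstruction of what the paper outsources to its companion reference. The paper's own proof is short and citation-based: it invokes Lemma~1 of Ref.~\cite{Hsieh2020-2} to get the marginals $\gamma_{\rm A}$ and $\gamma_{\rm BC}$, observes (as you do) that $0\le\kappa\le\kappa_*$ is exactly the condition making $\neweta_{\rm A}^\kappa,\neweta_{\rm B}^\kappa$ legitimate states so that $\mE_{\rm A|BC}^{{\bf V},\kappa}$ is a valid channel and hence LOSR by construction, and then invokes Theorem~1 of Ref.~\cite{Hsieh2020-2} for part~2, arguing that on the ${\rm AB}$ subsystem the output $\mE_{\rm A|BC}^{{\bf V},\kappa_*}(\proj{\Psi_d^+}_{\rm AB}\otimes\proj{d^2}_{\rm C})$ coincides with the entanglement-preserving local thermalization constructed there. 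You instead prove both substantive facts from scratch: the marginal computation via the one-sided twirl identities and the telescoping relation $(1-\kappa)\neweta_{\rm X}^\kappa+\kappa\,\id_{\rm X}/d=\gamma_{\rm X}$ (all steps check out, including the observation that $\mathcal{L}_{\rm BC}^{\bf V}$ maps $\frac{\id_{\rm B}}{d}\otimes\rho_{\rm C}$ to $\frac{\id_{\rm B}}{d}\otimes\gamma_{\rm C}$ for every $\rho_{\rm C}$, which is what makes the claim hold for all ${\bf V}$), and the entanglement claim via an explicit NPT witness: the $2\times2$ partial-transpose blocks on ${\rm span}\{\ket{ij},\ket{ji}\}$ whose determinant collapses to $-\kappa_*^2/d^2<0$ precisely because $\neweta_{\rm X}^{\kappa_*}$ becomes rank-deficient at the index realizing $p_{\rm min|AB}$. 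That last observation is the real added value of your write-up: it makes visible \emph{why} entanglement survives exactly at the critical endpoint $\kappa_*$, a mechanism the paper's proof hides inside the citation. The only point worth adding is the degenerate case $\kappa_*=1$ (both $\gamma_{\rm A},\gamma_{\rm B}$ maximally mixed), where the defining formula for $\neweta_{\rm X}^{\kappa}$ has a removable singularity; there $\mathcal{D}_{\rm AB}^{\kappa_*}$ reduces to the identity and the output $\proj{\Psi_d^+}\otimes\gamma_{\rm C}$ is trivially entangled, so the conclusion is unaffected --- but a careful version of either proof should say so.
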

\begin{proof}
Following Lemma 1 in Ref.~\cite{Hsieh2020-2}, we conclude that the local output of $\mE_{\rm A|BC}^{{\bf V},\kappa}$ in ${\rm A}$ and ${\rm BC}$ will always be $\gamma_{{\rm A}}$ and $\gamma_{{\rm BC}}$, respectively. 
Also, $0\le\kappa\le \kappa_*$ guarantees that both $\neweta_{{\rm A}}^\kappa,\neweta_{{\rm B}}^\kappa$ are legal quantum states. 
This implies that $\mE_{\rm A|BC}^{{\bf V},\kappa}$ is a legal channel, which is by definition a LOSR channel in the ${\rm A|BC}$ bipartition.
Hence, for all ${\bf V}$, $\mE_{\rm A|BC}^{{\bf V},\kappa}$ is a local thermalization when $\kappa$ is in the desired range.
Finally, when $\gamma_{{\rm A}}$ and $\gamma_{{\rm B}}$ are both full-rank (and hence non-pure), Theorem 1 in Ref.~\cite{Hsieh2020-2} implies that $\mE_{\rm A|BC}^{{\bf V},\kappa = \kappa_*}(\proj{\Psi_d^+}_{\rm AB}\otimes\proj{d^2}_{\rm C})$ will be entangled in the ${\rm A|BC}$ bipartition since in the subsystem ${\rm AB}$ it is identical to the entanglement preserving local thermalization constructed in Ref.~\cite{Hsieh2020-2}.
This proves the desired claim.
\end{proof}

\begin{center}
\begin{figure}
\scalebox{0.8}{\includegraphics{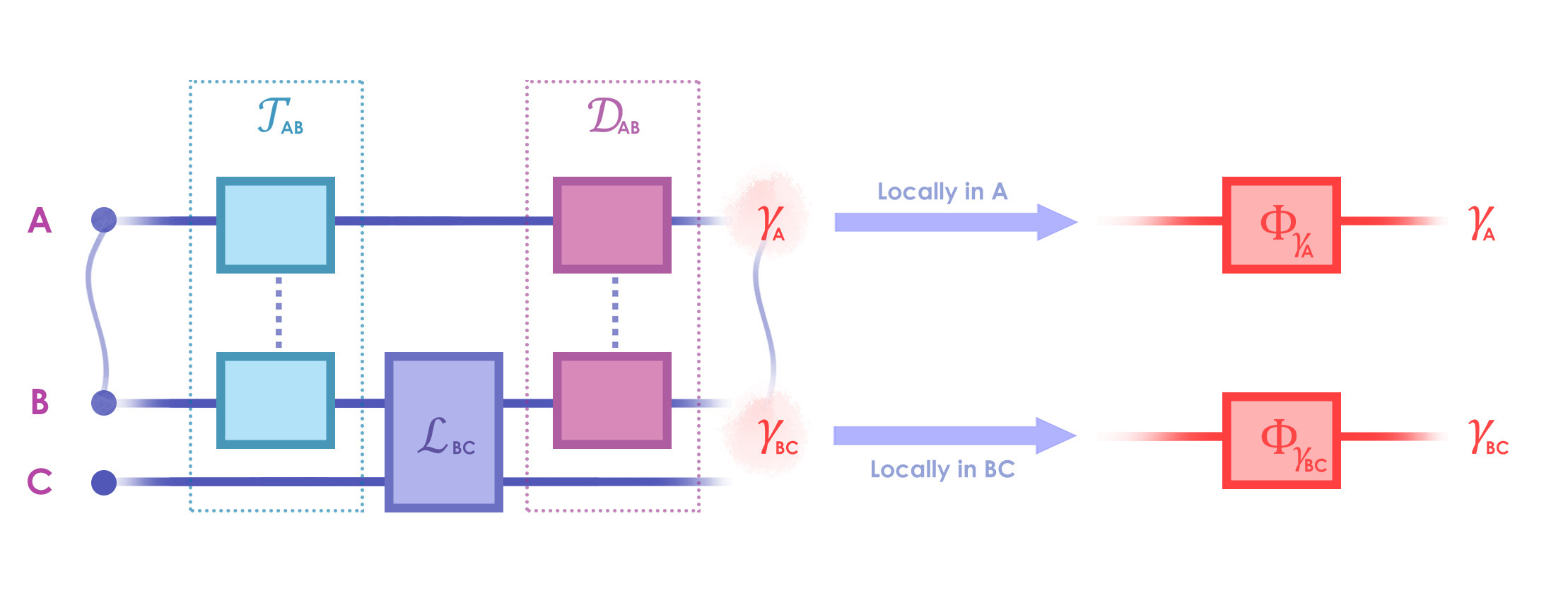} }
\caption{
Schematic interpretation of Eq.~\eqref{Eq:CCEPLT} and Theorem~\ref{Prop:LT}.
We use $\kappa = \kappa_*$ as an example.
As shown in the figure, the channel $\mE_{\rm A|BC}^{{\bf V},\kappa = \kappa_*}$ can be decomposed into three steps. 
First, a $(U\otimes U^*)$-twirling operation $\mathcal{T}_{\rm AB}$ is applied, which only requires local unitary operations and shared randomness (the thick dashed line).
After that, the channel $\mathcal{L}_{\rm BC} = \mathcal{L}_{\rm BC}^{\bf V}$ is applied locally in the subsystem ${\rm BC}$.
Finally, using shared randomness again, the channel $\mathcal{D}_{\rm AB} = \mathcal{D}_{\rm AB}^{\kappa_*}$ can be achieved.
The resulting global channel has the property that, when one only observes it locally in ${\rm A}$ and ${\rm BC}$, its is identical to a full thermalization channel; namely, a state preparation channel $\Phi_{\gamma_{\rm X}}(\cdot)\coloneqq\gamma_{\rm X}{\rm tr}(\cdot)$ of the given thermal state in the subsystem ${\rm X}$, where ${\rm X=A}$ or ${\rm BC}$.
Still, as shown in this section, this channel is able to preserve entanglement in the ${\rm AB}$ bipartition, and it is also possible to transmit classical information by encoding the information into the global correlation shared by ${\rm A}$ and ${\rm B}$.
}
\label{Fig:forEq:CCEPLT} 
\end{figure}
\end{center}

We leave a schematic interpretation in Fig.~\ref{Fig:forEq:CCEPLT}, and now we proceed to prove Theorem~\ref{Result:GeneralizedEPLT} by using Theorem~\ref{Prop:LT}:
\begin{proof}
Note that by assuming each subsystem Hamiltonian $H_{{\rm X}}$ to be non-degenerate and finite-energy, the corresponding thermal state $\gamma_{{\rm X}}$ is non-pure and full-rank if and only if $T_{{\rm X}}>0$~\cite{Hsieh2020-2}.
Hence, for every $T_{{\rm X}}>0$, we conclude that $\mE_{\rm A|BC}^{{\bf V},\kappa_*}$ is an entanglement preserving local thermalization to $(\gamma_{\rm A},\gamma_{\rm BC})$ according to Theorem~\ref{Prop:LT}.
Now we note that the one-shot classical capacity of $\mE_{\rm A|BC}^{{\bf V},\kappa_*}$ can be estimated by the ability of $\mathcal{D}_{\rm AB}^{\kappa_*}$ to maintain maximally entangled bases (see Sec.~\ref{Sec:MaintainingMaxEnt}).
More precisely, for a given $M\le d^2$ and sets of orthonormal maximally entangled states (in the ${\rm A|B}$ bipartition) $\left\{\ket{\Phi_m}\right\}_{m=0}^{M-1},\left\{\ket{\Phi'_m}\right\}_{m=0}^{M-1}$, the corresponding success probability of $\mathcal{D}_{\rm AB}^{\kappa_*}$ reads $\frac{1}{M}\sum_{m=0}^{M-1}\bra{\Phi'_m}\mathcal{D}_{\rm AB}^{\kappa_*}(\proj{\Phi_m})\ket{\Phi'_m}$.
Being maximally entangled, there exist $M$ unitary operators in ${\rm B}$, $\{U^{(m)}_{\rm B}\}_{m=0}^{M-1}$, such that $\ket{\Phi_m} = \left(\id_{\rm A}\otimes U^{(m)}_{\rm B}\right)\ket{\Psi_d^+}$, where $\ket{\Psi_d^+}\coloneqq\sum_{i=0}^{d-1}\frac{1}{\sqrt{d}}\ket{ii}$.
Define the set ${\bf U} = \{{U}_{\rm B}^{(m)}\}_{m=0}^{d^2}$ with ${U}^{(m)}_{\rm B} = \id_{\rm B}$ for all $m\ge M$.
Now we choose the encoding $\{\rho_m\}_{m=0}^{M-1}$ as
\begin{align}
\rho_{m}\coloneqq\proj{\Psi_d^+}_{\rm AB}\otimes\proj{m}_{\rm C}
\end{align} 
and decoding POVM $\left\{E_m\right\}_{m=0}^{M-1}$ as
\begin{align}
&E_m\coloneqq\proj{\Phi'_m}_{\rm AB}\otimes\id_{\rm C}\quad\forall\;m<M-1;\\
&E_{M-1}\coloneqq\id_{\rm ABC} - \sum_{m=0}^{M-2}E_m = \left(\id_{\rm AB} - \sum_{m=0}^{M-2}\proj{\Phi'_m}_{\rm AB}\right)\otimes\id_{\rm C}.
\end{align}
Then we have
\begin{align}
\frac{1}{M}\sum_{m=0}^{M-1}\bra{\Phi'_m}\mathcal{D}_{\rm AB}^{\kappa_*}(\proj{\Phi_m})\ket{\Phi'_m} &=\frac{1}{M}\sum_{m=0}^{M-1}{\rm tr}\left[\left(\proj{\Phi'_m}_{\rm AB}\otimes\id_{\rm C}\right)\mE_{\rm A|BC}^{{\bf U},\kappa_*}(\rho_m)\right]\nonumber\\
&\le\frac{1}{M}\sum_{m=0}^{M-1}{\rm tr}\left[E_m\mE_{\rm A|BC}^{{\bf U},\kappa_*}(\rho_m)\right],
\end{align}
where the inequality is due to the fact that $\proj{\Phi'_{M-1}}_{\rm AB}\otimes\id_{\rm C}\le E_{M-1}$.
This means [see Eq.~\eqref{Eq:Capacity-Like-Measure} for the definition of $C_{\rm ME,(1)}^\epsilon$]
\begin{align}\label{Eq:Computation-LowerBoundEPLT}
\sup_{{\bf V}}C_{\rm (1)}^\epsilon\left(\mE_{\rm A|BC}^{{\bf V},\kappa_*}\right)\ge C_{\rm ME,(1)}^\epsilon\left(\mathcal{D}_{\rm AB}^{\kappa_*}\right),
\end{align}
where the optimization is taken over every possible ordered set of $d^2+1$ unitary operators ${\bf V}\coloneqq\left\{V^{(n)}_{\rm B}\right\}_{n=0}^{d^2}$ with $V^{(d^2)}_{\rm B}\coloneqq\id_{\rm B}$.
On the other hand, we have
\begin{align}
\frac{1}{M}\sum_{m=0}^{M-1}\bra{\Phi_m}\mathcal{D}_{\rm AB}^{\kappa_*}(\proj{\Phi_m})\ket{\Phi_m} = \frac{1-\kappa_*}{M}\sum_{m=0}^{M-1}\bra{\Phi_m}\left(\neweta_{\rm A}^{\kappa_*}\otimes\neweta_{\rm B}^{\kappa_*}\right)\ket{\Phi_m} + \kappa_*.
\end{align}
Note that $\mathcal{D}_{\rm AB}^{\kappa_*}$ has output system dimension $d^2$.
Hence, when $M=d^2$, the success probability is given by $\frac{1-\kappa_*}{d^2} + \kappa_*.$
This means that $C_{\rm ME, (1)}^\epsilon\left(\mathcal{D}_{\rm AB}^{\kappa_*}\right) = \log_2d^2$ if $\frac{1-\kappa_*}{d^2} + \kappa_*\ge1-\epsilon$, or, equivalently,
\begin{align}\label{Eq:ComputationCondition-epsilon}
\epsilon\ge\left(1 - \frac{1}{d^2}\right)(1-\kappa_*).
\end{align}
Consequently, by using Eq.~\eqref{Eq:Computation-LowerBoundEPLT}, we learn that when Eq.~\eqref{Eq:ComputationCondition-epsilon} holds, there must exists an ${\bf V}$ such that $C_{\rm (1)}^\epsilon\left(\mE_{\rm A|BC}^{{\bf V},\kappa_*}\right)\ge\log_2d^2$.
The result follows.
\end{proof}

\end{document}